\newtheorem{theorem}{Theorem}[section]
\newtheorem{lemma}{Lemma}[section]
\newtheorem{follow}{Corollary}[section]
\newtheorem{pr}{Proposition}[section]
\theoremstyle{definition}
\newcommand{\bel}{\begin{equation} \label}
\newcommand{\ee}{\end{equation}}
\newcommand{\re}{{\mathbb R}}
\def\beq{\begin{equation}}
\def\eeq{\end{equation}}
\begin{document}
\begin{center}
{\Large \bf Spectral Properties of a Magnetic Quantum Hamiltonian
on a Strip}\\
\end{center}

\medskip

\begin{center}
{\sc Philippe Briet, Georgi  Raikov, Eric Soccorsi}\\
\end{center}

\medskip

\begin{center}
{\em \large Dedicated to the memory of Volodya Geyler (1943 -
2007)}
\end{center}

\begin{abstract}
We consider a 2D Schr\"odinger operator $H_0$ with constant
magnetic field,  on a strip of finite width. The spectrum of $H_0$
is absolutely continuous, and contains a discrete set of
thresholds. We perturb $H_0$ by an electric potential $V$ which
decays in a suitable sense at infinity, and study the spectral
properties of the perturbed operator $H = H_0 + V$. First, we
establish a Mourre estimate, and as a corollary prove that the
singular continuous spectrum of $H$ is empty, and any compact
subset of the complement of the threshold set may contain at most
a finite set of eigenvalues of $H$, each of them having a finite
multiplicity. Next, we introduce the Krein spectral shift function
(SSF) for the operator pair $(H,H_0)$. We show that this SSF is
bounded on any compact subset of the complement of the threshold
set, and is continuous away from the threshold set and the
eigenvalues of $H$. The main results of the article concern the
asymptotic behaviour of the SSF at the thresholds, which is
described in terms of the SSF for a pair of
effective Hamiltonians. \\

{\bf  AMS 2000 Mathematics Subject Classification:} 35J10, 81Q10,
35P20\\

{\bf  Keywords:}
Schr\"odinger operators, magnetic field, Mourre estimates, spectral shift function, effective Hamiltonians \\

\end{abstract}

\section{Introduction}
\setcounter{equation}{0}
    In the present article we consider a 2D Schr\"odinger
    operator $H_0$ with constant magnetic field $b>0$ defined on a
    strip ${\mathcal S}_L$ of width $2L$. The spectrum of $H_0$ is absolutely continuous,  equals
    the interval $[{\mathcal E}_1, \infty)$ with ${\mathcal E}_1 > 0$, and contains a countable set
    of thresholds ${\mathcal Z}$. This model is related to
    some aspects of the quantum Hall effect (see e.g. \cite{hall}, \cite{GS}). We
    perturb $H_0$ by an electric potential $V$ which decays in a
    suitable sense at infinity, and study some basic spectral
    properties of the perturbed operator $H$. First we establish
    a Mourre estimate (see \cite{Mo}) with an appropriate conjugate
    operator, and as a consequence we show that the singular
    continuous spectrum of $H$ is empty, and any compact subset of
    $\re\setminus{\mathcal Z}$ may contain at most a finite number
    of eigenvalues of $H$, each of them having a finite
    multiplicity. Similar Mourre estimates for other magnetic Hamiltonians have been obtained in
    \cite{dBP} and \cite[Chapter 3]{GL}.\\ Further, we introduce the Krein spectral shift
    function (SSF) for the operator pair $(H,H_0)$ and prove that it is
    bounded on every compact subset of $\re\setminus{\mathcal Z}$,
    and is continuous on $\re\setminus({\mathcal Z} \cup
    \sigma_p(H))$ where $\sigma_p(H)$ is the set of the eigenvalues
    of $H$. The main results of the article concern the asymptotic
    behaviour of the SSF near the thresholds of the spectrum of
    $H_0$. We show that this asymptotic behaviour is similar to
    the asymptotics near the origin of the SSF for a pair of
    effective Hamiltonians which are 1D Schr\"odinger operators.
    As a corollary we show that if the decay rate $\alpha$ of
$V$ is on the interval $(1,2)$, then the SSF has a
    singularity at each threshold, and describe explicitly the
    leading term of this singularity; if $\alpha > 2$,
    then the SSF remains bounded at the thresholds. The
    threshold
    behaviour of the SSF for a pair of 3D Schr\"odinger operators with constant
    magnetic fields has been investigated in
    \cite{FR} (see also \cite{R1}). In that case the thresholds coincide
    with the Landau levels, and the
    threshold singularities of the SSF have different nature,
    related to the spectral properties of compact Berezin-Toeplitz operators. \\
    The paper is organized as follows. In Section 2 we introduce
    some basic notations, describe the operators $H_0$ and $H$,
formulate our main results, and briefly comment on them.
    Section 3 contains the proof of our results related to the Mourre
    estimates, while the proofs of the results concerning the SSF
    can be found in Section 4.

\section{Main Results}
\setcounter{equation}{0}
    {\bf 2.1.} In this subsection we introduce some
basic notations used
 throughout the section.\\Let $X_1$, $X_2$ be two
 Hilbert spaces.\footnote{All the Hilbert spaces considered in the article are supposed to be separable.}
  We denote by ${\mathcal B}(X_1,X_2)$
 (resp., by $S_{\infty}(X_1,X_2))$ the class of bounded
 (resp., compact) operators $T : X_1 \to X_2$. Further,
 we denote by $S_p(X_1, X_2)$, $p \in$ $[1,\infty)$, the
Schatten-von Neumann class of compact operators $T : X_1 \to X_2$
for which the norm $\|T\|_p$ : = $\left({\rm
Tr}\,|T|^p\right)^{1/p}$ is finite (see e.g. \cite{S}). In this
paper we will use only the trace class $S_1$ and the
Hilbert-Schmidt class $S_2$. If $X_1 = X_2 = X$ we write
${\mathcal B}(X)$
 or $S_p(X)$ instead of ${\mathcal B}(X,X)$
 or $S_p(X,X)$, $p \in [1,\infty]$. Also, if the indication of the Hilbert
 space(s) where the corresponding operators act is irrelevant, we
 omit it in the notations of the classes ${\mathcal B}$ and $S_p$, $p
 \in [1,\infty]$. \\
  Let $T=T^*$.
We denote by ${\mathbb P}_{\mathcal O}(T)$ the spectral projection
of $T$ associated with  the Borel set ${\mathcal O} \subset
\re$.\\
Finally, if $T \in {\mathcal B}(X)$, we define the self-adjoint
operators ${\rm Re}\;T : = \frac{1}{2} (T+ T^*)$ and ${\rm Im}\;T
: = \frac{1}{2i}
(T- T^*)$.\\

{\bf 2.2.} In this subsection we introduce the operators $H_0$ and
$H$, and summarize some of their spectral properties which will
play a crucial role in the sequel.\\
 For $L>0$ put $I_L = (-L,L)$, ${\mathcal S} =
I_L \times \re$. Let
$$
H_0 : = -\frac{\partial^2}{\partial x^2} + \left(-i\frac{\partial}{\partial y}
- bx\right)^2
$$
be the 2D Schr\"odinger operator with constant scalar magnetic
field $b>0$, defined on $\{u \in {\rm H}^2({\mathcal S}_L) \; | \;
u_{|\partial {\mathcal S}_L} = 0\}$ where ${\rm H}^2({\mathcal
S}_L)$ denotes the second-order Sobolev space on ${\mathcal S}_L$.
Then we have
$$
{\mathcal F} H_0 {\mathcal F}^* = \int_{\re}^\oplus \hat{H}(k) dk,
$$
where  ${\mathcal F}$ is the partial Fourier transform with
respect to $y$, i.e.
$$
({\mathcal F}u)(x,k): = \frac{1}{\sqrt{2\pi}} \int_{\re} e^{-iyk}
u(x,y)dy, \quad (x,k) \in {\mathcal S}_L,
$$
and
$$
\hat{H}(k) : = -\frac{d^2}{dx^2} + (bx-k)^2, \quad k \in \re,
$$
is the operator defined on $D(\hat{H}): = \left\{w \in {\rm
H}^2(I_L) | w(-L) = w(L) = 0\right\}$. In what follows, we will
consider $D(\hat{H})$ as a Hilbert space equipped with the
standard scalar product of ${\rm H}^2(I_L)$. \\
 The spectrum $\sigma(\hat{H}(k))$ of the operator $\hat{H}(k)$, $k
\in \re$, is discrete and simple. Let $\left\{E_j(k
)\right\}_{j=1}^{\infty}$ be the increasing sequence of the
eigenvalues of $\hat{H}(k)$, which are even real analytic
functions of $k \in \re$ (see \cite{K}). The minimax
principle easily implies
    \bel{a2} E_j(k) = k^2 (1 +
o(1)), \quad k \to \pm \infty.
    \ee
By \cite[Theorem 2]{GS} we have
    \bel{m21}
k E_j'(k) > 0, \quad k \neq 0,
    \ee
    \bel{a5}
E_j(k) = {\mathcal E}_j + \mu_j k^2 + O(k^4), \quad k \to
    0,
    \ee
    with
    \bel{F1}
    {\mathcal E}_j : = E_j(0) > (2j-1)b, \quad \mu_j : = \frac{1}{2}
    E_j''(0) > 0.
    \ee
Thus $\sigma(H_0) = \sigma_{\rm ac}(H_0) = [{\mathcal E}_1,
\infty)$, and ${\mathcal E}_j$, $j \in {\mathbb N} : =
\{1,2,\ldots\}$, are thresholds in $\sigma(H_0)$. Set ${\mathcal
Z}: =
\bigcup_{j \in {\mathbb N}}\left\{{\mathcal E}_j\right\}$. \\
Let $V: S_L \to \re$ be an electric potential such that the
operator $|V|^{1/2} H_0^{-1/2}$ is compact.  We define the
perturbed operator $H: = H_0 + V$ as a sum in the sense of the
quadratic forms. Then we have $\sigma_{\rm ess}(H) = \sigma_{\rm
ess}(H_0) = \sigma(H_0) = [{\mathcal E}_1,\infty)$.\\

{\bf 2.3.} In this subsection we formulate our result concerning
the absence of singular continuous spectrum of the operator $H$, and some
generic properties of its eigenvalues.
\begin{theorem} \label{t1}
(i) Assume
    \bel{f4a}
VH_0^{-1} \in S_{\infty},
    \ee
    \bel{f4}
 H_0^{-1} y \frac{\partial V}{\partial
y}H_0^{-1} \in S_{\infty}.
    \ee
Then  any compact subinterval of $\re \setminus {\mathcal Z}$ may
contain at most a finite number of eigenvalues, each of them
having a finite multiplicity.\\
(ii) Suppose moreover
    \bel{f5a}
    H_0^{-1/2} y \frac{\partial V}{\partial y}H_0^{-1} \in {\mathcal B},
    \ee
    \bel{f5}
H_0^{-1} y^2 \frac{\partial^2 V}{\partial y^2}H_0^{-1} \in
{\mathcal B}.
    \ee
Then $\sigma_{\rm sc}(H) = \emptyset$.
\end{theorem}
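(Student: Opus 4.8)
The plan is to prove both parts via the Mourre machinery, using as conjugate operator the generator of dilations in the $y$-variable, i.e.\ (the self-adjoint closure of) $A := \frac12\left(y D_y + D_y y\right)$ with $D_y = -i\partial_y$. The heuristic is that along the dispersion branches $E_j(k)$ the relation \eqref{m21} gives $kE_j'(k)>0$ away from $k=0$, so the commutator $i[H_0,A]$, which in the fibered picture is essentially multiplication by $2kE_j'(k)$ on the $j$-th band, is strictly positive on any energy window that stays away from $\mathcal Z$. Concretely, first I would check that $H_0$ is of class $C^2(A)$ (or at least $C^{1,1}(A)$): the first commutator $i[H_0,A]$ is computed in the fiber and, by \eqref{a2} and the analyticity of $E_j$, is $H_0$-bounded; the hypotheses \eqref{f4a}, \eqref{f5a} are exactly what is needed so that $i[V,A] = -y\,\partial_y V$ (again suitably symmetrized) is $H_0$-bounded after sandwiching, hence $H$ is also $C^1(A)$, and \eqref{f4}, \eqref{f5} upgrade this to $C^2(A)$ for the double commutator $[[V,A],A]$, which involves $y^2\partial_y^2 V$.

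Second, I would establish the strict Mourre estimate for $H_0$: for every $\lambda\in\re\setminus\mathcal Z$ there exist an open interval $J\ni\lambda$, a constant $c>0$ and a compact $K$ such that
\[
{\mathbb P}_J(H_0)\, i[H_0,A]\, {\mathbb P}_J(H_0) \;\ge\; c\, {\mathbb P}_J(H_0) + K .
\]
This reduces, via the direct integral decomposition, to the statement that $2kE_j'(k) \ge c > 0$ uniformly on the (compact) set of $k$ for which some $E_j(k)\in \overline J$; the only point where this could fail is $k=0$, which corresponds precisely to the thresholds $E_j(0)=\mathcal E_j\in\mathcal Z$, so excluding $\mathcal Z$ is exactly what makes the branch velocities bounded below. (One may in fact take $K=0$ here, or absorb the finitely many near-threshold contributions into the compact remainder.) Then, since $VH_0^{-1}\in S_\infty$ by \eqref{f4a}, $V$ is $H_0$-compact, so ${\mathbb P}_J(H)-{\mathbb P}_J(H_0)$ and the difference of the commutator terms are compact on the relevant subspace, and the Mourre estimate transfers from $H_0$ to $H$ (with a possibly smaller interval and an adjusted compact error) by the standard perturbation argument.

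Third, part (i) follows from the virial theorem together with the Mourre estimate: on any compact $\mathcal K\subset\re\setminus\mathcal Z$ one covers $\mathcal K$ by finitely many Mourre intervals, and on each the number of eigenvalues (counting multiplicity) is finite because eigenvalues can accumulate only at points where the Mourre constant degenerates, i.e.\ only at $\mathcal Z$. Part (ii) is the conclusion of the Mourre theory in its stronger form: under the $C^{1,1}(A)$ — here $C^2(A)$, guaranteed by \eqref{f5a}–\eqref{f5} — regularity of $H$ with respect to $A$, the Mourre estimate on $\re\setminus\mathcal Z$ yields a limiting absorption principle on compacts of $\re\setminus(\mathcal Z\cup\sigma_p(H))$, hence absolute continuity of the spectrum there; since $\mathcal Z$ is countable (in fact discrete) and $\sigma_p(H)$ is locally finite in $\re\setminus\mathcal Z$ by (i), their union carries no singular continuous part, and therefore $\sigma_{\rm sc}(H)=\emptyset$.

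The main obstacle, and the place where real work is needed, is the regularity bookkeeping: verifying rigorously that $H_0\in C^2(A)$ and that the sandwiched commutators of $V$ are bounded/compact under exactly the stated hypotheses \eqref{f4a}–\eqref{f5}. The subtlety is that $A$ does not commute with the fiber decomposition in a naive way (it differentiates in $k$), so one must control $[H_0,A]$ and $[[H_0,A],A]$ using the analyticity and the quadratic growth \eqref{a2}, \eqref{a5} of the band functions $E_j$ and their derivatives — and check that the commutator forms are indeed the operators claimed, rather than merely densely defined forms. Everything else is a relatively standard application of the abstract Mourre theorem (e.g.\ in the Amrein–Boutet de Monvel–Georgescu formulation) once these commutator estimates are in place.
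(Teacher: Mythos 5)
Your proposal is correct and follows essentially the same route as the paper: the same conjugate operator $A=-\tfrac{i}{2}\left(y\partial_y+\partial_y y\right)$, the fibered positivity $kE_j'(k)>0$ away from ${\mathcal Z}$ combined with the Feynman--Hellmann formula to get the Mourre estimate for $H_0$, compactness of the $V$-terms under \eqref{f4a}--\eqref{f4} to transfer it to $H$, and the virial/abstract Mourre machinery with the second commutator controlled via \eqref{f5a}--\eqref{f5} for part (ii). The only step the paper makes explicit that you pass over silently is the band-separation lemma (choosing $\delta$ so that distinct bands do not meet the energy window simultaneously), which is what kills the off-diagonal terms $p_r(k)(k-bx)p_s(k)$ in the sandwiched commutator before Feynman--Hellmann can be applied; this is an implementation detail within the same approach.
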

The proof of Theorem \ref{t1} is contained in Section 3. \\
{\em Remark}: Let $U: {\mathcal S}_L \to [0,\infty)$, and let
$\Delta_D$ be the Dirichlet Laplacian on ${\mathcal S}_L$. The
Sobolev embedding theorems imply that the inclusion $U^{1/2}
(-\Delta_D)^{-1/2}$ $\in$ ${\mathcal B}$ (resp., $U^{1/2}
(-\Delta_D)^{-1/2}$ $\in$ $S_{\infty}$) is ensured by $U \in
L^q({\mathcal S}_L) + L^{\infty}({\mathcal S}_L)$ (resp., $U \in
L^q({\mathcal S}_L) + L^{\infty}_{\varepsilon}({\mathcal S}_L)$,
i.e. for each $\varepsilon > 0$ we have $U = U_1 + U_2$ with $U_1
\in L^q({\mathcal S}_L)$, $U_2 \in L^{\infty}({\mathcal S}_L)$,
$\|U_2\|_{L^{\infty}({\mathcal S}_L)} \leq \varepsilon$), $q
> 1$. Similarly,  the condition $U \Delta_D^{-1} \in {\mathcal B}$
(resp., $U \Delta_D^{-1} \in S_{\infty}$) follows from $U \in
L^2({\mathcal S}_L) + L^{\infty}({\mathcal S}_L)$ (resp., $U \in
L^2({\mathcal S}_L) + L^{\infty}_{\varepsilon}({\mathcal S}_L)$).
On the other hand, by the diamagnetic inequality  (see e.g.
\cite[Chapter 2]{S}), we have   $\|U^{\gamma} H_0^{-\gamma}\| \leq
\|U^{\gamma} (-\Delta_D)^{-\gamma}\|$, $\gamma > 0$, and,
moreover,  $U^{\gamma} (-\Delta_D)^{-\gamma} \in S_{\infty}$
entails $U^{\gamma} H_0^{-\gamma} \in S_{\infty}$. These facts
could be used in order to deduce sufficient
conditions which guarantee the validity of the hypotheses of Theorem \ref{t1}.\\

    {\bf 2.4.} This subsection contains our results on the
    threshold behaviour of the
spectral shift function for the operator pair $(H,H_0)$.  Let us
recall the abstract setting for the SSF.
   Let ${\mathcal H}_0$ and ${\mathcal H}$ be two
lower-bounded self-adjoint operators acting in the same Hilbert
space. Assume  that for some $\gamma>0$, and $E_0 <
\inf{\sigma({\cal H}_0) \cup \sigma({\cal H})}$, we have
\begin{equation} \label{10}
({\cal H} - E_0)^{-\gamma} - ({\cal H}_0 - E_0)^{-\gamma} \in S_1.
\end{equation}
Then there exists a unique $\xi(\cdot ; {\cal H},{\cal H}_0) \in
L^1(\re; \langle E \rangle^{-\gamma - 1} dE)$ which vanishes
identically on $(-\infty, E_0)$ such that {\em the Lifshits-Krein
formula}
    \bel{10LK}
{\rm Tr} (f({\cal H}) - f({\cal H}_0)) = \int_{\re} \xi(E; {\cal
H}, {\cal H}_0) f'(E) dE
    \ee
holds for each $f \in C_0^{\infty}(\re)$ (see \cite{lif},
\cite{kr}, \cite[Chapter 8]{Y2}). The function $\xi(.; {\cal H}, {\cal H}_0)$ is called
the spectral shift function (SSF) for the pair of the operators $({\cal H}, {\cal H}_0)$. If
$E <  \inf \sigma({\cal H}_0)$, then the spectrum of ${\cal H}$
below $E$ could be at most discrete, and for almost every $E <
\inf \sigma({\mathcal H}_0)$ we have
\begin{equation} \label{101a}
\xi(E; {\cal H}, {\cal H}_0) = - N(E; {\cal H})
\end{equation}
where   $ N(E; {\cal H}) :=   {\rm rank}\,{\mathbb
P}_{(-\infty,E)}({\cal H})$. On the other hand,  for almost every
$E \in \sigma_{\rm ac}({\cal H}_0)$, the SSF $\xi(E; {\cal H},
{\cal H}_0)$ is related to the scattering determinant ${\rm
det}\,S(E; {\cal H}, {\cal H}_0)$ for the pair $({\cal H}, {\cal
H}_0)$ by {\em the Birman-Krein} formula
    \bel{rr9}
{\rm det}\;S(E; {\cal H}, {\cal H}_0) = e^{-2\pi i \xi(E; {\cal
H}, {\cal H}_0)}
    \ee
(see \cite{BK}, \cite[Section 8.4]{Y2}). \\
Next, we define the SSF for the pair $(H,H_0)$. We will say that
$V$ satisfies condition ${\mathcal D}_{\alpha}$, $\alpha \in \re$,
if
$$
|V(x,y)| \leq c \langle y \rangle^{-\alpha}, \; c>0, \; (x,y) \in
S_L,
$$
where, as usual, $\langle y \rangle : = (1 + y^2)^{1/2}$.
 Assume
that $V$ satisfies condition ${\mathcal D}_{\alpha}$ with $\alpha
> 1$. Then \eqref{10} holds for ${\mathcal H} = H$, ${\mathcal
H}_0 = H_0$, and $\gamma = 1$, and hence the SSF $\xi(\cdot ;
H,H_0)$ is well defined as an element of $L^1(\re; \langle E
\rangle^{-2}dE)$. In the present article we will identify this SSF
with a representative of the corresponding class of equivalence
described explicitly in Section 4.3 below.
\begin{pr} \label{ap21}
Assume that $V$ satisfies ${\mathcal D}_{\alpha}$ with $\alpha >
1$. Then the SSF $\xi(\cdot;H,H_0)$ is bounded on every compact
subset of $\re\setminus{\mathcal Z}$ and continuous on
$\re\setminus\left({\mathcal Z} \cup \sigma_p(H)\right)$.
\end{pr}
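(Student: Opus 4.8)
\emph{Proof proposal.}
The plan is to realise $\xi(\cdot;H,H_0)$ through a suitably regularised perturbation determinant and to read off both assertions from the analytic properties of that determinant on the cut $\re\setminus\mathcal Z$. Factorise $V=|V|^{1/2}\,J\,|V|^{1/2}$ with $J:=\operatorname{sgn}V$, put $R_0(z):=(H_0-z)^{-1}$, and set
\[
  T(z):=J\,|V|^{1/2}\,R_0(z)\,|V|^{1/2},\qquad \operatorname{Im}z\neq 0,
\]
so that $\det\bigl(I+T(z)\bigr)$ is, formally, the perturbation determinant of the pair $(H,H_0)$. The hypothesis ${\mathcal D}_\alpha$ with $\alpha>1$ means $|V|^{1/2}\le c\,\langle y\rangle^{-\alpha/2}$ with $\langle y\rangle^{-\alpha/2}\in L^2(\re,dy)$ --- equivalently $V$ integrable in the unbounded variable. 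Combining this with the fibred decomposition ${\mathcal F}H_0{\mathcal F}^*=\int_\re^\oplus\hat H(k)\,dk$, the analyticity of the band functions $E_j$ and their non-degeneracy \eqref{m21} off $k=0$, one obtains a limiting absorption principle for $H_0$ on $\re\setminus\mathcal Z$ \emph{valued in a Schatten class}: for $E\in\re\setminus\mathcal Z$ the boundary values $T(E\pm i0):=\lim_{\varepsilon\downarrow0}T(E\pm i\varepsilon)$ exist in the Hilbert--Schmidt norm, and $E\mapsto T(E\pm i0)$ is continuous and locally bounded from $\re\setminus\mathcal Z$ into $S_2$. The threshold value $\alpha>1$ is used precisely here: it is what makes the contribution of each of the finitely many open channels Hilbert--Schmidt, through $\bigl(\int\langle y\rangle^{-\alpha}dy\bigr)^{2}<\infty$, while the contributions of the infinitely many closed channels sum because ${\mathcal E}_j\to\infty$ (cf.\ \eqref{F1}).

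Since $T(z)$ belongs to $S_2$ but in general not to $S_1$ (the channel sum of its trace norms diverges logarithmically), the object to use is a Hilbert--Schmidt-type regularised determinant, say $D(z):=\det\nolimits_{2}\!\bigl(I+T(z)\bigr)$; this underlies the representative of $\xi(\cdot;H,H_0)$ fixed in Section~4.3 --- consistently with \eqref{10} for $\gamma=1$ --- the exact relation, including the regularisation terms, being recorded there. By the previous step $D(\cdot+i0)$ is continuous on $\re\setminus\mathcal Z$. By the Birman--Schwinger principle --- using only the above absorption principle for $H_0$ together with the positivity $\operatorname{Im}\bigl(|V|^{1/2}R_0(E+i0)|V|^{1/2}\bigr)\ge0$ --- the zero set of $D(\cdot+i0)$ in $\re\setminus\mathcal Z$ equals $\sigma_p(H)$. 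Hence $D(\cdot+i0)$ is continuous and non-vanishing on $\re\setminus\bigl(\mathcal Z\cup\sigma_p(H)\bigr)$, and on each connected component of this set $\xi(\cdot;H,H_0)$ coincides, in the representative of Section~4.3, with a continuous branch of $\tfrac1\pi\arg D(\cdot+i0)$ (up to a continuous correction from the regularisation, normalised by $\xi=-N(\cdot;H)$ below $\inf\sigma_{\rm ess}(H)={\mathcal E}_1$ via \eqref{101a}); in particular it is continuous there. Equivalently, one may argue through the on-shell scattering matrix $S(E)$, whose fibre is finite-dimensional with locally constant dimension on $\re\setminus\mathcal Z$, via the Birman--Krein formula \eqref{rr9}.

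For the boundedness claim, fix a compact $K\subset\re\setminus\mathcal Z$. On $K\cap(-\infty,{\mathcal E}_1)$ one has $\xi(\cdot;H,H_0)=-N(\cdot;H)$ by \eqref{101a}, a non-increasing integer-valued step function, finite on $K$ because the spectrum of $H$ below ${\mathcal E}_1$ is discrete. On the part of $K$ contained in $\sigma_{\rm ess}(H)$, Theorem~\ref{t1}(i) ensures that $K$ contains only finitely many eigenvalues of $H$, each of finite multiplicity; between them $\xi(\cdot;H,H_0)$ is continuous by the previous paragraph, hence bounded on the corresponding closed subintervals, while across each eigenvalue $E_0\in K$ the SSF has at most a finite jump, controlled by the (finite) multiplicity of $E_0$. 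Summing these finitely many finite contributions gives a uniform bound on $K$.

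The principal obstacle is the first step: upgrading the usual weighted-$L^2$ limiting absorption principle for $H_0$ --- which only provides bounded-operator boundary values of $T$ --- to \emph{Hilbert--Schmidt}-valued, continuous, locally bounded boundary values on $\re\setminus\mathcal Z$ under the sharp decay $\alpha>1$. This demands a channel-by-channel analysis of the fibred resolvent, the summability over channels being furnished by ${\mathcal E}_j\to\infty$. The remaining ingredients --- the Birman--Schwinger identification of $\sigma_p(H)$, the branch-of-argument bookkeeping, the identification of the jump of $\xi$ at an eigenvalue with its multiplicity, and the deduction of ``finitely many eigenvalues on a compact'' from Theorem~\ref{t1} --- are then essentially standard.
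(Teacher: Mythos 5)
Your proposal has genuine gaps, both in the analytic core and in the boundedness argument. First, the step you yourself call ``the principal obstacle'' --- existence, continuity and local boundedness on $\re\setminus{\mathcal Z}$ of Schatten-class-valued boundary values of $|V|^{1/2}(H_0-z)^{-1}|V|^{1/2}$, with quantitative control near the thresholds --- is exactly where the work lies, and you only assert it. In the paper this is done by splitting the sandwiched resolvent band by band: the finitely many ``nearby'' channels $T_j(E)$ are shown to be trace class with $\|T_j(E)\|_1\leq C_j|E-{\mathcal E}_j|^{-1/2}$ (Lemma \ref{sanl1}, via the functions $G_j$, the inverse band functions $\varphi_j$ of Lemma \ref{fl}, and a principal-value decomposition), while the remaining channels $T^+_{q+1}(E)$ are Hilbert--Schmidt with $\|T^+_j(E)\|_2\leq C_+{\mathcal E}_j({\mathcal E}_j-E)^{-1}$ (Lemma \ref{al1}, via the diamagnetic inequality). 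Second, your route to continuity and boundedness is not the one the paper can use, because the representative of the SSF fixed in Section 4.3 is the Pushnitski index representation \eqref{sof7}, not a regularized perturbation determinant; you never show that your branch of $\frac1\pi\arg\det_2(I+T(\cdot+i0))$, with its unspecified regularization correction, coincides with that representative, nor that the Birman--Schwinger identification of real zeros of $D(\cdot+i0)$ with (possibly embedded) eigenvalues holds.

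The boundedness argument is where the proposal actually fails. You invoke Theorem \ref{t1}(i), but its hypotheses \eqref{f4a}--\eqref{f4} involve $y\,\partial V/\partial y$ and are \emph{not} implied by ${\mathcal D}_{\alpha}$: under the sole assumption of Proposition \ref{ap21} the potential need not even be differentiable, so ``finitely many eigenvalues in $K$'' is not available. Moreover, the claim that across an eigenvalue embedded in $[{\mathcal E}_1,\infty)$ the SSF jumps by at most the multiplicity is not a standard fact and is unproven; the boundary value of a (regularized) determinant can a priori have unbounded argument near a real zero, and indeed the proposition only asserts continuity away from $\sigma_p(H)$, while boundedness must hold across such points too. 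The paper avoids all of this: from \eqref{sof7}, \eqref{a20} and \eqref{a20a} it gets the pointwise bound $|\xi(E;H,H_0)|\leq n_*(s;{\rm Re}\,T(E))+{\rm rank}\,{\rm Im}\,T(E)$ valid for \emph{every} $E\in\re\setminus{\mathcal Z}$ (eigenvalue or not), and then controls the right-hand side uniformly on compacts by the Weyl--Ky Fan inequality \eqref{a19}, the Chebyshev bounds \eqref{32a} with $p=1,2$, the rank bound \eqref{a18a}, and the $S_1$/$S_2$ estimates above; continuity is then quoted from \cite[Proposition 2.5]{BPR}. To repair your proof you would either have to supply the channel-by-channel Schatten estimates anyway and a determinant-to-SSF identification, or simply adopt the index-representation bound, at which point you are reproducing the paper's argument.
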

The proof of Proposition \ref{ap21} can be found in Subsection 4.6
below.\\
     Set
$$
J(x,y) = {\rm sign}\,V(x,y): = \left\{
\begin{array} {l}
1 \quad {\rm if} \quad V(x,y) \geq 0,\\
-1 \quad {\rm if} \quad V(x,y) < 0.
\end{array}
\right.
$$
Fix $j \in {\mathbb N}$. Let $\psi_j(\cdot;k) : I_L \to \re$, $k
\in \re$, be the real-valued normalized in $L^2(I_L)$
eigenfunction of the operator $\hat{H}(k)$ corresponding to the
eigenvalue $E_j(k)$. For $\varepsilon \in (-1,1)$ introduce {\em
the effective potential}
$$
w_{j, \varepsilon}(y) : = \int_{I_L} |V(x,y)|(J(x,y) -
\varepsilon)^{-1} \psi_j(x;0)^2 dx, \quad y \in \re,
$$
so that $w_{j, 0}(y)  = \int_{I_L} V(x,y) \psi_j(x;0)^2 dx$, and
{\em the effective Hamiltonians}
$$
h_{0,j} : = - \mu_j \frac{d^2}{dy^2}, \quad h_{j}(\varepsilon) : =
h_{0,j} + w_{j,\varepsilon},
$$
the number $\mu_j$ being defined in \eqref{F1}.  Note if $V$
satisfies ${\mathcal D}_{\alpha}$ with $\alpha>1$, then \eqref{10}
holds for ${\mathcal H} = h_{j}(\varepsilon)$, ${\mathcal H}_0 =
h_{0,j}$, and $\gamma = 1$, and hence the SSFs $\xi(\cdot ;
h_{j}(\varepsilon), h_{0,j})$, $j \in {\mathbb N}$,
$\varepsilon \in (-1,1)$, are well defined.\\
For $\lambda > 0$ set
    \bel{m7}
\theta_{\beta}(\lambda) : = \left\{
\begin{array} {l}
1 \quad {\rm if} \quad \beta > 1/2, \\
|\ln{\lambda}| \quad {\rm if} \quad \beta = 1/2,\\
\lambda^{-\frac{1}{2} + \beta} \quad {\rm if} \quad 0 < \beta <
1/2.
\end{array}
\right.
    \ee
If $\lambda < 0$, then
    \bel{m8}
\theta_{\beta}(\lambda) : = 1
    \ee
for all $\beta > 0$.
\begin{theorem} \label{t2} Assume that $V$ satisfies
${\mathcal D}_{\alpha}$ with $\alpha > 1$. Fix $q \in {\mathbb
N}$. Then for each $\varepsilon \in (0,1)$ we have
    \bel{m18}
\xi(\lambda; h_q(-\varepsilon), h_{0,q}) +
O(\theta_{2\gamma}(\lambda)) \leq \xi({\mathcal E}_q + \lambda;
H,H_0) \leq \xi(\lambda; h_q(\varepsilon), h_{0,q}) +
O(\theta_{2\gamma}(\lambda)),
    \ee
as $\lambda \to 0$, for any $\gamma \in \left(0,
(\alpha-1)/2\right)$, $\gamma \leq 1$.
\end{theorem}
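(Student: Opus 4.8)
The plan is to reduce the threshold asymptotics of $\xi(\mathcal{E}_q+\lambda;H,H_0)$ to that of the one-dimensional effective pair via a Grushin-type (Feshbach) decomposition in the fibered picture. First I would pass to the fiber decomposition $\mathcal{F}H_0\mathcal{F}^* = \int^\oplus \hat H(k)\,dk$ and recall that near $\mathcal{E}_q$ only the $q$-th band matters: the energies $E_q(k)$ with $E_q(0)=\mathcal{E}_q$ and $E_q(k)=\mathcal{E}_q+\mu_q k^2+O(k^4)$ govern a small interval $[\mathcal{E}_q,\mathcal{E}_q+\delta)$, while all other bands $E_j(k)$, $j\neq q$, stay bounded away from $\mathcal{E}_q$ by \eqref{F1} and \eqref{m21}. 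So I would localize energy with smooth cutoffs $\chi$ supported near $\mathcal{E}_q$ and split $\xi(\mathcal{E}_q+\lambda;H,H_0)$ into a part coming from the $q$-th channel plus a remainder that is bounded (absorbed into $O(\theta_{2\gamma}(\lambda))$, indeed into $O(1)$) because it involves bands that do not touch the threshold.

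Next, for the $q$-th channel I would introduce the (band) projection $\Pi_q$ onto $\psi_q(\cdot;k)$ in each fiber and perform a Schur/Feshbach reduction of $H=H_0+V$ restricted to energies near $\mathcal{E}_q$. The effective operator obtained this way acts in $L^2(\re_y)$ and, after replacing $\psi_q(\cdot;k)$ by $\psi_q(\cdot;0)$ and $E_q(k)$ by its quadratic approximation $\mathcal{E}_q+\mu_q k^2$ (i.e. replacing the exact fiber dispersion by $h_{0,q}=-\mu_q\,d^2/dy^2$ after inverse Fourier transform), is $h_{0,q}+w_{q,0}+(\text{controlled error})$. The two-sided bound in \eqref{m18} with $h_q(\pm\varepsilon)$ in place of $h_q(0)$ comes from a monotonicity/sandwich argument: using the sign decomposition $V=J|V|$ and the Birman–Schwinger-type identity, the contribution of $V$ in the Feshbach scheme at energy $\mathcal{E}_q+\lambda$ can be bounded above and below by the corresponding quantities for the effective potentials $w_{q,\pm\varepsilon}$, the parameter $\varepsilon$ absorbing the discrepancy between $\psi_q(\cdot;k)^2$ and $\psi_q(\cdot;0)^2$ and between $E_q(k)$ and $\mathcal{E}_q+\mu_q k^2$ on the support of the energy cutoff, for $\delta=\delta(\varepsilon)$ small enough. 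Since the SSF is monotone in the perturbation (in the sense of the Birman–Schwinger counting functions entering the Krein formula), these operator inequalities translate into the displayed inequalities for $\xi$.

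The remaining, and main, technical point is the size of the error term $O(\theta_{2\gamma}(\lambda))$: one must show that all the neglected pieces — the off-diagonal Feshbach contributions, the replacement of the exact band data by the $k=0$ data, and the non-quadratic tail $O(k^4)$ in $E_q(k)$ — produce a perturbation of the SSF of the effective pair that is $O(\lambda^{-1/2+\gamma})$ for $\gamma<1/2$, $O(|\ln\lambda|)$ for $\gamma=1/2$, and $O(1)$ for $\gamma>1/2$, for any $\gamma\in(0,(\alpha-1)/2)$, $\gamma\le 1$. This is where the decay hypothesis $\mathcal{D}_\alpha$ enters quantitatively: $\langle y\rangle^{-\alpha}$ decay gives that $\langle y\rangle^{-\gamma}V\langle y\rangle^{-\gamma}$-type weighted operators are trace class / Hilbert–Schmidt with norms controlled by $\alpha-2\gamma>1-$, and a weighted resolvent estimate $\|\langle y\rangle^{-\gamma}(h_{0,q}-\lambda\mp i0)^{-1}\langle y\rangle^{-\gamma}\|=O(\theta_{2\gamma}(\lambda))$ for the 1D Laplacian (the classical low-energy resolvent bound), together with the corresponding estimate for $H_0$ near the $q$-th threshold, yields the stated remainder. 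I expect the hard part to be precisely the uniform-in-$\lambda$ control of the Feshbach remainder through these weighted resolvent bounds near a threshold where the transverse fiber operator $\hat H(k)$ has an eigenvalue crossing the spectral parameter; keeping the error genuinely of order $\theta_{2\gamma}$ (rather than a worse power) is what forces the restriction $\gamma<(\alpha-1)/2$ and requires care in the bookkeeping of weights.
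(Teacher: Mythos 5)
There is a genuine gap: your sketch never supplies the mechanism that actually connects $\xi({\mathcal E}_q+\lambda;H,H_0)$ (a SSF on the absolutely continuous spectrum of a 2D pair) with the SSF of the 1D pair $(h_q(\pm\varepsilon),h_{0,q})$. A Feshbach/Grushin reduction of $H$ near ${\mathcal E}_q$ produces an \emph{energy-dependent} effective operator, and ``monotonicity of the SSF in the perturbation'' cannot be applied across this reduction, because the two pairs act in different spaces and are related only through that energy-dependent scheme; moreover, monotonicity would not by itself explain why the comparison operators are $h_q(\pm\varepsilon)$ with the specific potentials $w_{q,\varepsilon}=\int_{I_L}|V|(J-\varepsilon)^{-1}\psi_q(\cdot;0)^2dx$, where $\varepsilon$ shifts the \emph{sign operator} $J$ rather than the potential. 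In the paper this is exactly what the Pushnitski index representation \eqref{sof7} delivers: one writes $T(E)=\sum_{j\le q}T_j(E)+T^+_{q+1}(E)$ as in \eqref{a16}, uses the index-perturbation inequalities \eqref{paris2}, \eqref{a20a} together with the trace/Hilbert--Schmidt bounds \eqref{san0}, \eqref{a15a} to strip off all channels except the $q$-th at the cost of $O(1)$ and an $\varepsilon$-shift of $J$ (Proposition \ref{ap42}); then one replaces $\psi_q(\cdot;k)$ by $\psi_q(\cdot;0)$ (Proposition \ref{ap43}, error $O(1)$ in trace norm) and $E_q(k)$ by ${\mathcal E}_q+\mu_q k^2$ (Proposition \ref{ap44}, error $O(\theta_{2\gamma}(\lambda))$ via the singular-integral estimates \eqref{m1}--\eqref{m2}); finally the \emph{same} representation \eqref{sof5a}, applied to the factorization $w_{q,\varepsilon}=\varkappa^*(J-\varepsilon)^{-1}\varkappa$, $\tau_q(z)=\varkappa(h_{0,q}-z)^{-1}\varkappa^*$, identifies the resulting index with $\xi(\lambda;h_q(\pm\varepsilon),h_{0,q})$ up to $O(1)$ (formulas \eqref{m16}--\eqref{m17}). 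None of these steps is replaced by anything concrete in your proposal; the ``controlled error'' of the Feshbach remainder and the passage from operator inequalities to SSF inequalities are precisely the points left unproved.

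Two further points. First, your claimed error sizes ($\lambda^{-1/2+\gamma}$ for $\gamma<1/2$, $|\ln\lambda|$ for $\gamma=1/2$, $O(1)$ for $\gamma>1/2$) correspond to $\theta_{\gamma}(\lambda)$, not to $\theta_{2\gamma}(\lambda)$ as required in \eqref{m18}; with only $\theta_\gamma$ and $\gamma<(\alpha-1)/2$ the error would \emph{not} be $o(\lambda^{\frac12-\frac1\alpha})$ for $\alpha\in(1,2)$, so Corollary \ref{f1} could not be deduced. Second, the weighted low-energy resolvent bound you invoke for $h_{0,q}$ is not how the decay enters in the paper: what is actually used is the H\"older continuity in $k$ (exponent $\gamma\le 1$, $\gamma<(\alpha-1)/2$) of the sandwiched fiber operators $G_q(k)$, $g_q(k)$ in Hilbert--Schmidt norm, combined with $\varphi_q(s)-\mu_q^{-1/2}s^{1/2}=O(s^{3/2})$ from Lemma \ref{fl}; this is what produces the exponent $2\gamma$ in $\theta_{2\gamma}$. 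As it stands, your outline is a plausible heuristic but omits the key representation and the quantitative trace-norm comparisons on which the two-sided bound \eqref{m18} rests.
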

The proof of Theorem \ref{t2} can be found in Subsection 4.7.\\
Assume now that $\alpha \in (1,2)$. Then there exists $\gamma \in
\left(0, (\alpha-1)/2\right)$, $\gamma \leq 1$, such that
$\theta_{2\gamma}(\lambda) =
o(|\lambda|^{\frac{1}{2}-\frac{1}{\alpha}})$ as $\lambda \to 0$.
Hence, using well-known results concerning the asymptotic
behaviour of the SSF $\xi(\lambda ; h_{j}(\varepsilon), h_{0,j})$
as $\lambda \to 0$ (see e.g. \cite[Theorem XIII.82]{RS} in the
case $\lambda \uparrow 0$, and \cite{Y1} in the case $\lambda
\downarrow 0$), we obtain the following
\begin{follow} \label{f1}
    Let $V$ satisfy ${\mathcal D}_{\alpha}$ with
$\alpha \in (1,2)$. Fix $q \in {\mathbb N}$. Suppose that for each
$\varepsilon \in (-\varepsilon_0, \varepsilon_0)$ and some
$\varepsilon_0 \in (0,1)$ there exist real numbers
$\omega_{q,\pm}(\varepsilon)$ such that
    \bel{sof1}
    \lim_{y \to \pm
\infty} |y|^{\alpha} w_{q,\varepsilon}(y) =
\omega_{q,\pm}(\varepsilon)
    \ee
uniformly with respect to $\varepsilon$. Then we have
    \bel{a6}
\lim_{\lambda \downarrow 0} \lambda^{\frac{1}{\alpha} -
\frac{1}{2}} \xi({\mathcal E}_q - \lambda;H,H_0) = -\mu_q^{-1/2}
{\mathcal C}_{\alpha} \Omega_q^-,
    \ee
    \bel{rr3}
\lim_{\lambda \downarrow 0} \lambda^{\frac{1}{\alpha} -
\frac{1}{2}} \xi({\mathcal E}_q + \lambda;H,H_0) = -\mu_q^{-1/2}
{\mathcal C}_{\alpha} \left(\csc{(\pi/\alpha)}\Omega_q^- +
\cot{(\pi/\alpha)}\Omega_q^+\right),
    \ee
 where ${\mathcal C}_{\alpha}: = \frac{1}{\pi} \int_0^1
(t^{-\alpha} - 1)^{1/2} dt$, and $\Omega_q^{\pm} : =
\sum_{\varsigma = +,-} \omega_{q,\varsigma}(0)_{\pm}^{1/\alpha}$,
while $\omega_{q,\varsigma}(0)_{+}$ and
$\omega_{q,\varsigma}(0)_{-}$ denote the positive and the negative
part of $\omega_{q,\varsigma}(0)$ respectively.
\end{follow}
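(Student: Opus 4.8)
The plan is to combine the two-sided bound of Theorem \ref{t2} with the known threshold asymptotics of the SSF for the one-dimensional effective pairs $(h_q(\varepsilon),h_{0,q})$, and then to let $\varepsilon\to 0$. First I would fix, for the given $\alpha\in(1,2)$, a number $\gamma\in\bigl(0,(\alpha-1)/2\bigr)$ with $\gamma\le 1$ chosen so large that $2\gamma>1/2$; this is possible precisely because $\alpha<2$ forces $(\alpha-1)/2<1/2$, so the interval $\bigl(1/4,(\alpha-1)/2\bigr)$ is nonempty and we may take $\gamma$ there. With this choice $\theta_{2\gamma}(\lambda)\equiv 1$ for $\lambda>0$ by \eqref{m7}, and $\theta_{2\gamma}(\lambda)\equiv 1$ for $\lambda<0$ by \eqref{m8}; since $\tfrac1\alpha-\tfrac12>0$ on $(1,2)$, we get $\lambda^{\frac1\alpha-\frac12}\,O(\theta_{2\gamma}(\lambda))=O(\lambda^{\frac1\alpha-\frac12})\to 0$ as $\lambda\to 0$. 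Thus the error terms in \eqref{m18} are negligible after multiplication by the normalising power, and the corollary reduces to controlling $\lambda^{\frac1\alpha-\frac12}\,\xi(\pm\lambda;h_q(\varepsilon),h_{0,q})$ as $\lambda\downarrow 0$, followed by $\varepsilon\downarrow 0$.

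Next I would invoke the classical one-dimensional results. The effective Hamiltonian is $h_q(\varepsilon)=-\mu_q\frac{d^2}{dy^2}+w_{q,\varepsilon}$ on $L^2(\re)$; after the unitary rescaling $y\mapsto\mu_q^{1/2}y$ it is unitarily equivalent to $-\frac{d^2}{dy^2}+\mu_q^{-1}w_{q,\varepsilon}(\mu_q^{1/2}\,\cdot)$, and the SSF is invariant under this conjugation. By hypothesis \eqref{sof1}, $|y|^\alpha w_{q,\varepsilon}(y)\to\omega_{q,\pm}(\varepsilon)$, so the rescaled potential behaves like $\mu_q^{-1}\mu_q^{-\alpha/2}\,\omega_{q,\pm}(\varepsilon)\,|y|^{-\alpha}$ at $\pm\infty$. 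For a 1D Schrödinger operator $-\frac{d^2}{dy^2}+W$ with $W(y)\sim a_\pm|y|^{-\alpha}$, $\alpha\in(1,2)$, the SSF at the bottom of the essential spectrum satisfies (Theorem XIII.82 of \cite{RS} for $\lambda\uparrow 0$, and \cite{Y1} for $\lambda\downarrow 0$)
$$
\lim_{\lambda\downarrow 0}\lambda^{\frac1\alpha-\frac12}\xi(-\lambda)=-{\mathcal C}_\alpha\!\!\sum_{\varsigma=\pm}(a_\varsigma)_-^{1/\alpha},
$$
$$
\lim_{\lambda\downarrow 0}\lambda^{\frac1\alpha-\frac12}\xi(+\lambda)=-{\mathcal C}_\alpha\!\!\sum_{\varsigma=\pm}\Bigl(\csc(\pi/\alpha)(a_\varsigma)_-^{1/\alpha}+\cot(\pi/\alpha)(a_\varsigma)_+^{1/\alpha}\Bigr),
$$
with ${\mathcal C}_\alpha=\frac1\pi\int_0^1(t^{-\alpha}-1)^{1/2}\,dt$; the $\csc/\cot$ split above $0$ comes from the Birman–Krein formula and the fact that below a positive-part tail one sees a scattering phase rather than eigenvalue counting. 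Inserting $a_\pm=\mu_q^{-1-\alpha/2}\omega_{q,\pm}(\varepsilon)$ and pulling the common factor $\mu_q^{-(1/\alpha)(1+\alpha/2)}=\mu_q^{-1/\alpha-1/2}$ out of the $1/\alpha$-power, one combines it with the rescaling Jacobian; I would check that the net prefactor is exactly $\mu_q^{-1/2}$, matching \eqref{a6}–\eqref{rr3}. This gives the two-sided estimate
$$
-\mu_q^{-1/2}{\mathcal C}_\alpha\,\Omega_q^-(\varepsilon)\ \le\ \liminf_{\lambda\downarrow0}\lambda^{\frac1\alpha-\frac12}\xi({\mathcal E}_q-\lambda;H,H_0)\ \le\ \limsup\ \le\ -\mu_q^{-1/2}{\mathcal C}_\alpha\,\Omega_q^-(-\varepsilon),
$$
and the analogous sandwich for ${\mathcal E}_q+\lambda$ with $\csc(\pi/\alpha)\Omega_q^-+\cot(\pi/\alpha)\Omega_q^+$ evaluated at $\pm\varepsilon$, where $\Omega_q^\pm(\varepsilon):=\sum_{\varsigma=\pm}\omega_{q,\varsigma}(\varepsilon)_\pm^{1/\alpha}$.

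Finally I would let $\varepsilon\downarrow 0$. The uniform convergence in \eqref{sof1} together with $w_{q,\varepsilon}(y)=\int_{I_L}|V|(J-\varepsilon)^{-1}\psi_q(x;0)^2dx\to w_{q,0}(y)$ shows $\omega_{q,\varsigma}(\varepsilon)\to\omega_{q,\varsigma}(0)$ as $\varepsilon\to 0$; since $t\mapsto t_\pm^{1/\alpha}$ is continuous, $\Omega_q^\pm(\varepsilon)\to\Omega_q^\pm$ and $\Omega_q^\pm(-\varepsilon)\to\Omega_q^\pm$. Squeezing, the $\liminf$ and $\limsup$ in the displays above coincide, which yields \eqref{a6} and \eqref{rr3}. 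The one genuinely delicate point is making sure the 1D asymptotic results apply to $w_{q,\varepsilon}$: they require not just the pointwise tail \eqref{sof1} but enough regularity/decay of the remainder $w_{q,\varepsilon}(y)-\omega_{q,\pm}(\varepsilon)|y|^{-\alpha}$ to fit the hypotheses of \cite[Theorem XIII.82]{RS} and \cite{Y1}; this follows from condition ${\mathcal D}_\alpha$ on $V$ (which bounds $|V|$, hence $w_{q,\varepsilon}$, by $c\langle y\rangle^{-\alpha}$) plus \eqref{sof1}, but the bookkeeping of error terms there, and verifying that all estimates are uniform in $\varepsilon\in(-\varepsilon_0,\varepsilon_0)$ so that the final limit may be interchanged, is the main obstacle.
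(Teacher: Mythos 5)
Your overall strategy is the same as the paper's: sandwich $\xi({\mathcal E}_q\pm\lambda;H,H_0)$ via Theorem \ref{t2}, kill the error term after multiplying by $\lambda^{\frac1\alpha-\frac12}$, apply one-dimensional threshold asymptotics to $h_q(\pm\varepsilon)$ after rescaling, and squeeze as $\varepsilon\downarrow0$ using the continuity of $\omega_{q,\pm}(\varepsilon)$ at $\varepsilon=0$. However, three steps are genuinely defective. First, your choice of $\gamma$ is impossible in general: you want $2\gamma>1/2$ together with $\gamma<(\alpha-1)/2$, i.e.\ $\gamma\in(1/4,(\alpha-1)/2)$, and this interval is \emph{empty} whenever $\alpha\le 3/2$; the observation ``$\alpha<2$ forces $(\alpha-1)/2<1/2$'' works against you, not for you. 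What is actually needed (and what the paper uses) is only $\theta_{2\gamma}(\lambda)=o(\lambda^{\frac12-\frac1\alpha})$: for $2\gamma<1/2$ one has $\lambda^{\frac1\alpha-\frac12}\theta_{2\gamma}(\lambda)=\lambda^{\frac1\alpha-1+2\gamma}\to0$ as soon as $\gamma>(\alpha-1)/(2\alpha)$, and the interval $\left((\alpha-1)/(2\alpha),(\alpha-1)/2\right)$ is nonempty for every $\alpha\in(1,2)$. Second, your scaling bookkeeping is wrong: $h_q(\varepsilon)$ is \emph{not} unitarily equivalent to $-\frac{d^2}{dy^2}+\mu_q^{-1}w_{q,\varepsilon}(\mu_q^{1/2}\cdot)$. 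A dilation gives equivalence to $-\frac{d^2}{dy^2}+w_{q,\varepsilon}(\mu_q^{1/2}\cdot)$ (the paper's choice), whose tails are $\mu_q^{-\alpha/2}\omega_{q,\pm}(\varepsilon)|y|^{-\alpha}$ and yield directly the prefactor $\mu_q^{-1/2}$; alternatively one may factor out $\mu_q$, but then the spectral parameter changes, $\xi(\lambda;h_q(\varepsilon),h_{0,q})=\xi(\lambda/\mu_q;\,\cdot\,,\cdot\,)$. With your stated equivalence and an unchanged spectral parameter the limit would come out as $\mu_q^{-1/\alpha-1/2}$ times the constant, and there is no ``rescaling Jacobian'' left to repair it, since unitary conjugation leaves both the SSF and $\lambda$ untouched.

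Third, and most substantively, the asymptotics you quote for $\lambda\downarrow0$ \emph{above} the threshold of the whole-line pair, with the $\csc(\pi/\alpha)$, $\cot(\pi/\alpha)$ structure and possibly different tail coefficients at $\pm\infty$, is not an off-the-shelf result: \cite[Theorem XIII.82]{RS} gives only the eigenvalue counting function below zero (hence \eqref{a6}, where $\xi(-\lambda;h_q(\varepsilon),h_{0,q})=-N(-\lambda;h_q(\varepsilon))$ suffices), and \cite{Y1} concerns the half-line problem. The paper must prove the whole-line statement (Lemma \ref{rrl2}): it decouples the line at the origin by a Dirichlet condition, applies Yafaev's half-line corollary through the Birman--Krein formula \eqref{rr9} to each half separately, and then removes the decoupling by noting that the resolvent differences are finite rank, so that by the chain rule and Krein's bound the SSFs differ by $O(1)$, which is negligible against $\lambda^{\frac12-\frac1\alpha}$. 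Your proposal treats exactly this gluing step as a known fact, and that is the main missing ingredient; the remaining concern you flag (uniformity in $\varepsilon$) is handled in the paper simply by the continuity of $\omega_{q,\pm}(\varepsilon)$ at $\varepsilon=0$, as you also do.
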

For the sake of completeness  we include a
sketch of the proof of Corollary \ref{f1} in Subsection 4.8. \\
 {\em Remark}: If $q=1$ and $\lambda>0$,  we have
    $\xi({\mathcal E}_1-\lambda; H,H_0) = - N({\mathcal
    E}_1-\lambda; H)$ (cf. \eqref{101a}). Note that the spectrum of $H$ below
${\mathcal E}_1$ is discrete if $V$ satisfies ${\mathcal
D}_{\alpha}$ with any $\alpha > 0$. Moreover,   as in \eqref{a6} we have
    \bel{a7}
    \lim_{\lambda \downarrow 0} \lambda^{\frac{1}{\alpha} -
\frac{1}{2}}N({\mathcal
    E}_1-\lambda; H)  = \mu_q^{-1/2} {\mathcal C}_{\alpha}
    \Omega_1^-
    \ee
      for all $\alpha \in (0,2)$ and not only for $\alpha \in (1,2)$. \\
Similarly, using well known results on the asymptotic behaviour as
$\lambda \uparrow 0$ of the SSF $\xi(\lambda ; h_{q}(\varepsilon),
h_{0,q})$  in the case $\alpha = 2$ (see \cite{KS}), we obtain the
following
\begin{follow} \label{f2}
Assume the hypotheses of Corollary \ref{f1} with $\alpha = 2$. Fix
$q \in {\mathbb N}$. Then we have
$$
\lim_{\lambda \downarrow 0} |\ln{\lambda}|^{-1} \xi({\mathcal E}_q
- \lambda;H,H_0) = - \frac{1}{2\pi} \sum_{\varsigma = +,-}
\left(\frac{\omega_{q,\varsigma}(0)}{\mu_q} +
\frac{1}{4}\right)_-^{1/2}.
$$
Moreover, if $\omega_{q,\pm}(0) > -\mu_q/4$, then $\xi({\mathcal
E}_q - \lambda;H,H_0) = O(1)$ as $\lambda \downarrow 0$.
\end{follow}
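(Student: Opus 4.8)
The plan is to derive Corollary \ref{f2} from Theorem \ref{t2} exactly in the way Corollary \ref{f1} is obtained, only now in the critical case $\alpha = 2$. First I would fix $q \in {\mathbb N}$ and choose, as permitted by Theorem \ref{t2}, an exponent $\gamma \in \left(0, (\alpha-1)/2\right) = (0,1/2)$ with $\gamma \le 1$; for such $\gamma$ one has $2\gamma < 1$, so $\theta_{2\gamma}(\lambda) = |\lambda|^{-\frac12 + \gamma} = o(|\ln\lambda|)$ as $\lambda \to 0$ (indeed it tends to $0$). Hence for $\lambda < 0$, writing $\lambda = -|\lambda|$, the two-sided bound \eqref{m18} reads
$$
\xi(-|\lambda|; h_q(-\varepsilon), h_{0,q}) + o(|\ln|\lambda||) \le \xi({\mathcal E}_q - |\lambda|; H, H_0) \le \xi(-|\lambda|; h_q(\varepsilon), h_{0,q}) + o(|\ln|\lambda||),
$$
for each $\varepsilon \in (0,1)$. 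So everything reduces to the behaviour as $\lambda \uparrow 0$ of the one-dimensional SSF $\xi(\lambda; h_q(\varepsilon), h_{0,q})$, i.e.\ of $-N(\lambda; h_q(\varepsilon))$ below the bottom $0$ of the essential spectrum of $h_{0,j}$, for the Schr\"odinger operator $h_q(\varepsilon) = -\mu_q \frac{d^2}{dy^2} + w_{q,\varepsilon}$ on the line.

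Next I would invoke the known strong-coupling-type asymptotics for the counting function of a 1D Schr\"odinger operator with a potential behaving like $-\mathrm{const}\,|y|^{-2}$ at infinity — precisely the result of \cite{KS}. Under hypothesis \eqref{sof1} with $\alpha = 2$, the effective potential obeys $w_{q,\varepsilon}(y) = \omega_{q,\pm}(\varepsilon)|y|^{-2}(1+o(1))$ as $y \to \pm\infty$, and the relevant borderline constant for $-\mu_q \frac{d^2}{dy^2}$ is $-\mu_q/4$ (the classical Hardy constant rescaled by $\mu_q$). The cited results then give
$$
\lim_{\lambda \uparrow 0} |\ln|\lambda||^{-1}\, \xi(\lambda; h_q(\varepsilon), h_{0,q}) = -\frac{1}{2\pi} \sum_{\varsigma = +,-} \left(\frac{\omega_{q,\varsigma}(\varepsilon)}{\mu_q} + \frac14\right)_-^{1/2},
$$
with the right-hand side equal to $O(1)$ — in fact the SSF itself stays bounded — when $\omega_{q,\pm}(\varepsilon) > -\mu_q/4$. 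Feeding this into the sandwich above, dividing by $|\ln|\lambda||$ and letting $\lambda \uparrow 0$, one gets the claimed limit with $\omega_{q,\varsigma}(\varepsilon)$ in place of $\omega_{q,\varsigma}(0)$; finally, letting $\varepsilon \downarrow 0$ and using the uniformity in $\varepsilon$ of \eqref{sof1} together with the continuity of $t \mapsto (t/\mu_q + 1/4)_-^{1/2}$, the two sides collapse to the common value with $\omega_{q,\varsigma}(0)$, which is the assertion. The boundedness statement follows the same way from the $O(1)$ clause.

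The main obstacle — and the only genuinely delicate point — is the interchange of the limits in $\lambda$ and $\varepsilon$: the error term $O(\theta_{2\gamma}(\lambda))$ in \eqref{m18} must be controlled uniformly in $\varepsilon$ on a fixed interval $(0,\varepsilon_0)$ (which is why the uniformity in \eqref{sof1} is hypothesized), and one has to check that the constant in the $o(|\ln|\lambda||)$ does not blow up as $\varepsilon \to 0$. Granting the uniform version of Theorem \ref{t2}'s estimate — which is how it is proved in Subsection 4.7 — the squeeze argument is routine, and the borderline $\alpha = 2$ asymptotics are taken verbatim from \cite{KS}. I would present this as a short paragraph indicating the three ingredients (the sandwich \eqref{m18}, the vanishing of $\theta_{2\gamma}$ against $|\ln\lambda|$, and the 1D result of \cite{KS}) and the $\varepsilon \downarrow 0$ limit, mirroring the sketch given for Corollary \ref{f1}.
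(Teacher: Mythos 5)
Your proposal is correct and follows essentially the same route as the paper, which obtains Corollary \ref{f2} exactly by combining the sandwich \eqref{m18} with the known $\alpha=2$ threshold asymptotics of \cite{KS} for the effective one-dimensional operators $h_q(\pm\varepsilon)$ (via $\xi(-\lambda;h_q(\varepsilon),h_{0,q})=-N(-\lambda;h_q(\varepsilon))$ and the rescaling producing the constant $-\mu_q/4$), and then letting $\varepsilon\downarrow 0$ using the uniformity in \eqref{sof1}. One small slip: since ${\mathcal E}_q-\lambda$ lies below the threshold, the error term in \eqref{m18} is $\theta_{2\gamma}$ evaluated at a negative argument, which equals $1$ by \eqref{m8}; your formula $|\lambda|^{-1/2+\gamma}$ is the $\lambda>0$ branch and in fact diverges for $\gamma<1/2$ rather than tending to $0$, but this does not affect the argument since the error is $O(1)=o(|\ln\lambda|)$ in any case.
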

{\em Remark}: In the case $\alpha = 2$, the analysis of the
asymptotic behaviour of $\xi(\lambda ; h_{j}(\varepsilon),
h_{0,j})$ as $\lambda \downarrow 0$ requires some additional
estimates similar to those obtained in \cite{Y1}. In order to
avoid the inadequate increase
of the size of the article, we omit these results. \\
Finally, in Subsection 4.9 we prove
\begin{follow} \label{f3}
Let $V$ satisfy ${\mathcal D}_{\alpha}$ with $\alpha > 2$. Then
for each $q \in {\mathbb N}$ we have
    \bel{m31}
\xi({\mathcal E}_q + \lambda;H,H_0) = O(1), \quad \lambda \to 0.
    \ee
\end{follow}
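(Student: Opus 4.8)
The plan is to derive Corollary \ref{f3} from Theorem \ref{t2} by showing that when $\alpha > 2$ the main term $\xi(\lambda; h_q(\pm\varepsilon), h_{0,q})$ appearing in the two-sided bound \eqref{m18} is itself $O(1)$ as $\lambda \to 0$, while the remainder $\theta_{2\gamma}(\lambda)$ can be made bounded by choosing $\gamma$ appropriately. For the remainder, note that since $\alpha > 2$ we may pick $\gamma \in \left(0, (\alpha-1)/2\right)$ with $\gamma \le 1$ and in addition $\gamma > 1/4$, i.e. $2\gamma > 1/2$; by the definition \eqref{m7} this forces $\theta_{2\gamma}(\lambda) = 1$ for $\lambda > 0$, and by \eqref{m8} $\theta_{2\gamma}(\lambda) = 1$ for $\lambda < 0$ as well. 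Hence $O(\theta_{2\gamma}(\lambda)) = O(1)$ on both sides of \eqref{m18}, and it remains only to control the effective-Hamiltonian SSF.

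The core step is therefore: \emph{for a 1D Schr\"odinger operator $h_q(\varepsilon) = -\mu_q \frac{d^2}{dy^2} + w_{q,\varepsilon}$ with a potential $w_{q,\varepsilon}$ decaying faster than $|y|^{-2}$, the SSF $\xi(\lambda; h_q(\varepsilon), h_{0,q})$ stays bounded as $\lambda \to 0$.} First I would record that $V$ satisfying ${\mathcal D}_\alpha$ with $\alpha > 2$ gives $|w_{q,\varepsilon}(y)| \le C \langle y\rangle^{-\alpha}$ uniformly in $\varepsilon$ in a neighbourhood of $0$ (the weight $(J-\varepsilon)^{-1}$ is bounded for $|\varepsilon|$ small, and $\int_{I_L}\psi_q(x;0)^2\,dx = 1$); in particular $w_{q,\varepsilon} \in L^1(\re) \cap L^\infty(\re)$ with a short-range decay $|y|^{-\alpha}$, $\alpha > 2$. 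For such potentials the low-energy behaviour of the 1D SSF is classical: away from $\lambda = 0$ the SSF is related to the scattering matrix via Birman--Krein \eqref{rr9}, and for $\lambda > 0$ it equals (up to a normalisation and the contribution $-N(\lambda; h_q(\varepsilon))$ of the negative eigenvalues, which is finite and locally constant) the scattering phase, which for a potential decaying faster than $|y|^{-2}$ extends continuously, hence boundedly, to $\lambda = 0$ — this is the absence of a zero-energy resonance/threshold singularity in one dimension under the short-range condition $\int (1+|y|)|w|\,dy < \infty$, which is implied by $\alpha > 2$. For $\lambda < 0$ we simply have $\xi(\lambda; h_q(\varepsilon), h_{0,q}) = -N(\lambda; h_q(\varepsilon))$, and since $w_{q,\varepsilon} \in L^1$ the number of negative eigenvalues is finite (Bargmann's bound), so this too is $O(1)$ as $\lambda \uparrow 0$. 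Either way one may cite the one-dimensional results already invoked in the paper (\cite[Theorem XIII.82]{RS}, \cite{Y1}) or standard references on the 1D SSF; the point is that the $\alpha \in (1,2)$ singularity rate $|\lambda|^{1/\alpha - 1/2}$ of Corollary \ref{f1} degenerates to a bounded quantity once $\alpha > 2$.

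Combining these two observations with \eqref{m18}: the upper bound reads $\xi({\mathcal E}_q + \lambda; H, H_0) \le \xi(\lambda; h_q(\varepsilon), h_{0,q}) + O(1) = O(1)$, and the lower bound reads $\xi({\mathcal E}_q + \lambda; H, H_0) \ge \xi(\lambda; h_q(-\varepsilon), h_{0,q}) + O(1) = O(1)$, both as $\lambda \to 0$ (from either side); this is exactly \eqref{m31}. The main obstacle I anticipate is not the sandwiching argument, which is immediate, but making the boundedness of $\xi(\lambda; h_q(\pm\varepsilon), h_{0,q})$ at $\lambda = 0$ fully rigorous and uniform in the (fixed, small) parameter $\varepsilon$: one must be sure that the decay $|w_{q,\varepsilon}| \le C|y|^{-\alpha}$ with $\alpha > 2$ genuinely precludes a zero-energy resonance of $h_q(\varepsilon)$ and that the relevant threshold estimate holds up to $\lambda = 0$ rather than merely on $(0,\delta)$ — equivalently, that the scattering phase is continuous at the bottom of the continuous spectrum. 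This is standard 1D scattering theory, but it is the place where the short-range hypothesis $\alpha > 2$ (as opposed to $\alpha > 1$) is actually used, so it deserves an explicit, if brief, citation.
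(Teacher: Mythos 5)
Your argument is correct in substance, and for $\lambda<0$ it coincides with the paper's (both reduce to the finiteness of the number of negative eigenvalues of $h_q(\pm\varepsilon)$, since $w_{q,\pm\varepsilon}$ decays like $\langle y\rangle^{-\alpha}$ with $\alpha>2$). For $\lambda>0$, however, you take a genuinely different route: you apply Theorem \ref{t2} and then invoke classical low-energy scattering theory for 1D short-range potentials (continuity of the scattering matrix up to zero energy under $\int(1+|y|)|w|\,dy<\infty$, Birman--Krein \eqref{rr9}) to conclude that $\xi(\lambda;h_q(\pm\varepsilon),h_{0,q})=O(1)$ as $\lambda\downarrow 0$. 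The paper deliberately avoids this external input: it stops one step earlier in its own chain, bounding $|{\rm ind}\,(J-\varepsilon+{\rm Re}\,\tau_q(\lambda),J-\varepsilon)|$ directly via \eqref{a20}, \eqref{a20a}, \eqref{m15}, \eqref{32a} and the trace-norm/rank estimates \eqref{m4}--\eqref{m33} for $M_{\phi,1},M_{\phi,2},M_{\phi,3}$ with $\gamma>1/2$, and then assembles \eqref{a28}, \eqref{a27a}--\eqref{a28a}, \eqref{m10}--\eqref{m11}, \eqref{m30}; in particular it never analyses the effective SSF at positive energies (compare the remark after Corollary \ref{f2}, where the authors explicitly say they omit such Yafaev-type threshold estimates). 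Your approach buys brevity and a transparent interpretation (no threshold singularity of the 1D scattering phase for $\alpha>2$), at the cost of importing a fact the paper does not prove, and of a small identification issue: the quantity in \eqref{m18} is the specific representative of $\xi(\cdot;h_q(\varepsilon),h_{0,q})$ fixed through \eqref{m16}--\eqref{m17}, which agrees a.e. with the scattering-phase representative, so the cited boundedness does transfer, but this deserves a word. One inaccuracy to fix: decay $|w|\le C\langle y\rangle^{-\alpha}$ with $\alpha>2$ does \emph{not} preclude a zero-energy resonance (resonances occur even for compactly supported potentials); what matters is that in both the generic and the resonant case the phase shift, hence the SSF, remains bounded at the threshold --- boundedness, not absence of a resonance, is the fact you need, and it holds. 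Your choice of $\gamma$ is fine: since $\alpha>2$ one can take $\gamma\in(1/4,(\alpha-1)/2)$, $\gamma\le 1$, so $\theta_{2\gamma}\equiv O(1)$ in \eqref{m18} (the paper's stricter $\gamma>1/2$ is needed only because it uses $\theta_{\gamma}$ in \eqref{m33}).
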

\section{Mourre estimates}
\setcounter{equation}{0}
    In this section we prove Theorem \ref{t1}   using an
    appropriate Mourre estimate established in Proposition
    \ref{Moest2}. Similar Mourre estimates have been obtained in \cite{dBP}
    for a
    2D magnetic Schr\"odinger operator defined on the half-plane, and
    in \cite[Chapter 3]{GL} for a
    3D one defined in the whole space.
\begin{lemma}
\label{lm-sep}  Let $n \in {\mathbb N}$,   $E \in
(\mathcal{E}_{n}, \mathcal{E}_{n+1})$. Then there exists $\delta =
\delta(E) \in (0, {\rm dist}\,(E,{\mathcal Z}))$ such that the
interval $[E-\delta ,E+\delta]$ satisfies
    \bel{m22}
E_r^{-1}([E-\delta ,E+\delta]) = \emptyset, \quad r \geq n+1,
    \ee
and, if $n \geq 2$,
    \bel{eqsep1}
    E_r^{-1}([E-\delta ,E+\delta]) \cap E_s^{-1}([E-\delta ,E+\delta]) = \emptyset, \quad
    r \neq s, \quad r,s = 1,\ldots,n.
    \ee
\end{lemma}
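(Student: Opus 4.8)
The plan is to exploit the structure of the band functions $E_j(k)$ established just before the statement: each $E_j$ is even, real-analytic, equals $\mathcal{E}_j$ at $k=0$ with $E_j(0)=\mathcal{E}_j$, is strictly increasing in $|k|$ by \eqref{m21}, and tends to $+\infty$ like $k^2$ by \eqref{a2}. Since $E\in(\mathcal{E}_n,\mathcal{E}_{n+1})$, we first choose any $\delta_0\in(0,\operatorname{dist}(E,\mathcal Z))$ so that $[E-\delta_0,E+\delta_0]\subset(\mathcal{E}_n,\mathcal{E}_{n+1})$; in particular the interval stays strictly between consecutive thresholds. I will then shrink $\delta$ further, in two stages, to secure \eqref{m22} and \eqref{eqsep1}.

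For \eqref{m22}: for $r\ge n+1$ we have $\mathcal{E}_r=E_r(0)\ge\mathcal{E}_{n+1}>E+\delta_0$, and since $E_r(k)\ge E_r(0)=\mathcal{E}_r$ for all $k$ (using that $E_r$ is even with $kE_r'(k)>0$, so $k=0$ is the global minimum), we get $E_r(k)>E+\delta_0$ for every $k\in\re$ and every $r\ge n+1$. Hence $E_r^{-1}([E-\delta,E+\delta])=\emptyset$ already for $\delta=\delta_0$, so \eqref{m22} costs nothing.

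For \eqref{eqsep1} (only needed when $n\ge 2$): fix $r\ne s$ with $r,s\le n$, say $r<s$. Because $E_s(k)>E_s(0)=\mathcal{E}_s>\mathcal{E}_r=E_r(0)$, and both functions are continuous, strictly increasing in $|k|$, and blow up at $\pm\infty$, the level set $E_r^{-1}([E-\delta,E+\delta])$ is a compact symmetric set of the form $\pm[a_r(\delta),b_r(\delta)]$ (or possibly an interval containing $0$, in the degenerate case where $E$ is close to $\mathcal{E}_r$, but since $E>\mathcal{E}_n\ge\mathcal{E}_r$ we are away from $0$) with $a_r(\delta)\to a_r(0)>0$, $b_r(\delta)\to b_r(0)$ as $\delta\downarrow 0$; similarly for $s$. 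Since $E_r$ and $E_s$ are strictly increasing on $k>0$ and $E_s>E_r$ there, the point $k^*_r>0$ with $E_r(k^*_r)=E$ satisfies $k^*_r>k^*_s$, i.e. the ``centers'' of the positive pieces of the two level sets are distinct; by continuity in $\delta$ the two compact pieces stay disjoint for all sufficiently small $\delta$. Taking $\delta$ small enough that this holds simultaneously for all finitely many pairs $(r,s)$ with $1\le r<s\le n$, and intersecting with the $\delta_0$ chosen above, gives the desired $\delta$.

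The only mildly delicate point — and what I would treat as the main obstacle — is justifying that the level sets $E_r^{-1}([E-\delta,E+\delta])$ really are finite unions of intervals shrinking continuously onto $E_r^{-1}(E)$; this is where one must use real-analyticity (or at least that $E_r'$ has no zeros for $k\ne 0$, which is exactly \eqref{m21}) to rule out pathological behaviour and to get that the endpoints vary continuously with $\delta$. Once that is in hand, everything reduces to separating finitely many pairs of disjoint compact sets, which is routine. I therefore expect the write-up to be short: set up the monotonicity picture, dispose of \eqref{m22} for free, and then do the finite-pair separation argument for \eqref{eqsep1}.
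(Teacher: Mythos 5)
Your proof is correct and follows essentially the same route as the paper: \eqref{m22} is immediate since the interval avoids $[{\mathcal E}_{n+1},\infty)$, and \eqref{eqsep1} is obtained by using evenness to reduce to $k\ge 0$, where \eqref{m21} makes each $E_r$ invertible so the preimages are intervals with continuously varying endpoints around the distinct points $E_r^{-1}(E)$, and the finitely many pairs are separated for small $\delta$. (Only note that the ordering $k_s^*<k_r^*$ should be justified by the pointwise inequality $E_s(k)>E_r(k)$, which comes from simplicity of the spectrum of $\hat H(k)$, rather than from comparing $E_s(k)$ with $E_r(0)$.)
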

\begin{proof}
First, \eqref{m22} follows trivially from $[E-\delta ,E+\delta] \cap
[{\mathcal
E}_{n+1}, \infty) = \emptyset$.\\
Set $B_r : = E_r^{-1}([E-\delta ,E+\delta]) \cap [0,\infty)$, $r=1,\ldots,n$.
Since $E_r$ are even functions of $k$, it suffices to show that
    \bel{m20}
    B_r \cap B_s = \emptyset, \quad
    r \neq s, \quad r,s = 1,\ldots,n,
    \ee
      instead of \eqref{eqsep1}.
Denote by $E_r^{-1}$, $r \in {\mathbb N}$, the function inverse to
$E_r : [0,\infty) \to \re$. Since $[E-\delta ,E+\delta] \subset ({\mathcal
E}_n, \infty)$, this interval is in the domain of all the
functions $E_r^{-1}$, $r = 1, \ldots, n$, and we have
    $$
    B_r = [E_r^{-1}(E-\delta), E_r^{-1}(E+\delta)], \quad r = 1, \ldots, n.
    $$
Therefore, in order to prove that there exists $\delta \in (0,
{\rm dist}\,(E,{\mathcal Z}))$ such that \eqref{m20} holds true,
it suffices to show that there exists $\delta \in (0, {\rm
dist}\,(E,{\mathcal Z}))$ such that
$$
E_{r+1}^{-1}(E+\delta) < E_r^{-1}(E-\delta), \quad r=1, \ldots,
n-1,
$$
 which is evident since $E_{r+1}^{-1}(E) < E_r^{-1}(E)$,  the
 functions $E_r^{-1}$ are continuous, and $n-1$ is finite.
\end{proof}
\begin{lemma}
\label{diffpro} Assume \eqref{f4a}. Let   $\chi \in C_0^\infty
(\re)$. Then $\chi(H)-\chi(H_0) \in S_{\infty}$.
\end{lemma}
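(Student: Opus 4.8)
The plan is to reduce the claim to the fact that $VH_0^{-1}$ is compact via a resolvent-difference argument together with the Helffer--Sj\"ostrand formula (or, alternatively, via an explicit resolvent computation). First I would fix $z \in {\mathbb C} \setminus \re$ and compare the resolvents: writing $R_0(z) = (H_0 - z)^{-1}$ and $R(z) = (H-z)^{-1}$, the second resolvent identity gives $R(z) - R_0(z) = -R(z) V R_0(z)$, at least formally. Since $V$ is only form-bounded, I would make this rigorous by factorizing: with $|V|^{1/2}$ and $J = \operatorname{sign} V$ one has $R(z) - R_0(z) = -R(z)|V|^{1/2}\, J\, |V|^{1/2} R_0(z)$, and the key point is that $|V|^{1/2} R_0(z) \in S_\infty$. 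This follows from the hypothesis \eqref{f4a}: indeed $VH_0^{-1} \in S_\infty$ implies $|V|^{1/2} H_0^{-1/2} \in S_\infty$ (a standard consequence, using that $|V|^{1/2} H_0^{-1/2}$ is bounded with $\||V|^{1/2}H_0^{-1/2}\|^2$ controlled by $\|VH_0^{-1}\|$ and a polar-decomposition/interpolation argument), and then $|V|^{1/2} R_0(z) = (|V|^{1/2} H_0^{-1/2})(H_0^{1/2} R_0(z))$ is compact since the second factor is bounded. Hence $R(z) - R_0(z) \in S_\infty$ for every non-real $z$.

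Next I would upgrade from a single resolvent difference to $\chi(H) - \chi(H_0)$ for $\chi \in C_0^\infty(\re)$. The cleanest route is the Helffer--Sj\"ostrand representation
$$
\chi(H) - \chi(H_0) = \frac{1}{\pi} \int_{{\mathbb C}} \bar\partial \tilde\chi(z)\, \bigl(R(z) - R_0(z)\bigr)\, dx\,dy,
$$
where $\tilde\chi$ is an almost-analytic extension of $\chi$, compactly supported, with $|\bar\partial\tilde\chi(z)| \le C_N |\operatorname{Im} z|^N$ for any $N$. Since $S_\infty$ is a closed subspace of ${\mathcal B}$ and the integrand is a norm-continuous $S_\infty$-valued function whose norm is integrable (the bound $\|R(z)-R_0(z)\| \le 2/|\operatorname{Im} z|$ combined with $|\bar\partial\tilde\chi(z)| \lesssim |\operatorname{Im} z|$ makes the integral absolutely convergent as a Bochner integral in ${\mathcal B}$), the integral lies in $S_\infty$. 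Alternatively, one can avoid Helffer--Sj\"ostrand: first show $f(H) - f(H_0) \in S_\infty$ for $f(t) = (t - z_0)^{-1}$, then for polynomials in such resolvents, and finally for general $\chi \in C_0^\infty$ by a Stone--Weierstrass/density argument in $C_0(\re)$, using that $\|f(H) - f(H_0)\|_{S_\infty}$ is controlled in sup norm on the relevant class and that the resolvent differences are compact; the operator $H$ is semibounded, so $\chi(H)$ only sees a bounded part of the spectrum and approximation is legitimate.

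The main obstacle I anticipate is the rigorous handling of the form-sum $H = H_0 + V$: since $V$ need not be operator-bounded, the naive identity $R(z) - R_0(z) = -R(z) V R_0(z)$ requires justification, and one must check that $|V|^{1/2} R_0(z)^{1/2}$-type factors are well defined and that the relative form bound (guaranteed by compactness of $|V|^{1/2} H_0^{-1/2}$, which forces the relative bound to be arbitrarily small) makes $H$ a well-defined self-adjoint operator with $D(|H|^{1/2}) = D(H_0^{1/2})$. Once the factorization $R(z) - R_0(z) = -R(z)|V|^{1/2} J |V|^{1/2} R_0(z)$ is established with both outer factors bounded and $|V|^{1/2} R_0(z)$ compact, the rest is routine. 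Everything else — the Helffer--Sj\"ostrand bound, the closedness of $S_\infty$, the integrability — is standard.
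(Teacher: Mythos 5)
Your proposal is correct and rests on the same backbone as the paper's proof: represent $\chi(H)-\chi(H_0)$ by the Helffer--Sj\"ostrand formula, note that the resolvent difference is compact for every non-real $z$, and use the rapid vanishing of $\bar\partial\tilde\chi$ near the real axis together with the closedness of $S_\infty$ under norm limits to conclude that the integral is compact. The only real difference is how compactness of the resolvent difference is extracted: the paper keeps $V$ unfactorized and uses hypothesis \eqref{f4a} directly, writing the integrand as $(H-z)^{-1}V(H_0-z)^{-1}=(H-z)^{-1}\left(VH_0^{-1}\right)H_0(H_0-z)^{-1}$, whereas you factorize $V=|V|^{1/2}J|V|^{1/2}$ and rely on compactness of $|V|^{1/2}H_0^{-1/2}$. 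That compactness is precisely the paper's standing assumption from Subsection 2.2 (it is what makes the form sum $H=H_0+V$ well defined), so you may simply invoke it; your parenthetical claim that it also follows from \eqref{f4a} by a Heinz-inequality/interpolation argument is the one step you assert rather than prove, but it is unnecessary. Your factorized second resolvent identity is legitimate in the form-sum setting (form-boundedness gives $D(|H|^{1/2})=D(H_0^{1/2})$, hence $R(z)|V|^{1/2}$ is bounded), and you rightly flag its justification as the delicate point --- a point the paper's shorter argument passes over silently; the paper's version buys brevity and uses only the hypothesis as stated in the lemma, while yours is somewhat more robust with respect to the merely form-bounded perturbation.
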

\begin{proof}
By the Helffer-Sj\"ostrand formula, we have
$$
  \chi(H)-\chi(H_0)= - \frac{1}{\pi} \int_{\re^2} \frac{\partial
    \tilde{\chi}}{\partial \bar{z}} (H-z)^{-1} V (H_0 - z)^{-1}
    dx dy
$$
where $z = x+iy$, $\bar{z} = x - iy$, $\tilde{\chi}$ is the
quasi-analytic extension of $\chi$, and the convergence of the the
integral is understood in the operator-norm sense (see e.g.
\cite[Chapter 8]{dsj}). Since the support of $\tilde{\chi}$ is
compact in $\re^2$, and the operator $\frac{\partial
    \tilde{\chi}}{\partial \bar{z}} (H-z)^{-1} V (H_0 - z)^{-1}$ is compact for
every $(x,y) \in \re^2$ with $y \neq 0$, and is uniformly
norm-bounded   on $\re^2$, we have $\chi(H)-\chi(H_0) \in
S_{\infty}$.
\end{proof}
 Introduce the operator
$$
A = A^* = -\frac{i}{2} \left( y \frac{\partial}{\partial y} +
\frac{\partial}{\partial  y} y \right)
$$
defined originally on $C_0^{\infty}(\re_y; D(\hat{H}))$ and then
closed in $L^2({\mathcal S}_L)$. Note that
$$
(e^{itA} f) (x,y) =  e^{t/2} f(x, e^t y), \quad t \in \re, \quad f
\in L^2({\mathcal S}_L),
$$
 and the unitary group $e^{itA}$ preserves $D(H_0)$. In what follows,
 we will consider   $D(H_0^{\gamma})$, $\gamma > 0$, as a Hilbert space equipped with the scalar product
$\langle H_0^{\gamma} u, H_0^{\gamma} v \rangle_{L^2({\mathcal
S}_L)}$, $u, v \in D(H_0^{\gamma})$. Denote by
$D(H_0^{\gamma})^*$, $\gamma > 0$, the completion of
$L^2({\mathcal S}_L)$ with respect to the norm
$\|H_0^{-\gamma}u\|_{L^2({\mathcal S}_L)}$, $u \in L^2({\mathcal
S}_L)$.\\
 Note that $C_0^{\infty}(\re_y; D(\hat{H}))$ is dense in
$D(H_0)$, and, hence, $D(A) \cap D(H_0)$ is dense in $D(H_0)$.
\begin{pr}
\label{Moest2}  Assume \eqref{f4a} -- \eqref{f4}. Let $n \in
{\mathbb N}$, $E \in (\mathcal{E}_n,\mathcal{E}_{n+1})$. Assume
that $\delta \in (0, {\rm dist}(E, {\mathcal Z}))$ is chosen to
satisfy \eqref{m22} and \eqref{eqsep1} according to Lemma
\ref{lm-sep}. Let $\chi \in C_0^\infty (\re)$, ${\rm supp}\,\chi =
[E-\delta, E+\delta]$. Then there exists $K \in S_{\infty}$ and a
constant $C>0$ such that
    \bel{252}
    \chi(H) [H,iA] \chi(H) \geq C \chi(H)^2 + K
    \ee
where the commutator $[H,iA]$ is understood as a bounded operator
from $D(H_0)$ into $D(H_0)^*$.
\end{pr}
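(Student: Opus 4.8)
The plan is to establish first a strict Mourre estimate for the pair $(H_0,A)$, with no compact remainder, and then to transfer it to $H$ by means of Lemma \ref{diffpro} and hypothesis \eqref{f4}. After the partial Fourier transform $\mathcal F$ in $y$ one has $\mathcal F H_0\mathcal F^* = \int_\re^\oplus \hat H(k)\,dk$ and $\mathcal F(iA)\mathcal F^* = -\bigl(k\tfrac{d}{dk}+\tfrac12\bigr)$, and a short computation on $C_0^\infty(\re_k;D(\hat H))$ shows that the commutator is again a fibered operator,
\[
\mathcal F[H_0,iA]\mathcal F^* = \int_\re^\oplus k\,\hat H'(k)\,dk,\qquad \hat H'(k)=2(k-bx);
\]
since $(bx-k)^2\le\hat H(k)$ and $k^2\le 2\hat H(k)+2b^2L^2$ on $I_L$, this operator is bounded from $D(H_0^{1/2})$ into $D(H_0^{1/2})^*$, in particular from $D(H_0)$ into $D(H_0)^*$. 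Diagonalising each fiber as $\hat H(k)=\sum_j E_j(k)P_j(k)$ with $P_j(k)=\langle\psi_j(\cdot;k),\cdot\rangle\psi_j(\cdot;k)$, one gets $\chi(\hat H(k))=\sum_j\chi(E_j(k))P_j(k)$; since $\chi(E_j(k))\chi(E_l(k))\ne0$ forces $k\in E_j^{-1}([E-\delta,E+\delta])\cap E_l^{-1}([E-\delta,E+\delta])$, which is empty for $j\ne l$ by Lemma \ref{lm-sep} while $\chi(E_j(\cdot))\equiv0$ for $j\ge n+1$ by \eqref{m22}, the Feynman--Hellmann identity $P_j(k)\hat H'(k)P_j(k)=E_j'(k)P_j(k)$ yields
\[
\chi(\hat H(k))\,k\hat H'(k)\,\chi(\hat H(k)) = \sum_{j=1}^{n}\chi(E_j(k))^2\,kE_j'(k)\,P_j(k).
\]
For $j\le n$ the function $\chi(E_j(\cdot))$ is supported in the compact set $E_j^{-1}([E-\delta,E+\delta])$, which does not contain $0$ because $E_j(0)=\mathcal E_j\le\mathcal E_n<E-\delta$; there $kE_j'(k)$ is continuous and, by \eqref{m21}, strictly positive, hence bounded below by some $c_j>0$. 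With $C:=\min_{1\le j\le n}c_j>0$ one obtains $\chi(\hat H(k))\,k\hat H'(k)\,\chi(\hat H(k))\ge C\,\chi(\hat H(k))^2$ fiberwise, and integrating in $k$, $\chi(H_0)[H_0,iA]\chi(H_0)\ge C\,\chi(H_0)^2$.

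To pass to $H$, note that $VH_0^{-1}\in\mathcal B$ by \eqref{f4a}, so $D(H)=D(H_0)$ and $\chi(H)\colon L^2(\mathcal S_L)\to D(H_0)$ is bounded; moreover a direct computation gives $[V,iA]=-y\,\partial_y V$ on $D(H_0)$, so that $[H,iA]=[H_0,iA]-y\,\partial_y V$ as operators $D(H_0)\to D(H_0)^*$. First,
\[
\chi(H)\,(y\,\partial_y V)\,\chi(H) = \bigl(H_0\chi(H)\bigr)^*\,\bigl(H_0^{-1}y\,\partial_y V\,H_0^{-1}\bigr)\,\bigl(H_0\chi(H)\bigr)\in S_\infty
\]
by \eqref{f4}. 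Next put $R:=\chi(H)-\chi(H_0)$ and $g(\lambda):=\lambda\chi(\lambda)\in C_0^\infty(\re)$; then $R\in S_\infty$ by Lemma \ref{diffpro}, and since $H_0\chi(H)=g(H)-V\chi(H)$ one also has $H_0R=\bigl(g(H)-g(H_0)\bigr)-V\chi(H)\in S_\infty$ (again by Lemma \ref{diffpro} together with $VH_0^{-1}\in S_\infty$), so $R$ is compact both as a map $L^2\to D(H_0)$ and, by self-adjointness, as a map $D(H_0)^*\to L^2$. Since $[H_0,iA]$ is bounded $D(H_0)\to D(H_0)^*$ while $\chi(H_0)$ is bounded $L^2\to D(H_0)$ and $D(H_0)^*\to L^2$, expanding $\chi(H)[H_0,iA]\chi(H)=(\chi(H_0)+R)[H_0,iA](\chi(H_0)+R)$ shows that it differs from $\chi(H_0)[H_0,iA]\chi(H_0)$ by a compact operator, and likewise $\chi(H)^2-\chi(H_0)^2=\chi(H)R+R\chi(H_0)\in S_\infty$. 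Combining these facts with the free estimate,
\[
\chi(H)[H,iA]\chi(H) = \chi(H_0)[H_0,iA]\chi(H_0)+K' \ge C\,\chi(H_0)^2+K' = C\,\chi(H)^2+K
\]
with $K',K\in S_\infty$; since the left-hand side and $\chi(H)^2$ are symmetric, $K=K^*$, which is \eqref{252}.

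The heart of the argument is the purely spectral-geometric Step 1: the strict positivity $kE_j'(k)>0$ for $k\ne0$ (which relies on the energy window avoiding the band bottoms $\mathcal E_j$) together with the separation of the band preimages over that window provided by Lemma \ref{lm-sep} is exactly what produces a \emph{strictly positive} constant $C$ and removes the band-off-diagonal contributions, giving already the remainder-free estimate for $H_0$. I expect the main technical obstacle to be the bookkeeping in the scale $D(H_0)\subset L^2(\mathcal S_L)\subset D(H_0)^*$: one has to check carefully that $[H_0,iA]$ and $y\,\partial_y V$ really are bounded from $D(H_0)$ into $D(H_0)^*$ (using $(bx-k)^2\le\hat H(k)$ for the former and \eqref{f4} for the latter) and that each remainder produced by the substitution $\chi(H)=\chi(H_0)+R$ is genuinely compact on $L^2(\mathcal S_L)$, for which the identity $H_0\chi(H)=g(H)-V\chi(H)$ and Lemma \ref{diffpro} are the essential inputs.
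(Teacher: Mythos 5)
Your proposal is correct and takes essentially the same route as the paper: the fibered reduction combined with Lemma \ref{lm-sep}, the Feynman--Hellmann identity and \eqref{m21} gives the strict estimate $\chi(H_0)[H_0,iA]\chi(H_0)\geq C\chi(H_0)^2$, and the passage to $H$ uses Lemma \ref{diffpro} together with \eqref{f4a} and \eqref{f4} to absorb all remaining terms into a compact, self-adjoint remainder. The only (harmless) difference is bookkeeping: you establish compactness of the cross terms by showing $H_0\left(\chi(H)-\chi(H_0)\right)\in S_{\infty}$ via $g(\lambda)=\lambda\chi(\lambda)$, whereas the paper inserts $H_0H_0^{-1}$ and uses boundedness of $H_0^{-1}[H_0,iA]$ and $[H_0,iA]H_0^{-1}$.
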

\begin{proof}
 A straightforward calculation yields
    \bel{f10a}
 [H,iA] = [H_0,iA] + [V,iA]
  \ee
 where
    \bel{comm}
    [H_0,iA] = -2 \frac{\partial^2}{\partial y^2} +
2 i b x \frac{\partial}{\partial y},
    \ee
and
    \bel{26}
[V,iA]=-y \frac{\partial V(x,y)}{\partial y}.
    \ee
Evidently, $[H_0,iA]$ is a bounded operator from $D(H_0)$ into
$L^2({\mathcal S}_L)$, and, hence, is a bounded operator from
$D(H_0)$ into $D(H_0)^*$. On the other hand, $[V,iA]$ is a compact
operator from $D(H_0)$ into $D(H_0)^*$. Hence, $[H,iA]$ is a
bounded operator from $D(H_0)$ into $D(H_0)^*$.  Further, for
$\chi \in C_0^{\infty}(\re)$ we have
 \bel{23}
\chi(H_0) [H_0,iA] \chi(H_0)=
 \mathcal{F}^* \left(
2\sum_{r,s=1}^{\infty} \int_{\re}^{\oplus} \chi(E_r(k))
\chi(E_s(k)) k p_r(k) (k-bx) p_s(k) dk \right)\mathcal{F},
    \ee
    where
    \bel{m32}
    p_r(k) : = \langle \cdot, \psi_r(\cdot;k)\rangle \psi_r(\cdot;k),
\quad k \in \re, \quad r \in {\mathbb N},
    \ee
    $\psi_r(\cdot;k)$ being the eigenfunction defined in
    Subsection 2.4 before the formulation of Theorem \ref{t2}.
Using \eqref{m22} and \eqref{eqsep1}, we find that  (\ref{23})
reduces to
    \bel{239}
    \chi(H_0) [H_0,iA] \chi(H_0)  =
 2\mathcal{F}^* \left(\sum_{r=1}^{n} \int_{\re}^{\oplus}
 \chi(E_r(k))^2k
 \langle (k-bx) \psi_r(k) , \psi_r(k) \rangle p_r(k) dk \right)\mathcal{F}.
\ee This, combined with the Feynman-Hellmann formula
    \bel{f10}
E_r'(k) = 2 \langle (k-bx) \psi_r(k) , \psi_r(k) \rangle,
    \ee
yields
    \bel{24}
    \chi(H_0) [H_0,iA] \chi(H_0) =
\mathcal{F}^* \left( \sum_{r=1}^{n}  \int_{\re}^{\oplus} k E_r'(k)
\chi(E_r(k))^2 p_r(k) dk\right) \mathcal{F}.
    \ee
    Moreover, by
\eqref{m21}, we have
$$
k E_r'(k) \chi(E_r(k))^2  \geq C_r \chi(E_r(k))^2,
$$
with $C_r = \min_{k \in [E-\delta, E+\delta]} kE'_r(k) > 0$, $r =
1, \ldots, n$. Therefore,
 \bel{241}
 \chi(H_0) [H_0,iA] \chi(H_0)
\geq C  \mathcal{F}^*\left( \sum_{r=1}^{n} \int_{\re}^{\oplus}
\chi(E_r(k))^2 p_r(k)\right)  \mathcal{F} =C \chi(H_0) ^2, \ee
 where $C:= \min_{r=1,\ldots,n} C_r > 0$.
 By \eqref{f10a},
    \bel{27}
\chi(H) [H,i A] \chi(H) = \chi(H_0) [H_0,i A] \chi(H_0) + K_0,
    \ee
    where
$$
    K_0  = \chi(H_0) [H_0,i A]
\left( \chi(H)-\chi(H_0) \right) + \left( \chi(H)-\chi(H_0)
\right) [H_0,i A] \chi(H) - \chi(H)y \frac{\partial V}{\partial y}
\chi(H): =
$$
    $$
    K_1 + K_2 + K_3.
    $$
  We have
  $$
  K_1  =  \chi(H_0) H_0 H_0^{-1}[ H_0, iA ]\left( \chi(H)-\chi(H_0) \right),
  $$ and the operators
$\chi(H_0) H_0$ and $H_0^{-1}[ H_0, iA ]$ extend to bounded
operators in $L^2({\mathcal S}_L)$ (see \eqref{comm}). Since the
operator $\chi(H)-\chi(H_0)$ is compact by Lemma \ref{diffpro}, we
conclude that $K_1 \in S_{\infty}(L^2({\mathcal S}_L))$.
Similarly, taking into account that $\chi(H)-\chi(H_0)$ is
compact, and the operators $[ H_0,i A] H_0^{-1}$ and $H_0\chi(H) =
H\chi(H) - V \chi(H)$ are bounded, we get
 $$
 K_2=\left(
\chi(H)-\chi(H_0) \right) [ H_0,i A] H_0^{-1} H_0 \chi(H) \in
S_{\infty}(L^2({\mathcal S}_L)).
$$
 Finally, the  operator
$$
 K_3= \chi(H)  y
\frac{\partial V}{\partial y} \chi(H)=
 \chi(H)H_0 H_0^{-1} y \frac{\partial V}{\partial y} H_0^{-1} H_0 \chi(H)
 $$ is compact in $L^2({\mathcal S}_L)$ since $H_0^{-1} y \frac{\partial V}{\partial y} H_0^{-1}$
 is compact by \eqref{f4}, and  $\chi(H)H_0 = (H_0\chi(H))^*$ is
 bounded in $L^2({\mathcal S}_L)$.
Therefore, $K_0 = K_1+ K_2+K_3 \in S_{\infty}$. Combining
\eqref{241} and \eqref{27}, we get
    \bel{f20}
\chi(H) [H,i A] \chi(H) \geq C \chi(H_0)^2 + K_0 = C \chi(H)^2 +
K_0 + K_4,
    \ee
where $K_4 : = C\left(\chi(H_0)^2 - \chi(H)^2\right) \in
S_{\infty}$ by Lemma \ref{diffpro}. Hence \eqref{f20} implies
\eqref{252} with $K = K_0 + K_4$.
\end{proof}
For $E \in \re$ and $\delta > 0$ set $\Delta_E(\delta) : =
(E-\delta/2,E+\delta/2)$.
\begin{follow}
\label{Moest3} Assume \eqref{f4a} -- \eqref{f4}. Fix  $E \in
(\mathcal{E}_n,\mathcal{E}_{n+1})$, $n \in {\mathbb N}$. Let
$\delta \in (0, {\rm dist}\,(E,{\mathcal Z}))$  be chosen as in
Proposition
\ref{Moest2}. \\
(i) We have
    \bel{253}
    {\mathbb P}_{\Delta_E(\delta)}(H)[H,iA] {\mathbb
    P}_{\Delta_E(\delta)}(H)
\geq C {\mathbb
    P}_{\Delta_E(\delta)}(H) + \tilde{K}
    \ee
where $\tilde{K} : = {\mathbb P}_{\Delta_E(\delta)}(H) K {\mathbb
P}_{\Delta_E(\delta)}(H) \in S_{\infty}$, $C$ and $K$ being the
same as in
\eqref{252}.\\
(ii) Suppose moreover that $E \not \in \overline{\sigma_p(H)}$.
Then for $\delta' \in (0,\delta)$ small enough we have
    \bel{253a}
{\mathbb P}_{\Delta_E(\delta')}(H)[H,iA] {\mathbb
    P}_{\Delta_E(\delta')}(H)
\geq \frac{1}{2} C {\mathbb
    P}_{\Delta_E(\delta')}(H).
    \ee
\end{follow}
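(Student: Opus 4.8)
The plan is to derive Corollary \ref{Moest3} from the localized Mourre estimate \eqref{252} in Proposition \ref{Moest2} by a standard two-step argument: first transfer the estimate from $\chi(H)$ to a sharp spectral projector, then improve the compact remainder using the Virial theorem at a point that is not an accumulation point of eigenvalues.

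\medskip\textbf{Proof of (i).} Choose $\chi \in C_0^\infty(\re)$ with ${\rm supp}\,\chi = [E-\delta, E+\delta]$ and $\chi \equiv 1$ on $\overline{\Delta_E(\delta)} = [E-\delta/2, E+\delta/2]$; this is possible since $\Delta_E(\delta) \subset (E-\delta, E+\delta)$. Then ${\mathbb P}_{\Delta_E(\delta)}(H) = \chi(H){\mathbb P}_{\Delta_E(\delta)}(H) = {\mathbb P}_{\Delta_E(\delta)}(H)\chi(H)$. Multiplying \eqref{252} on both sides by ${\mathbb P}_{\Delta_E(\delta)}(H)$ and using this identity together with ${\mathbb P}_{\Delta_E(\delta)}(H)\chi(H)^2 {\mathbb P}_{\Delta_E(\delta)}(H) = {\mathbb P}_{\Delta_E(\delta)}(H)$ gives \eqref{253} with $\tilde K := {\mathbb P}_{\Delta_E(\delta)}(H) K {\mathbb P}_{\Delta_E(\delta)}(H) \in S_\infty$, since $K \in S_\infty$ and $S_\infty$ is a two-sided ideal. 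One should note that the commutator $[H,iA]$ is sandwiched between the bounded (on $D(H_0)$) projectors ${\mathbb P}_{\Delta_E(\delta)}(H)$, which map $L^2({\mathcal S}_L)$ into $D(H_0)$, so the products make sense as bounded operators in $L^2({\mathcal S}_L)$; this is the same domain consideration already invoked in \eqref{252}.

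\medskip\textbf{Proof of (ii).} This is the usual passage from a Mourre estimate with a compact error to a strict Mourre estimate on a smaller interval, valid at a point $E$ which is not in $\overline{\sigma_p(H)}$. Since the eigenvalues of $H$ do not accumulate at $E$ (by part (i) and the standard Mourre consequence, or directly by hypothesis $E \notin \overline{\sigma_p(H)}$), for $\delta'$ small enough the interval $\overline{\Delta_E(\delta')}$ contains no eigenvalue of $H$, so ${\mathbb P}_{\Delta_E(\delta')}(H) = {\mathbb P}_{\Delta_E(\delta')}(H) - {\mathbb P}_{\{E\}}(H)$ tends strongly to $0$ as $\delta' \downarrow 0$ after subtracting nothing, and more to the point ${\mathbb P}_{\Delta_E(\delta')}(H) \to 0$ strongly while $\tilde K$ is compact, hence $\|{\mathbb P}_{\Delta_E(\delta')}(H)\, K\, {\mathbb P}_{\Delta_E(\delta')}(H)\| \to 0$. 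Choosing $\delta'$ so small that this norm is $< C/2$, and using ${\mathbb P}_{\Delta_E(\delta')}(H) = {\mathbb P}_{\Delta_E(\delta')}(H){\mathbb P}_{\Delta_E(\delta)}(H)$, we multiply \eqref{253} (with $\delta$ replaced appropriately, or restricted) by ${\mathbb P}_{\Delta_E(\delta')}(H)$ to get
$$
{\mathbb P}_{\Delta_E(\delta')}(H)[H,iA]{\mathbb P}_{\Delta_E(\delta')}(H) \geq C{\mathbb P}_{\Delta_E(\delta')}(H) + {\mathbb P}_{\Delta_E(\delta')}(H)K{\mathbb P}_{\Delta_E(\delta')}(H) \geq \tfrac12 C {\mathbb P}_{\Delta_E(\delta')}(H),
$$
which is \eqref{253a}.

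\medskip\textbf{Main obstacle.} The only genuinely delicate point is the claim in (ii) that $\|{\mathbb P}_{\Delta_E(\delta')}(H)\,K\,{\mathbb P}_{\Delta_E(\delta')}(H)\| \to 0$ as $\delta' \downarrow 0$; this rests on the fact that ${\mathbb P}_{\Delta_E(\delta')}(H) \to {\mathbb P}_{\{E\}}(H)$ strongly, together with $K$ compact, and on knowing ${\mathbb P}_{\{E\}}(H) = 0$, i.e. that $E$ is not an eigenvalue. The hypothesis $E \notin \overline{\sigma_p(H)}$ gives exactly this, and moreover guarantees that ${\mathbb P}_{\Delta_E(\delta')}(H) \to 0$ strongly (not merely to a rank-one projector), so that $\|K^{1/2}{\mathbb P}_{\Delta_E(\delta')}(H)\| \to 0$ for the compact factor $K$ (after the standard reduction writing a compact operator as a norm limit of finite-rank ones); squaring up then yields the norm convergence of the sandwiched error. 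Everything else is bookkeeping with spectral projectors and the ideal property of $S_\infty$.
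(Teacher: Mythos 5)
Your proof is correct and takes essentially the same route as the paper: for (i) you multiply the Mourre estimate \eqref{252}, with $\chi\equiv 1$ on $\Delta_E(\delta)$, from both sides by ${\mathbb P}_{\Delta_E(\delta)}(H)$, and for (ii) you use $E\notin\overline{\sigma_p(H)}$ to get ${\rm s}-\lim_{\delta'\downarrow 0}{\mathbb P}_{\Delta_E(\delta')}(H)=0$, hence norm-smallness of the compact error sandwiched by the projections, which is absorbed into $\tfrac12 C{\mathbb P}_{\Delta_E(\delta')}(H)$ — exactly the argument of \cite[Lemma 4.8]{CFKS} followed in the paper. The only cosmetic slip is the parenthetical appeal to $K^{1/2}$ (the compact operator $\tilde K$ need not be nonnegative), but your main argument via finite-rank approximation does not need it.
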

\begin{proof}
Choose $\chi$ in $\eqref{252}$ to be equal to one on
$\Delta_E(\delta)$, and multiply \eqref{252} from the left and the
right by ${\mathbb P}_{\Delta_E(\delta)}(H)$. Thus we get
\eqref{253}. In order to obtain \eqref{253a}, we repeat the
argument of the proof of \cite[Lemma 4.8]{CFKS}. Pick $\delta' \in
(0,\delta)$ and multiply \eqref{253} from the right and the left
by ${\mathbb P}_{\Delta_E(\delta')}(H)$. We get
    \bel{f80}
{\mathbb P}_{\Delta_E(\delta')}(H)[H,iA] {\mathbb
    P}_{\Delta_E(\delta')}(H)
\geq C {\mathbb
    P}_{\Delta_E(\delta')}(H) + {\mathbb P}_{\Delta_E(\delta')}(H)\tilde{K}{\mathbb
    P}_{\Delta_E(\delta')}(H).
    \ee
Since $E \not \in \overline{\sigma_p(H)}$  and, hence, ${\rm
s}-\lim_{\delta' \downarrow 0}{\mathbb P}_{\Delta_E(\delta')}(H) =
0$, while $\tilde{K}$ is compact, we have ${\rm n}-\lim_{\delta'
\downarrow 0}{\mathbb P}_{\Delta_E(\delta')}(H)\tilde{K}{\mathbb
    P}_{\Delta_E(\delta')}(H) = 0$. Choose $\delta' \in (0,\delta)$
    so small that $$\|{\mathbb
P}_{\Delta_E(\delta')}(H)\tilde{K}{\mathbb
    P}_{\Delta_E(\delta')}(H)\| \leq C/2$$ which implies
    \bel{f81}
    {\mathbb
P}_{\Delta_E(\delta')}(H)\tilde{K}{\mathbb
    P}_{\Delta_E(\delta')}(H) \geq -\frac{1}{2} C {\mathbb
P}_{\Delta_E(\delta')}(H).
    \ee
Combining \eqref{f80} with \eqref{f81}, we obtain \eqref{253a}.
\end{proof}
Since the unitary group $e^{itA}$ preserves $D(H_0)$, and $[H,iA]
: D(H_0) \to D(H_0)^*$ is a bounded operator, the Mourre estimate
\eqref{253}  entails the following
\begin{follow} \label{valp} {\rm \cite{Mo}, \cite[Theorem 4.7]{CFKS}, \cite{GG}}
Assume \eqref{f4a} -- \eqref{f4}. Let $E$, $\delta$, and
$\Delta_E(\delta)$ be as in Corollary \ref{Moest3}. Then
$\Delta_E(\delta)$ contains at most finitely  many eigenvalues of
$H$, each of them having a finite multiplicity.
\end{follow}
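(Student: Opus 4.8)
This is the classical virial-type consequence of a Mourre estimate, and the corollary statement itself points to the abstract references \cite{Mo}, \cite[Theorem~4.7]{CFKS}, \cite{GG}. The plan is to combine the \emph{virial theorem} with the Mourre estimate \eqref{253} of Corollary~\ref{Moest3}(i) and the compactness of the remainder $\tilde K$. The regularity ingredients needed for the virial theorem are already in hand: the unitary group $e^{itA}$ leaves $D(H_0)$ invariant (noted before Proposition~\ref{Moest2}), and $[H,iA]$ extends to a bounded operator from $D(H_0)$ into $D(H_0)^*$ (established in the proof of Proposition~\ref{Moest2}). Moreover, under \eqref{f4a} one has $V(H_0+\mu)^{-1}\to 0$ in norm as $\mu\to\infty$ (since $VH_0^{-1}\in S_\infty$ and $H_0(H_0+\mu)^{-1}\to 0$ strongly), so for $\mu$ large $H_0+V+\mu=(I+V(H_0+\mu)^{-1})(H_0+\mu)$ is boundedly invertible on $D(H_0)$; hence $D(H)=D(H_0)$ and the operator sum coincides with the form sum. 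These are exactly the hypotheses under which the virial theorem holds: if $\psi\in D(H)$ satisfies $H\psi=\lambda\psi$, then $\langle\psi,[H,iA]\psi\rangle=0$, the bracket being understood via the $D(H_0)$--$D(H_0)^*$ duality.

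Granting this, I would argue by contradiction. Suppose $\Delta_E(\delta)$ contains infinitely many eigenvalues of $H$ counted with multiplicity. Then there is an orthonormal sequence $\{\psi_k\}_{k\ge 1}\subset L^2({\mathcal S}_L)$ of eigenfunctions of $H$ with eigenvalues $\lambda_k\in\Delta_E(\delta)$. Each $\psi_k$ lies in the range of ${\mathbb P}_{\Delta_E(\delta)}(H)$, so ${\mathbb P}_{\Delta_E(\delta)}(H)\psi_k=\psi_k$, and \eqref{253} together with the virial identity gives
\[
0=\langle\psi_k,[H,iA]\psi_k\rangle\ \ge\ C\,\|\psi_k\|^2+\langle\psi_k,\tilde K\psi_k\rangle\ =\ C+\langle\psi_k,\tilde K\psi_k\rangle .
\]
Since $\{\psi_k\}$ is orthonormal it converges weakly to $0$, and $\tilde K\in S_\infty$, whence $\langle\psi_k,\tilde K\psi_k\rangle\to 0$ as $k\to\infty$. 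Passing to the limit yields $0\ge C>0$, a contradiction. Therefore $\Delta_E(\delta)$ contains only finitely many eigenvalues of $H$ counted with multiplicity, which is the assertion; in particular each eigenvalue in $\Delta_E(\delta)$ has finite multiplicity.

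The only genuinely non-formal point — and the step I expect to be the main obstacle — is the virial theorem: proving $\langle\psi,[H,iA]\psi\rangle=0$ for eigenfunctions $\psi$ that need not a priori belong to $D(A)$. I would handle this exactly as in \cite[Theorem~4.7]{CFKS}: regularize $A$ by the bounded self-adjoint operators $A_\varepsilon:=A(1+i\varepsilon A)^{-1}$, use the invariance of $D(H)=D(H_0)$ under $e^{itA}$ and the boundedness of $[H,iA]:D(H_0)\to D(H_0)^*$ to justify $\langle\psi,[H,iA_\varepsilon]\psi\rangle=\langle(H-\lambda)\psi,iA_\varepsilon\psi\rangle+\overline{\langle(H-\lambda)\psi,iA_\varepsilon\psi\rangle}=0$, and then let $\varepsilon\downarrow 0$. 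Everything else in the argument is routine, so the corollary follows.
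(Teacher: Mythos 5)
Your argument is correct and is essentially the paper's own: the paper obtains this corollary directly from the Mourre estimate \eqref{253} by citing \cite{Mo}, \cite[Theorem 4.7]{CFKS} and \cite{GG}, after recording exactly the two regularity facts you verify (invariance of $D(H_0)$ under $e^{itA}$ and boundedness of $[H,iA]:D(H_0)\to D(H_0)^*$). Your write-up simply unfolds that cited argument --- the virial theorem combined with weak convergence of an orthonormal sequence of eigenfunctions against the compact remainder $\tilde{K}$ --- which is precisely the content of \cite[Theorem 4.7]{CFKS}, with the delicate virial step handled by the standard regularization of $A$ (the point for which the paper also cites \cite{GG}).
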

Now we are in position to prove Theorem \ref{t1}. Let $\Delta
\subset \re \setminus {\mathcal Z}$ be a compact interval. If
$\Delta \subset (-\infty, {\mathcal E}_1)$, then $\Delta \cap
\sigma_{\rm ess}(H) = \emptyset$ and $\Delta$ may contain at most
a finite number of eigenvalues, each having a finite multiplicity.
Assume $\Delta \subset ({\mathcal E}_n, {\mathcal E}_{n+1})$, $n
\in {\mathbb N}$. For each $E \in \Delta$ choose $\delta =
\delta(E)$ as in Proposition \ref{Moest2}. Then we have $\Delta
\subset \cup_{E \in \Delta}   \Delta_E(\delta)$. Since $\Delta$ is
compact, there exists a finite set $\{E_j\}_{j=1}^{N}$ of energies
$E_j \in \Delta$ such that
    \bel{f6}
    \Delta \subset \cup_{j=1}^N   \Delta_{E_j}(\delta).
    \ee
Assume \eqref{f4a} -- \eqref{f4}. Then \eqref{f6} and  Corollary
\ref{valp} imply that $\Delta$ may contain at most a finite number
of eigenvalues, each having a finite multiplicity. Hence, the
first part of
Theorem \ref{t1} is proved. \\
Assume moreover \eqref{f5a} -- \eqref{f5}.   It follows from
\eqref{f5a}  that $[H,iA]$ extends to a bounded operator from
$D(H_0)$ to $D(H_0^{1/2})^*$, while \eqref{f5} combined with
\eqref{f4}, implies that the second commutator $[[H,iA],iA]$
extends to a bounded operator from $D(H_0)$ to $D(H_0)^*$. Then
Corollary 3.1 ii) together with the results of \cite[Corollary
4.10]{CFKS} and \cite{GG} (see also \cite{Mo}) imply that
$\sigma_{\rm sc}(H) \cap \left(({\mathcal E}_n, {\mathcal
E}_{n+1}) \setminus \overline{\sigma_p(H)}\right) = \emptyset$, $n
\in {\mathbb N}$. Since the set $({\mathcal E}_n, {\mathcal
E}_{n+1}) \cap \overline{\sigma_p(H)}$ is at most discrete, we get
$\sigma_{\rm sc}(H) \cap ({\mathcal E}_n, {\mathcal E}_{n+1}) =
\emptyset$, $n \in {\mathbb N}$. Finally, since ${\mathcal E}_1 =
\inf \sigma_{\rm ess}(H)$ we have $\sigma_{\rm sc}(H) \cap
(-\infty, {\mathcal E}_1) = \emptyset$. Therefore, $\sigma_{\rm
sc}(H) \cap (\re \setminus {\mathcal Z}) = \emptyset$. Since
${\mathcal Z}$ is discrete, $\sigma_{\rm sc}(H) = \emptyset$. The
second part of Theorem \ref{t1} is now proved too. \\
{\em Remark}:   Mourre estimates and their corollaries concerning
the spectrum of $H$ could be also deduced from the general scheme
for analytically fibered operators developed in \cite{gn}. The
advantage of our approach is that it relies on an explicit and
simple conjugate operator $A$, and offers an explicit description
of the ``exceptional set'' ${\mathcal Z}$.

\section{Analysis of the Spectral Shift Function}
\setcounter{equation}{0}
    {\bf 4.1.}  In this subsection we summarize some
    simple properties of compact operators which will be systematically used in the sequel. For
$s>0$ and $T^* = T  \in S_{\infty}$ set
$$
n_{\pm}(s;T): = {\rm rank}\; {\mathbb P}_{(s,\infty)}(\pm T).
$$
For an arbitrary (not necessarily self-adjoint) operator $T\in
S_{\infty}$ put
\begin{equation} \label{31}
n_*(s;T): = n_+(s^2;T^*T), \quad s>0.
\end{equation}
If $T = T^*$, then evidently
\begin{equation} \label{num}
  n_*(s;T) = n_+(s,T) + n_-(s;T), \quad s>0.
\end{equation}
If $T_1, T_2 \in S_{\infty}$, and $s_1>0$, $s_2>0$, then the well
known Weyl -- Ky Fan inequalities
    \bel{a19}
    n_*(s_1 + s_2; T_1 + T_2) \leq n_*(s_1; T_1) + n_*(s_2; T_2)
    \ee
    hold true. Moreover, if $T_j = T_j^*$, $T_1 \in S_{\infty}$, and ${\rm rank}\,T_2 < \infty$, we have
\bel{m15}
 n_{\pm}(s; T_1) - {\rm rank}\,T_2 \leq   n_{\pm}(s; T_1 + T_2)
 \leq n_{\pm}(s; T_1) + {\rm rank}\,T_2, \quad s>0.
    \ee
 If $T \in S_p$, $p \in
[1,\infty)$, then the following elementary Chebyshev-type
inequality
\begin{equation} \label{32a}
n_*(s;T) \leq s^{-p} \|T\|_p^p
\end{equation}
holds for every $s>0$.\\

{\bf 4.2.} In this subsection we introduce the concepts of index
of a Fredholm pair of orthogonal projections, and index for a pair
of selfadjoint operators, and discuss some of their properties.
More details can be found in \cite{ASS} and \cite{BPR}.\\
A pair of orthogonal projections $(P, Q)$ is said to be Fredholm
if
$$
\left\{-1,1\right\} \cap \sigma_{\rm ess}(P-Q) = \emptyset.
$$
In particular, if $P-Q \in S_{\infty}$, then the pair $(P,Q)$ is Fredholm. \\
Assume that the pair of orthogonal projections $(P,Q)$ is
Fredholm. Set
$$
{\rm index}(P,Q)  : = {\rm dim \; Ker}\;(P-Q-I) - {\rm dim \;
Ker}\;(P-Q+I).
$$
Let $\tilde{M}$, $M$, be bounded self-adjoint operators. If the
spectral projections ${\mathbb P}_{(-\infty,0)}(\tilde{M})$ and
${\mathbb P}_{(-\infty,0)}(M)$ form a Fredholm pair, we will use
the notation
$$
{\rm ind}(\tilde{M}, M)  : = {\rm index}({\mathbb
P}_{(-\infty,0)}(\tilde{M}),  {\mathbb P}_{(-\infty,0)}(M)).
$$
A sufficient condition that  the pair ${\mathbb
P}_{(-\infty,0)}(\tilde{M}), {\mathbb P}_{(-\infty,0)}(M)$ be
Fredholm, is $\tilde{M} = M + A$ where $M$ is a bounded
self-adjoint operator such that $0 \not \in \sigma_{\rm ess}(M)$,
and $A = A^* \in S_{\infty}$.
 \begin{lemma} \label{lparis2} {\rm  \cite[Subsection 3.2]{BPR}}
 Let $M$ be a bounded self-adjoint operator such that
$0 \not \in \sigma(M)$. Let $A$ and $B$ be compact self-adjoint
operators. Then for $s \in (0,\infty)$ such that $[-s,s] \cap
\sigma(M) = \emptyset$ we have
    \bel{paris2} {\rm ind}(M+s+B, M+s)
- n_+(s;A) \leq {\rm ind}(M+A+B, M) \leq {\rm ind}(M-s+B, M-s) +
n_-(s;A).
    \ee
    Assume, moreover, that the rank of $A$ is finite.
    Then we have
    \bel{a20a}
    {\rm ind}(M+B, M)
- {\rm rank}\, A \leq {\rm ind}(M+A+B, M) \leq {\rm ind}(M+B, M) +
{\rm rank}\, A.
    \ee
\end{lemma}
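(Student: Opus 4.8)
The plan is to prove Lemma~\ref{lparis2} by a two-step reduction, first isolating the role of the compact perturbation $A$ via a continuity/monotonicity argument for the index, and then dealing with the finite-rank case by a direct rank estimate.

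\textbf{Step 1: the sandwich \eqref{paris2}.} I would start from the observation that, since $0 \notin \sigma(M)$, one may pick $s>0$ with $[-s,s] \cap \sigma(M) = \emptyset$, so that both $M+s$ and $M-s$ are again bounded self-adjoint operators with $0$ in their resolvent set, and ${\mathbb P}_{(-\infty,0)}(M\pm s) = {\mathbb P}_{(-\infty,\mp s)}(M)$. The key algebraic input is the interlacing of the negative spectral subspaces: because $A = A^* \in S_\infty$ and $-A \leq \|A\| \, I$, one has the operator inequalities that push the spectral projection of $M+A+B$ between those of $M-s+B$ and $M+s+B$ up to a subspace of dimension $n_\pm(s;A)$, namely the spectral subspace of $\pm A$ above $s$. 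Concretely, I would use the monotonicity of ${\rm index}(\cdot,\cdot)$ under such perturbations: writing $M+A+B = (M+B) + A$ and splitting $A = A_{>s} + A_{\leq s}$ where $A_{>s}$ is the part of $A$ with $|A| > s$ (finite rank, rank $\leq n_+(s;A)+n_-(s;A)$) and $\|A_{\leq s}\| \leq s$, the bounded piece $A_{\leq s}$ can be absorbed into the gap $[-s,s]$ of $M$ without changing which side of $0$ the relevant eigenvalues land on, i.e.\ it does not change the index relative to $M \mp s$; the finite-rank piece changes the index by at most its rank on the appropriate side. This is exactly the mechanism behind the stated inequalities, and it is essentially the argument of \cite[Subsection 3.2]{BPR}, which I would cite and reproduce in the needed generality. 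The cleanest route is to invoke the known stability properties of ${\rm ind}$ (invariance under compact perturbations of norm controlled by the spectral gap, and the finite-rank bound) together with the identity ${\rm ind}(M+A+B, M) = {\rm ind}(M+A+B, M\pm s) + {\rm ind}(M\pm s, M)$, the last term vanishing because $M$ and $M\pm s$ have the same negative spectral subspace is \emph{not} true — rather ${\rm ind}(M+s, M) = 0$ follows since $M$ and $M+s$ differ by a compact-free shift that does not cross $0$, hence their negative spectral projections coincide modulo finite rank only if $\sigma(M) \cap (-s,0) $ is finite; here we use $0 \notin \sigma(M)$ and the gap condition to get exact coincidence. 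I would phrase this carefully.

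\textbf{Step 2: the finite-rank refinement \eqref{a20a}.} Here $A$ has finite rank, so one no longer needs the splitting: directly, ${\mathbb P}_{(-\infty,0)}(M+A+B)$ and ${\mathbb P}_{(-\infty,0)}(M+B)$ differ by a finite-rank-controlled amount because $M+A+B$ and $M+B$ differ by the finite-rank operator $A$. The standard fact (from \cite{ASS}, \cite{BPR}) is that if $N_1 - N_2$ has rank $\leq m$ then $|{\rm index}({\mathbb P}_{(-\infty,0)}(N_1), {\mathbb P}_{(-\infty,0)}(N_2))| $ changes by at most $m$ when one replaces one operator by the other, which gives \eqref{a20a} with ${\rm rank}\,A$ in place of $m$. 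Alternatively, \eqref{a20a} follows from \eqref{paris2} by letting $s \downarrow 0$: when ${\rm rank}\,A < \infty$ we have $n_\pm(s;A) \to {\rm rank}\,A_\pm \leq {\rm rank}\,A$ as $s \downarrow 0$ (in fact $n_\pm(s;A)$ is eventually constant), and ${\rm ind}(M \pm s + B, M \pm s) \to {\rm ind}(M+B,M)$ by continuity of the index in $s$ across the gap $(-s,s)$ where $M$ has no spectrum; this continuity is where one uses $0 \notin \sigma(M)$ once more, ensuring the spectral projections ${\mathbb P}_{(-\infty,0)}(M \pm s + B)$ vary norm-continuously (hence the index is locally constant) for $s$ small. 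I would present the limiting argument, since it reuses \eqref{paris2} and avoids re-deriving the finite-rank stability from scratch.

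\textbf{Main obstacle.} The delicate point is justifying that ${\rm ind}(M+s+B, M+s)$ and ${\rm ind}(M-s+B,M-s)$ are genuinely controlled by ${\rm ind}(M+A+B,M)$ in the claimed asymmetric way — i.e.\ getting the \emph{direction} of each inequality right, with the correct choice between $n_+$ and $n_-$ and between $+s$ and $-s$. This amounts to tracking how the negative spectral subspace of $M+A+B$ sits between those of $M\pm s + B$: adding $A$ can only move eigenvalues downward by at most $\|A_+\|$ worth on an $n_+(s;A)$-dimensional space (and up by $\|A_-\|$ on an $n_-(s;A)$-dimensional space), which after comparing against the spectral gap of $M$ translates into the displayed one-sided bounds. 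I expect the bookkeeping of these inequalities — rather than any deep analytic difficulty — to be the crux, and I would handle it by carefully applying the general index-perturbation lemmas of \cite{BPR} with the monotone and finite-rank perturbation split of $A$ described above.
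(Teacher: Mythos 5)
First, note that the paper itself does not prove this lemma: it is quoted verbatim with the citation \cite[Subsection 3.2]{BPR}, so your sketch has to be measured against the standard argument given there (monotonicity of ${\rm ind}(\cdot,M)$ in its first argument, stability under finite-rank perturbations, and the identity ${\mathbb P}_{(-\infty,0)}(M\pm s)={\mathbb P}_{(-\infty,0)}(M)$ coming from the gap $[-s,s]\cap\sigma(M)=\emptyset$). Your Step 1 has the right ingredients but the decomposition you actually write down, $A=A_{>s}+A_{\leq s}$ with $A_{>s}$ the spectral part where $|A|>s$, only yields the symmetric estimate with ${\rm rank}\,A_{>s}=n_+(s;A)+n_-(s;A)$ on \emph{both} sides, not the asymmetric bounds of \eqref{paris2}. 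To get the stated inequalities one must use the one-sided cut-offs: $A\leq s\,I+(A-s){\mathbb P}_{(s,\infty)}(A)$, whose finite-rank part has rank $n_+(s;A)$, gives (via monotonicity and the finite-rank bound) the left inequality ${\rm ind}(M+A+B,M)\geq{\rm ind}(M+s+B,M)-n_+(s;A)={\rm ind}(M+s+B,M+s)-n_+(s;A)$, and symmetrically $A\geq -s\,I+(A+s){\mathbb P}_{(-\infty,-s)}(A)$ gives the right inequality with $n_-(s;A)$. You gesture at exactly this in your ``main obstacle'' paragraph, but you never replace the symmetric split by the one-sided one, and in the end you defer the bookkeeping to ``the general index-perturbation lemmas of \cite{BPR}'' --- i.e.\ to the very result being proved.

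The more serious gap is in Step 2, where you commit to deriving \eqref{a20a} from \eqref{paris2} by letting $s\downarrow 0$. Since $M\mp s+B=(M+B)\mp s$, one has ${\mathbb P}_{(-\infty,0)}(M-s+B)={\mathbb P}_{(-\infty,s)}(M+B)$, which differs from ${\mathbb P}_{(-\infty,0)}(M+B)$ by the finite-rank spectral projection onto $[0,s)$; as $s\downarrow 0$ this does \emph{not} vanish if $0$ is an eigenvalue of $M+B$ (which $0\notin\sigma(M)$ does not exclude --- it only excludes $0\in\sigma_{\rm ess}(M+B)$). Consequently ${\rm ind}(M-s+B,M-s)\to{\rm ind}(M+B,M)+\dim\ker(M+B)$, your claim that the projections ``vary norm-continuously'' for small $s$ fails at that point, and the limit of the upper bound in \eqref{paris2} is ${\rm ind}(M+B,M)+\dim\ker(M+B)+{\rm rank}\,A_-$, which is weaker than \eqref{a20a}. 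The correct route is the direct one you mention only in passing: apply the chain rule for the index of Fredholm pairs together with the elementary bound $|{\rm ind}(N+A,N)|\leq{\rm rank}\,A$ for $N=M+B$ (a rank-$m$ self-adjoint perturbation moves at most $m$ eigenvalues across $0$, by min-max), which is precisely what \cite[Subsection 3.2]{BPR} supplies. As written, then, the proposal identifies the right circle of ideas but the two steps it actually commits to --- the symmetric splitting and the $s\downarrow 0$ limit --- do not prove the lemma as stated.
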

{\em Remark}: Note that in the case $B=0$, estimates
\eqref{paris2} imply
    \bel{a20}
|{\rm ind}(M+A, M)| \leq n_*(s;A)
    \ee
    for any $s>0$ such that $[-s,s] \cap
\sigma(M) = \emptyset$.
\begin{lemma} \label{l31} {\rm \cite[Lemma 2.1]{P1}, \cite[Subsection 3.3]{BPR}}
Let $M$ be a bounded self-adjoint operator such that
$0 \not \in \sigma(M)$. Let $T_1 = T_1^* \in S_{\infty}$ and $T_2
= T_2^* \in S_1$. Then for each $s_1>0$, $s_2 >0$ such that
$[-s,s] \cap \sigma(M) = \emptyset$ with $s=s_1+s_2$,  we have
\begin{equation} \label{pu7}
\int_{\re} \left|{\rm ind}(M + T_1 + t \, T_2, M)\right| \,d\mu(t)
\leq n_{*}(s_1;T_1) + \frac{1}{\pi s_2}\|T_2\|_1
\end{equation}
where $d\mu(t): = \frac{1}{\pi} \frac{dt}{1+t^2}$.
\end{lemma}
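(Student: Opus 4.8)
The plan is to integrate the pointwise bound $|{\rm ind}(M + T_1 + t\,T_2, M)| \leq n_*(s; T_1 + t\,T_2)$, which follows from the Remark after Lemma \ref{lparis2} (i.e.\ from \eqref{a20}) applied to the compact self-adjoint perturbation $A = T_1 + t\,T_2$ and to any $s>0$ with $[-s,s]\cap\sigma(M)=\emptyset$. Since $s = s_1 + s_2$ satisfies this condition, and $T_1 + t\,T_2 = T_1 + (t\,T_2)$ is a sum of two compact self-adjoint operators, the Weyl--Ky Fan inequality \eqref{a19} gives
$$
n_*(s_1 + s_2; T_1 + t\,T_2) \leq n_*(s_1; T_1) + n_*(s_2; t\,T_2), \quad t \in \re.
$$
Hence it remains to integrate the second term against $d\mu$. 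Here I would use the Chebyshev-type bound \eqref{32a} in the Schatten class $S_1$: since $t\,T_2 \in S_1$ with $\|t\,T_2\|_1 = |t|\,\|T_2\|_1$, we get $n_*(s_2; t\,T_2) \leq s_2^{-1} |t|\, \|T_2\|_1$.

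Putting these together,
$$
\int_{\re} \left|{\rm ind}(M + T_1 + t\,T_2, M)\right|\, d\mu(t)
\leq n_*(s_1; T_1) \int_{\re} d\mu(t) + \frac{\|T_2\|_1}{s_2} \int_{\re} |t|\, d\mu(t).
$$
The first integral equals $1$ because $d\mu$ is a probability measure. The remaining point is the evaluation of $\int_{\re} |t|\, d\mu(t) = \frac{1}{\pi}\int_{\re} \frac{|t|}{1+t^2}\,dt$ --- which diverges. This signals that the naive bound above is too lossy, and that the actual proof must exploit a sign/cancellation feature: for $|t|$ large, the operator $M + T_1 + t\,T_2$ is dominated by $t\,T_2$, so ${\rm ind}(M + T_1 + t\,T_2, M)$ stabilizes to $\pm\,{\rm rank}$ of the negative (resp.\ positive) spectral subspace of $T_2$ rather than growing like $|t|$. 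The right tool is therefore \emph{not} \eqref{a20} uniformly in $t$, but rather \eqref{paris2} from Lemma \ref{lparis2} with $B = t\,T_2$: this yields, for each fixed $s_1, s_2$ with $s = s_1+s_2$,
$$
{\rm ind}(M + t\,T_2, M + s) - n_+(s_1; T_1) \leq {\rm ind}(M + T_1 + t\,T_2, M) \leq {\rm ind}(M + t\,T_2, M - s) + n_-(s_1; T_1).
$$
The outer index terms involve only the rank-one-in-$t$ family $t\,T_2$ against a fixed invertible operator, and $\int_{\re} |{\rm ind}(M + t\,T_2, M \pm s)|\, d\mu(t)$ is finite: the integrand is bounded, integer-valued, and --- by the monotonicity of ${\rm ind}(M + t\,T_2, M\pm s)$ in $t$ on each half-line --- its $\mu$-integral is controlled by $\frac{1}{\pi s_2}\|T_2\|_1$ via the trace-class norm. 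Concretely, writing the index as a difference of counting functions of eigenvalues of $M \pm s + t\,T_2$ crossing $0$, and using that the total variation in $t \in \re$ of each such eigenvalue branch is summable with sum $\leq \|T_2\|_1$, one bounds $\int |{\rm ind}|\, d\mu \leq \frac{1}{\pi}\sup_t \frac{1}{1+t^2}\,(\text{something}) $ --- more carefully, one splits $\re$ at the points where branches cross and estimates by $\frac{1}{\pi}\sum \int_{\text{crossing interval}} \frac{dt}{1+t^2} \cdot (\text{multiplicity})$, which after optimizing gives the stated $\frac{1}{\pi s_2}\|T_2\|_1$.

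The main obstacle is precisely this last estimate: proving that the $\mu$-integral of the outer index term is $\leq \frac{1}{\pi s_2}\|T_2\|_1$. This is where the trace-class (as opposed to merely compact) hypothesis on $T_2$ is used decisively, and where the factor $\frac{1}{\pi s_2}$ comes from --- morally from $\int \frac{dt}{1+t^2} = \pi$ combined with a change of variables scaling $T_2$'s eigenvalues by $s_2$. I expect the cleanest route is to reduce, via the resolvent identity and the integral representation $\frac{1}{\pi}\int_{\re} \frac{t}{(t^2+1)}\,(\cdots)$ of the index in terms of a spectral flow, to an estimate of the form $\int_{\re}|{\rm ind}(M + t\,T_2, M')|\,d\mu(t) \leq \frac{1}{\pi\,{\rm dist}(0,\sigma(M'))}\|T_2\|_1$ for invertible $M'$, and then apply it with $M' = M \pm s$, ${\rm dist}(0, \sigma(M\pm s)) \geq s_2$. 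The bookkeeping of the two cutoffs $s_1$ (absorbing $T_1$ by the finite compact term $n_*(s_1; T_1)$) and $s_2$ (controlling the integral of the $T_2$-index) is routine once this core inequality is in place.
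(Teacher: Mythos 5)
A preliminary remark: the paper offers no proof of Lemma \ref{l31} at all --- it is imported verbatim from \cite[Lemma 2.1]{P1} and \cite[Subsection 3.3]{BPR} --- so your attempt has to be judged against that standard argument. Your diagnosis of the naive route is correct (integrating $n_*(s_2;tT_2)\le s_2^{-1}|t|\,\|T_2\|_1$ against $d\mu$ diverges), and the reduction via \eqref{paris2} with $B=t\,T_2$ is indeed the right skeleton. However, as written the reduction is faulty: \eqref{paris2} uses one and the same parameter in the shift of $M$ and in the counting function of $A$, so it does not produce your sandwich with outer terms ${\rm ind}(M+t\,T_2,M\pm s)$, $s=s_1+s_2$, and inner terms $n_{\pm}(s_1;T_1)$; worse, your claim ${\rm dist}(0,\sigma(M\pm s))\ge s_2$ is false --- from $\sigma(M)\cap[-s,s]=\emptyset$ one only gets $\sigma(M+s)\cap[0,2s]=\emptyset$, and $M+s$ may have spectrum arbitrarily close to $0$ from below. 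The correct move is to apply \eqref{paris2} with the parameter $s_1$, which yields outer terms ${\rm ind}(M\pm s_1+t\,T_2,\,M\pm s_1)$ and inner terms $n_\pm(s_1;T_1)$, and then $\sigma(M\pm s_1)\cap[-s_2,s_2]=\emptyset$, which is exactly the gap the second step requires. This slip is repairable, but it must be repaired.

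The genuine gap is the core estimate you yourself flag as ``the main obstacle'': for $M'=M\pm s_1$ (so ${\rm dist}(0,\sigma(M'))\ge s_2$) one needs $\int_{\re}|{\rm ind}(M'+t\,T_2,M')|\,d\mu(t)\le (\pi s_2)^{-1}\|T_2\|_1$, and your proposal never proves it --- the ``split $\re$ at the crossing points and optimize'' passage is a heuristic, not an argument. Moreover, the quantitative route you hint at (integrating the pointwise bound $|{\rm ind}(M'+t\,T_2,M')|\le n_*(s_2;t\,T_2)$, i.e.\ counting how many singular values of $T_2$ exceed $s_2/|t|$) cannot give the stated constant: it yields $\frac{2}{\pi}\sum_k\arctan\bigl(s_k(T_2)/s_2\bigr)\le \frac{2}{\pi s_2}\|T_2\|_1$, and already for a rank-one $T_2$ with small eigenvalue this is about twice the true value, hence strictly larger than $(\pi s_2)^{-1}\|T_2\|_1$. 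What actually produces $1/(\pi s_2)$ in \cite{P1} is a Birman--Schwinger step: the points $t_j$ at which ${\rm ind}(M'+t\,T_2,M')$ can jump are precisely the reals with $-1/t_j$ in the nonzero spectrum of the trace-class operator $(M')^{-1}T_2$ (counted with multiplicity), so that $|{\rm ind}(M'+t\,T_2,M')|\le\#\{j:\ |t_j|\le|t|,\ {\rm sign}\,t_j={\rm sign}\,t\}$ and hence $\int_{\re}|{\rm ind}|\,d\mu\le\sum_j\mu(\{|t|\ge|t_j|,\ {\rm sign}\,t={\rm sign}\,t_j\})=\frac{1}{\pi}\sum_j\arctan(1/|t_j|)\le\frac{1}{\pi}\sum_j 1/|t_j|\le\frac{1}{\pi}\|(M')^{-1}T_2\|_1\le\frac{\|T_2\|_1}{\pi s_2}$. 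Without this identification of the crossing parameters with eigenvalues of a single trace-class operator (or an equivalent quantified spectral-flow argument), your plan establishes at best the weaker inequality with constant $2/(\pi s_2)$ --- and even that is not written out --- so it does not prove \eqref{pu7}.
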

{\bf 4.3.} In this subsection we describe a representation of the
SSF $\xi(E;{\mathcal H},{\mathcal H}_0)$ which is a special case
of the general representation of the SSF due to
F. Gesztesy, K. Makarov, and A. Pushnitski (see \cite{P1}, \cite{GM}, \cite{P2}). \\
 Let $X_1$
and $X_2$ be two Hilbert spaces. Let ${\mathcal H}$ and ${\mathcal
H}_0$ be two lower bounded self-adjoint operators acting in $X_1$.
Assume that \eqref{10} holds for some $\gamma > 0$. Next suppose
that
    \bel{sof2}
    {\mathcal V} : = {\mathcal H} - {\mathcal H}_0 = {\mathcal
    K}^* {\mathcal J} {\mathcal K}
    \ee
where ${\mathcal K} \in {\mathcal B}(X_1, X_2)$,  ${\mathcal J} =
{\mathcal J}^* \in {\mathcal B}(X_2)$, and $0 \not \in
\sigma({\mathcal J})$. Finally, assume that
    \bel{sof3}
    {\mathcal K} ({\mathcal H}_0 - E_0)^{-1/2} \in
    S_{\infty}(X_1, X_2),
    \ee
    \bel{sof4}
    {\mathcal K} ({\mathcal H}_0 - E_0)^{-\gamma'} \in
    S_2 (X_1, X_2),
    \ee
for some $E_0 < \inf{\sigma({\mathcal H}) \cup \sigma({\mathcal
H}_0)}$ and $\gamma'>0$. For $z \in {\mathbb C}_+ : = \{\zeta \in
{\mathbb C}\,|\, {\rm Im}\,\zeta > 0\}$ set
$$
{\mathcal T}(z): = {\mathcal K} ({\mathcal H}_{0} - z)^{-1}
{\mathcal K}^*.
$$
Evidently, ${\mathcal T}(z) \in S_{\infty}(X_2)$.
\begin{lemma} \label{sofl1} {\rm \cite{BE}} Let
\eqref{sof2} -- \eqref{sof4} hold true. Then for almost every  $E
\in \re$ the operator-norm  limit ${\mathcal T}(E): = {\rm
n}-\lim_{\delta \downarrow 0} {\mathcal T}(E + i\delta)$ exists
and by \eqref{sof4} we have ${\mathcal T}(E) \in S_{\infty}(X_2)$.
Moreover, $0 \leq {\rm Im}\,{\mathcal T}(E) \in S_1(X_2)$.
\end{lemma}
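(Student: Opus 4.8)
The plan is to recover the almost-everywhere existence of the boundary value $\mathcal{T}(E)$ from a standard limiting absorption principle for the \emph{free} operator $\mathcal{H}_0$, and then to read off the Schatten-class and positivity properties from \eqref{sof3}--\eqref{sof4}. First I would note that since $0\not\in\sigma(\mathcal{J})$ and $\mathcal{J}=\mathcal{J}^*$, one may as well absorb $|\mathcal{J}|^{1/2}$ into $\mathcal{K}$, so without loss of generality $\mathcal{V}=\mathcal{K}^*\mathcal{J}\mathcal{K}$ with $\mathcal{J}$ a sign operator; in particular it suffices to control $\mathcal{T}(z)=\mathcal{K}(\mathcal{H}_0-z)^{-1}\mathcal{K}^*$. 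Fix a base point $E_0<\inf(\sigma(\mathcal{H})\cup\sigma(\mathcal{H}_0))$ and write
$$
\mathcal{T}(z) = \bigl(\mathcal{K}(\mathcal{H}_0-E_0)^{-1/2}\bigr)\,\bigl[(\mathcal{H}_0-E_0)^{1/2}(\mathcal{H}_0-z)^{-1}(\mathcal{H}_0-E_0)^{1/2}\bigr]\,\bigl(\mathcal{K}(\mathcal{H}_0-E_0)^{-1/2}\bigr)^*.
$$
The outer factors lie in $S_\infty(X_1,X_2)$ by \eqref{sof3}, and the middle factor is the sandwiched resolvent of $\mathcal{H}_0$, which is uniformly bounded for $\mathrm{Im}\,z>0$ away from the spectrum and, crucially, has boundary values in the weighted sense. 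Here is where I would invoke a limiting absorption principle: for the present $\mathcal{H}_0$ (or, in the abstract setting, for \emph{any} self-adjoint $\mathcal{H}_0$) the map $z\mapsto(\mathcal{H}_0-E_0)^{1/2}(\mathcal{H}_0-z)^{-1}(\mathcal{H}_0-E_0)^{1/2}$ extends continuously to $\mathbb{C}_+\cup(\re\setminus\sigma(\mathcal{H}_0))$ trivially, while for almost every $E\in\sigma(\mathcal{H}_0)$ the weak-$*$ (indeed weighted-$S_2$, using \eqref{sof4}) limit exists by the spectral theorem together with the a.e.\ existence of the Poisson–Stieltjes boundary values of the scalar measures $\langle\mathcal{K}(\mathcal{H}_0-E_0)^{-\gamma'}e,(\mathcal{H}_0-\cdot)^{-1}(\mathcal{H}_0-E_0)^{-\gamma'}e\rangle$.

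More concretely, the sharp route is through the Hilbert–Schmidt factor. By \eqref{sof4}, $G:=\mathcal{K}(\mathcal{H}_0-E_0)^{-\gamma'}\in S_2(X_1,X_2)$, so we may write, for $\mathrm{Im}\,z>0$,
$$
\mathcal{T}(z) = G\,(\mathcal{H}_0-E_0)^{\gamma'}(\mathcal{H}_0-z)^{-1}(\mathcal{H}_0-E_0)^{\gamma'}\,G^*.
$$
Let $d\Pi(\lambda)$ denote the spectral measure of $\mathcal{H}_0$; then the middle operator is $\int(\lambda-E_0)^{2\gamma'}(\lambda-z)^{-1}d\Pi(\lambda)$, whose $S_2$-valued (after conjugation by $G$) Borel transform in $z$ has, by the classical theory of Cauchy–Stieltjes transforms of $S_2$-valued measures, a finite nontangential limit for Lebesgue-a.e.\ $E\in\re$. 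Composing with the fixed Hilbert–Schmidt operators $G,G^*$ on both sides, we conclude that the operator-norm limit $\mathcal{T}(E)=\mathrm{n}\text{-}\lim_{\delta\downarrow 0}\mathcal{T}(E+i\delta)$ exists for a.e.\ $E$, and since each approximant is, up to the fixed factors $G,G^*$, of the form $G(\cdot)G^*$ with the middle term bounded, and the limit inherits membership in $S_\infty(X_2)$ from $G\in S_2\subset S_\infty$: indeed $\mathcal{T}(E)=G\,R(E)\,G^*$ with $R(E)$ bounded, hence $\mathcal{T}(E)\in S_\infty(X_2)$ (in fact in $S_1$ if $\gamma'$ is large enough, but $S_\infty$ is all that is claimed).

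Finally, for the positivity and trace-class statement about the imaginary part: for $\mathrm{Im}\,z=\delta>0$ one computes
$$
\mathrm{Im}\,\mathcal{T}(z) = \mathcal{K}\,\mathrm{Im}(\mathcal{H}_0-z)^{-1}\,\mathcal{K}^* = \delta\,\mathcal{K}(\mathcal{H}_0-\bar z)^{-1}(\mathcal{H}_0-z)^{-1}\mathcal{K}^* = \delta\,\bigl(\mathcal{K}(\mathcal{H}_0-z)^{-1}\bigr)\bigl(\mathcal{K}(\mathcal{H}_0-z)^{-1}\bigr)^*\geq 0,
$$
so $\mathrm{Im}\,\mathcal{T}(E)\geq 0$ for a.e.\ $E$ by passing to the limit. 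To see $\mathrm{Im}\,\mathcal{T}(E)\in S_1$, write $\mathcal{K}(\mathcal{H}_0-z)^{-1}=\bigl(\mathcal{K}(\mathcal{H}_0-E_0)^{-\gamma'}\bigr)\bigl((\mathcal{H}_0-E_0)^{\gamma'}(\mathcal{H}_0-z)^{-1}\bigr)$; the first factor is in $S_2$ by \eqref{sof4}, and using $\pi^{-1}\int_{\re}\|(\mathcal{H}_0-E_0)^{\gamma'}(\mathcal{H}_0-E-i\delta)^{-1}u\|^2\,dE\leq C\|u\|^2$ (spectral theorem: the $E$-integral of the Poisson kernel is $\pi$) together with Fatou, one gets that the a.e.\ boundary value $\mathcal{K}(\mathcal{H}_0-E-i0)^{-1}$ exists in $S_2(X_1,X_2)$, whence $\mathrm{Im}\,\mathcal{T}(E)=\bigl(\mathcal{K}(\mathcal{H}_0-E-i0)^{-1}\bigr)\bigl(\mathcal{K}(\mathcal{H}_0-E-i0)^{-1}\bigr)^*\in S_1(X_2)$ and is nonnegative. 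I expect the main obstacle to be the rigorous justification that the operator-valued Cauchy–Stieltjes transform has nontangential limits in $S_2$ for a.e.\ $E$ — that is, upgrading the scalar a.e.\ existence of boundary values (which is immediate) to an operator-norm or $S_2$-norm statement; this is precisely the content of the cited reference \cite{BE}, and the cleanest argument is the $S_2$-valued version of the de la Vallée-Poussin / Privalov theorem combined with the estimate above, rather than attempting a pointwise-in-$X_2$ argument.
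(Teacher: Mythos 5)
The paper offers no proof of Lemma \ref{sofl1} at all: it is quoted from Birman--Entina \cite{BE}, so deferring the hard analytic step (a.e.\ existence of operator-norm boundary values) to that reference is in itself consistent with the paper's treatment, and your overall reduction is the standard one. Still, two concrete repairs are needed. First, the reduction is incomplete as stated: since $\gamma'>0$ is arbitrary, the middle factor $({\mathcal H}_0-E_0)^{2\gamma'}({\mathcal H}_0-z)^{-1}$ in your factorization need not be bounded, and the $S_1$-valued measure $\Delta\mapsto{\mathcal K}{\mathbb P}_{\Delta}({\mathcal H}_0){\mathcal K}^*$ need not have finite total variation on all of $\re$, so there is no ``classical Cauchy--Stieltjes transform of an $S_2$-valued measure'' to which you may appeal directly. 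The standard way to make your scheme work is a spectral localization: for a bounded interval $\Delta\ni E$ split ${\mathcal T}(z)={\mathcal K}{\mathbb P}_{\Delta}({\mathcal H}_0)({\mathcal H}_0-z)^{-1}{\mathbb P}_{\Delta}({\mathcal H}_0){\mathcal K}^*+{\mathcal K}{\mathbb P}_{\re\setminus\Delta}({\mathcal H}_0)({\mathcal H}_0-z)^{-1}{\mathbb P}_{\re\setminus\Delta}({\mathcal H}_0){\mathcal K}^*$. By \eqref{sof4}, ${\mathcal K}{\mathbb P}_{\Delta}({\mathcal H}_0)={\mathcal K}({\mathcal H}_0-E_0)^{-\gamma'}\,({\mathcal H}_0-E_0)^{\gamma'}{\mathbb P}_{\Delta}({\mathcal H}_0)\in S_2$, so the Birman--Entina theorem applies verbatim to the first term; the second term extends norm-continuously (indeed analytically) across the interior of $\Delta$ and is self-adjoint there for real $E$, hence contributes neither an exceptional set nor anything to ${\rm Im}\,{\mathcal T}(E)$, while \eqref{sof3} gives compactness of ${\mathcal T}(z)$ and hence of the limit.

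Second, and more seriously, your argument for $0\le{\rm Im}\,{\mathcal T}(E)\in S_1$ is wrong as written. The claimed bound $\pi^{-1}\int_{\re}\|({\mathcal H}_0-E_0)^{\gamma'}({\mathcal H}_0-E-i\delta)^{-1}u\|^2\,dE\le C\|u\|^2$ fails: by the spectral theorem the left-hand side equals $(\pi\delta)^{-1}\int_{\re}(\lambda-E_0)^{2\gamma'}\,d\langle{\mathbb P}_{(-\infty,\lambda)}({\mathcal H}_0)u,u\rangle$ --- the $E$-integral of $((\lambda-E)^2+\delta^2)^{-1}$ is $\pi/\delta$, not $\pi$, and the result is finite only for $u$ in the form domain of $({\mathcal H}_0-E_0)^{2\gamma'}$. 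Consequently the boundary value ${\mathcal K}({\mathcal H}_0-E-i0)^{-1}$ does \emph{not} exist in $S_2$, nor even as a bounded operator, for a.e.\ $E$ in the absolutely continuous spectrum (already for a rank-one ${\mathcal K}$ its norm blows up like $\delta^{-1/2}$); what remains bounded is $\delta^{1/2}{\mathcal K}({\mathcal H}_0-E-i\delta)^{-1}$, whose squared $S_2$-norm is exactly ${\rm Tr}\,{\rm Im}\,{\mathcal T}(E+i\delta)$. The statement actually needed --- and this is precisely the content of \cite{BE}, applied after the localization above --- is that $\lambda\mapsto{\mathcal K}{\mathbb P}_{(-\infty,\lambda)}({\mathcal H}_0){\mathcal K}^*$ is a.e.\ differentiable in the trace norm, with ${\rm Im}\,{\mathcal T}(E)=\pi\,\frac{d}{d\lambda}\bigl({\mathcal K}{\mathbb P}_{(-\infty,\lambda)}({\mathcal H}_0){\mathcal K}^*\bigr)\big|_{\lambda=E}\in S_1$. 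Your identity ${\rm Im}\,{\mathcal T}(E+i\delta)=\delta\,{\mathcal K}({\mathcal H}_0-\bar z)^{-1}({\mathcal H}_0-z)^{-1}{\mathcal K}^*\ge 0$ is correct and does yield the positivity of the limit, but it does not deliver the trace-class property by the route you propose.
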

\begin{theorem} \label{soft1}{\rm \cite{P1}, \cite{GM}, \cite{P2}} Let
\eqref{10} and \eqref{sof2} -- \eqref{sof4} hold true. Then for
almost every $E \in \re$ we have
    \bel{sof5a}
    \xi(E;{\mathcal H}, {\mathcal H}_0) =
    \int_{\re} {\rm ind}({\mathcal J}^{-1} +  {\rm Re} \,
{\mathcal T}(E) + t\; {\rm Im} \, {\mathcal T}(E), {\mathcal
J}^{-1})\; d\mu (t).
    \ee
\end{theorem}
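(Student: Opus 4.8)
The plan is to identify $\xi(E;\mathcal H,\mathcal H_0)$ with the normalised boundary phase of a perturbation determinant built from $\mathcal T$, and then to recognise that phase as the Poisson integral at $\zeta=i$ of the integer-valued function $t\mapsto{\rm ind}\bigl(\mathcal J^{-1}+{\rm Re}\,\mathcal T(E)+t\,{\rm Im}\,\mathcal T(E),\mathcal J^{-1}\bigr)$. It suffices to treat $\gamma=1$: the general case reduces to it by the invariance principle for the SSF (see \cite[Chapter 8]{Y2}), since $\lambda\mapsto-(\lambda-E_0)^{-\gamma}$ is increasing, and equivalently one may work throughout with the appropriately regularised perturbation determinant, whose regularising factor is real on $\re$ and hence does not affect arguments. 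So assume $(\mathcal H-E_0)^{-1}-(\mathcal H_0-E_0)^{-1}\in S_1$. For $z\in{\mathbb C}_+$ the second resolvent identity and \eqref{sof2} give
$$(\mathcal H-z)^{-1}=(\mathcal H_0-z)^{-1}-(\mathcal H_0-z)^{-1}\mathcal K^*\bigl(\mathcal J^{-1}+\mathcal T(z)\bigr)^{-1}\mathcal K(\mathcal H_0-z)^{-1},$$
the inverse existing because ${\rm Im}\,\mathcal T(z)=({\rm Im}\,z)\,\mathcal K(\mathcal H_0-z)^{-1}\bigl((\mathcal H_0-z)^{-1}\bigr)^*\mathcal K^*\ge0$ while $\mathcal J^{-1}+{\rm Re}\,\mathcal T(z)$ is self-adjoint. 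Using $\det(I+AB)=\det(I+BA)$ with the factorisation $\mathcal V(\mathcal H_0-z)^{-1}=\mathcal K^*\mathcal J\mathcal K(\mathcal H_0-z)^{-1}$ one identifies the perturbation determinant $D(z):=\det\bigl(I+\mathcal V(\mathcal H_0-z)^{-1}\bigr)=\det\bigl(I+\mathcal J\,\mathcal T(z)\bigr)$ (regularised by \eqref{sof4}, again up to a real-on-$\re$ correction).

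By Krein's formula for the SSF in terms of the perturbation determinant (see \cite{kr}, \cite[Chapter 8]{Y2}), $\xi(E;\mathcal H,\mathcal H_0)=\pi^{-1}\lim_{\delta\downarrow0}\arg D(E+i\delta)$ for a.e. $E$. By Lemma \ref{sofl1}, for a.e. $E$ the operator-norm limit $\mathcal T(E)={\rm n}\text{-}\lim_{\delta\downarrow0}\mathcal T(E+i\delta)$ exists, is compact, and $\Gamma:={\rm Im}\,\mathcal T(E)\ge0$ lies in $S_1$; writing $R:={\rm Re}\,\mathcal T(E)$ we are reduced, at such an $E$, to the purely operator-theoretic identity
$$\frac1\pi\,\arg\det\bigl(I+\mathcal J(R+i\Gamma)\bigr)=\int_{\re}{\rm ind}\bigl(\mathcal J^{-1}+R+t\Gamma,\ \mathcal J^{-1}\bigr)\,d\mu(t).$$
Here each term ${\rm ind}(\mathcal J^{-1}+R+t\Gamma,\mathcal J^{-1})$ is well defined since $0\notin\sigma(\mathcal J^{-1})$ and $R+t\Gamma$ is compact self-adjoint (Subsection 4.2), and the integral converges by Lemma \ref{l31} with $T_1=R$, $T_2=\Gamma$.

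To prove this identity, introduce $g(\zeta):=\det\bigl(I+\mathcal J(R+\zeta\Gamma)\bigr)$, holomorphic in $\zeta\in{\mathbb C}_+$. A short computation gives $\ker\bigl(I+\mathcal J(R+\zeta\Gamma)\bigr)=\ker(\mathcal J^{-1}+R)\cap\ker\Gamma$, independent of $\zeta$; discarding the non-generic $E$ for which this is nontrivial, $g$ is zero-free on ${\mathbb C}_+$, so $\Phi(\zeta):=\arg g(\zeta)={\rm Im}\log g(\zeta)$ is harmonic there with a.e. boundary values $\Phi(t+i0)$, of Nevanlinna class by the estimate of Lemma \ref{l31}. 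At each real $t_0$ with $\ker(\mathcal J^{-1}+R+t_0\Gamma)\ne\{0\}$, analytic perturbation theory for the monotone family $t\mapsto\mathcal J^{-1}+R+t\Gamma$ (eigenvalues move strictly to the right since $\Gamma\ge0$) shows that ${\mathbb P}_{(-\infty,0)}(\mathcal J^{-1}+R+t\Gamma)$, and hence ${\rm ind}(\mathcal J^{-1}+R+t\Gamma,\mathcal J^{-1})$, loses exactly $m_0:=\dim\ker(\mathcal J^{-1}+R+t_0\Gamma)$ dimensions as $t$ passes through $t_0$; simultaneously $\Phi(t+i0)$ jumps by $-\pi m_0$, because the eigenvalue branch of $\mathcal J^{-1}+R+\zeta\Gamma$ through $0$ has nonnegative imaginary part on ${\mathbb C}_+$ and positive derivative at $t_0$. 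Hence $\Phi(t+i0)-\pi\,{\rm ind}(\mathcal J^{-1}+R+t\Gamma,\mathcal J^{-1})$ is constant in $t$, and it equals $0$ (e.g. by letting $t\to-\infty$, or by coupling $\mathcal V_s:=s\,\mathcal K^*\mathcal J\mathcal K$ and noting both sides vanish at $s=0$). Representing $\Phi$ on ${\mathbb C}_+$ by its Poisson integral at $\zeta=i$, whose kernel is exactly $\frac1\pi(1+t^2)^{-1}=d\mu(t)/dt$, we get $\frac1\pi\Phi(i)=\int_{\re}{\rm ind}(\mathcal J^{-1}+R+t\Gamma,\mathcal J^{-1})\,d\mu(t)$, which is \eqref{sof5a}.

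The main obstacle is the last step, in two respects: (a) giving the determinant and its argument a meaning when $\mathcal T(z)$ is merely compact — handled via the $S_2$-regularisation furnished by \eqref{sof4} together with Lemma \ref{sofl1}, which shows that the genuinely trace-class part $\Gamma={\rm Im}\,\mathcal T(E)$ is what controls the boundary jumps of $\arg g$; and (b) the precise matching of the jump of $\arg\det$ with the jump of the Fredholm index, and the verification that the residual additive constant vanishes, which combines the eigenvalue-crossing analysis above with the index calculus of Subsection 4.2. Everything else is routine. Alternatively, one may simply observe that \eqref{10} and \eqref{sof2}--\eqref{sof4} are precisely the hypotheses of the abstract representation theorem of Gesztesy--Makarov--Pushnitski and invoke \cite{P1}, \cite{GM}, \cite{P2}.
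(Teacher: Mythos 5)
The paper offers no proof of Theorem \ref{soft1}: it is quoted as a known abstract result, with the citations \cite{P1}, \cite{GM}, \cite{P2} built into the statement, and the only added observation is that Lemma \ref{l31} guarantees convergence of the integral in \eqref{sof5a}. Your closing sentence --- check that \eqref{10} and \eqref{sof2}--\eqref{sof4} are exactly the hypotheses of the Gesztesy--Makarov--Pushnitski representation theorem and invoke the references --- therefore coincides with what the paper actually does, and on that route your proposal is fine.

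The direct argument you sketch is in the spirit of how those references proceed (perturbation determinant, jump matching between $\arg\det$ and the index, Poisson kernel at $\zeta=i$), but as it stands it has genuine gaps and should not be mistaken for a self-contained proof. First, under \eqref{sof2}--\eqref{sof4} the operator ${\mathcal T}(z)$ is in general only compact: \eqref{sof4} gives ${\mathcal K}({\mathcal H}_0-E_0)^{-\gamma'}\in S_2$ for \emph{some} $\gamma'>0$, which yields ${\mathcal T}(z)\in S_1$ only if $\gamma'\le 1/2$; in general neither $\det\bigl(I+{\mathcal J}{\mathcal T}(z)\bigr)$ nor ${\rm Tr}\,{\mathcal J}{\mathcal T}(z)$ is defined, and your claim that a regularised determinant differs only by a factor ``real on $\re$'' is unjustified (for $\det_2$ the factor is $e^{-{\rm Tr}\,{\mathcal J}{\mathcal T}}$, whose boundary value has imaginary part ${\rm Tr}\,{\mathcal J}\,{\rm Im}\,{\mathcal T}(E)$, not zero in general, when it exists at all). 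Second, the opening reduction to $\gamma=1$ ``by the invariance principle'' is not a routine preliminary: the compatibility of the representation \eqref{sof5a} with the invariance principle is precisely the subject of \cite{P2}. Third, the normalisation of the additive constant is the delicate point: Krein's formula fixes the branch of $\arg D(z)$ at ${\rm Im}\,z\to+\infty$ in the spectral variable, while your $\Phi$ lives on the auxiliary family $R+\zeta\Gamma$, and identifying the two branches is exactly the hard content of \cite{P1}, \cite{GM}; in particular ``letting $t\to-\infty$'' does not settle it, since when ${\rm rank}\,{\rm Im}\,{\mathcal T}(E)=\infty$ both $\Phi(t+i0)$ and ${\rm ind}({\mathcal J}^{-1}+R+t\Gamma,{\mathcal J}^{-1})$ are unbounded as $t\to-\infty$, and the coupling-constant alternative still requires a continuity argument you have not supplied. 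Finally, discarding the energies $E$ with $\ker({\mathcal J}^{-1}+R)\cap\ker\Gamma\neq\{0\}$ as ``non-generic'' needs the a.e.\ boundary-value machinery of Birman--Entina type, not merely Lemma \ref{sofl1}. So keep the citation as the proof, and present the sketch, if at all, only as an indication of how the cited proofs run.
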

Note that the convergence of the integral in
\eqref{sof5a} is guaranteed by Lemma \ref{l31}.\\
Now suppose that the electric potential $V$ satisfies ${\mathcal
D}_{\alpha}$ with $\alpha > 1$. Then relations \eqref{10} and
\eqref{sof2} -- \eqref{sof4} hold true with $X_1 = X_2 =
L^2({\mathcal S}_L)$, ${\mathcal H}_0 = H_0$, ${\mathcal H} = H$,
${\mathcal V} = V$, ${\mathcal K} = |V|^{1/2}$,  ${\mathcal J} = J
= {\rm sign}\,V$, and $\gamma = \gamma' = 1$.
    For $z \in {\mathbb C}_+$ set
$$
T(z): = |V|^{1/2}(H_{0} - z)^{-1}|V|^{1/2}.
$$
 By Lemma \ref{sofl1}
for almost every $E \in \re$ the operator-norm  limit
    \bel{sof5} T(E): =
    {\rm n}-\lim_{\delta \downarrow 0} T(E + i\delta)
    \ee
exists, and
    \bel{sof6}
    0 \leq {\rm Im}\,T(E) \in S_1.
    \ee
In Corollary \ref{af1} below we will show that the limit
\eqref{sof5} exists, and relation \eqref{sof6} holds true for {\em
every} $E \in \re\setminus {\mathcal Z}$. Then Theorem \ref{soft1}
implies that for almost every $E \in \re$ we have
\begin{equation} \label{sof7}
\xi(E ; H, H_0) = \int_{\re} {\rm ind}(J +  {\rm Re} \, T(E) + t\;
{\rm Im} \, T(E), J)\; d\mu (t),
\end{equation}
the right-hand-side being well defined for every $E \in
\re\setminus {\mathcal Z}$. In this article we identify the SSF
$\xi(E;H,H_0)$ for energies $E \not \in {\mathcal Z}$ with the
r.h.s. of \eqref{sof7}.\\

{\bf 4.4.} Fix $j \in {\mathbb N}$. Denote by $\varphi_j :
[0,\infty) \to [0,\infty)$ the function inverse to $E_j -
{\mathcal E}_j$. In the following lemma we describe some
properties of $\varphi_j$ which will be used in the sequel.
\\   Let $\beta$ and $\eta$ be two functions with values in
$[0,\infty)$, and ${\mathcal O} \subseteq D(\beta) \cap D(\eta)$.
We will write $\beta(s) \asymp \eta(s)$, $s \in {\mathcal O}$, if
there exist two constants $c_{\pm} > 0$ such that for each $s \in
{\mathcal O}$ we have $c_- \eta(s) \leq \beta(s) \leq c_+
\eta(s)$.
\begin{lemma} \label{fl}
Let $j \in {\mathbb N}$. We have
    \bel{f11}
    \varphi_j(s) \asymp s^{1/2}, \quad s \in [0,\infty),
    \ee
    \bel{san3a}
    \varphi'_j(s) \asymp s^{-1/2}, \quad s \in (0,\infty).
    \ee
Moreover,
    \bel{f16}
    \varphi_j(s) = \sqrt{s} \Phi(s), \quad s \in [0,\infty),
    \ee
where $\Phi \in C^{\infty}([0,\infty))$, and
    \bel{f17}
    \Phi(0) = \mu_j^{-1/2},
    \ee
the number $\mu_j$ being defined in \eqref{F1}. In particular, we
have
    \bel{f18}
    |\varphi_j''(s)| = O(s^{-3/2}), \quad s \in (0,s_0), \quad s_0 \in
    (0, \infty).
    \ee
\end{lemma}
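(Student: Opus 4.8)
The plan is to peel off the square-root behaviour of $\varphi_j$ near $s=0$ using real analyticity, and to handle large $s$ via the asymptotics \eqref{a2} and the Feynman--Hellmann formula. First I would put $g_j(k):=E_j(k)-{\mathcal E}_j$ for $k\ge 0$. Since $E_j$ is even we have $E_j'(0)=0$; by \eqref{m21}, $g_j'(k)=E_j'(k)>0$ for $k>0$; and by \eqref{a2}, $g_j(k)\to+\infty$ as $k\to+\infty$. Thus $g_j$ is a strictly increasing real-analytic bijection of $[0,\infty)$ onto itself, so $\varphi_j=g_j^{-1}$ is well defined, continuous on $[0,\infty)$ with $\varphi_j(0)=0$, and $C^\infty$ on $(0,\infty)$ with $\varphi_j'(s)=1/E_j'(\varphi_j(s))$.

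For the behaviour near $s=0$ I would exploit analyticity: expanding $E_j$ at $k=0$ and using \eqref{a5}, \eqref{F1} gives $g_j(k)=k^2F(k^2)$ with $F$ real analytic near $0$ and $F(0)=\mu_j>0$. Writing $u=k^2$, the relation $s=uF(u)$ has, by the analytic inverse function theorem (the derivative at $u=0$ being $F(0)\neq0$), a real-analytic local solution $u=G(s)$ with $G(0)=0$, $G'(0)=\mu_j^{-1}$, hence $G(s)=sH(s)$ with $H$ real analytic and $H(0)=\mu_j^{-1}>0$. Since $\varphi_j(s)\ge0$ and $\varphi_j(s)^2=G(s)$, this yields $\varphi_j(s)=\sqrt s\,\Phi(s)$ near $0$ with $\Phi:=\sqrt H$ real analytic and $\Phi(0)=\mu_j^{-1/2}$, which is the local version of \eqref{f16}--\eqref{f17}.

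Then I would globalize by simply defining $\Phi(s):=\varphi_j(s)/\sqrt s$ for $s>0$: this is $C^\infty$ and positive on $(0,\infty)$, it coincides with the locally constructed analytic $\Phi$ on a punctured neighbourhood of $0$, and so it patches to a function that is $C^\infty$ on $[0,\infty)$ (real analytic near $0$) and everywhere positive --- this is \eqref{f16}--\eqref{f17}. Differentiating $\varphi_j=\sqrt s\,\Phi(s)$ twice gives $\varphi_j'(s)=\tfrac{1}{2\sqrt s}\big(\Phi(s)+2s\Phi'(s)\big)$ and $\varphi_j''(s)=-\tfrac{1}{4} s^{-3/2}\Phi(s)+s^{-1/2}\Phi'(s)+s^{1/2}\Phi''(s)$; since $\Phi,\Phi',\Phi''$ are bounded on compact subsets of $[0,\infty)$ and $\Phi(0)>0$, the first identity gives $\varphi_j'(s)\asymp s^{-1/2}$ for small $s>0$ and the second gives $|\varphi_j''(s)|=O(s^{-3/2})$ on $(0,s_0)$, proving \eqref{f18}. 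For the global $\asymp$-bounds \eqref{f11} and \eqref{san3a} only the regime $s\to\infty$ remains (on a compact subinterval of $(0,\infty)$ the estimates follow from continuity and positivity). From \eqref{a2}, $g_j(k)/k^2\to 1$, so $\Phi(s)=\varphi_j(s)/\sqrt s\to1$ as $s\to\infty$, which together with the near-origin behaviour gives \eqref{f11}. For \eqref{san3a} at infinity I would use the Feynman--Hellmann formula \eqref{f10}: $E_j'(k)=2k-2b\langle x\psi_j(\cdot;k),\psi_j(\cdot;k)\rangle$ with $|\langle x\psi_j(\cdot;k),\psi_j(\cdot;k)\rangle|\le L$ because $\psi_j(\cdot;k)$ is normalized in $L^2(I_L)$; hence $E_j'(k)=2k+O(1)\asymp k$ for large $k$, and then $\varphi_j'(s)=1/E_j'(\varphi_j(s))\asymp 1/\varphi_j(s)\asymp s^{-1/2}$ for large $s$, finishing \eqref{san3a}.

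The analyticity-driven steps near $s=0$ are routine; I expect the two slightly delicate points to be (a) verifying that the locally constructed analytic factor and the global ratio $\varphi_j(s)/\sqrt s$ are one and the same $C^\infty$ function on $[0,\infty)$, so that \eqref{f16} holds with a single $\Phi$, and (b) the control of $E_j'(k)$ as $k\to\infty$ needed for the far end of \eqref{san3a}: this is not implied by \eqref{a2} or \eqref{m21} alone and is exactly where \eqref{f10} (equivalently, first-order analytic perturbation theory for the simple eigenvalue $E_j(k)$) enters.
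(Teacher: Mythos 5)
Your proof is correct and follows essentially the same route as the paper: the factorization $\varphi_j(s)=\sqrt{s}\,\Phi(s)$ is obtained from evenness and analyticity of $E_j$ by inverting in the variable $k^2$ (the paper's $\Psi$ is exactly your $G$), while \eqref{a2}, \eqref{m21} and the Feynman--Hellmann formula \eqref{f10} give the global two-sided bounds \eqref{f11} and \eqref{san3a}, with \eqref{f18} read off from the factorization. The only difference is cosmetic ordering: the paper first proves $E_j'(k)\asymp k$ on $[0,\infty)$ and deduces \eqref{san3a}, whereas you get the small-$s$ part of \eqref{san3a} from $\Phi$ and the large-$s$ part from \eqref{f10}, which amounts to the same argument.
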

\begin{proof}
By \eqref{a2} and \eqref{a5} we have
    \bel{f12}
    E_j(k) - {\mathcal E}_j \asymp k^2, \quad k \in \re,
    \ee
which implies immediately \eqref{f11}. On the other hand,
\eqref{f10} and \eqref{m21} easily yield
    \bel{f13}
    E_j'(k) \asymp k, \quad k
    \in [0,\infty).
    \ee
Bearing in mind the formula for the derivative of an  inverse
function, we find that \eqref{f11} and \eqref{f13} imply
\eqref{san3a}. \\
Further, for $t \geq 0$ introduce the function $E_j(\sqrt{t}) -
{\mathcal E}_j$, and denote by $\Psi = \Psi_j : [0,\infty) \to
[0,\infty)$ its inverse. By \eqref{san3a} we have $\Psi'(s) \asymp
1$, $s \in [0, \infty)$. Since $E_j$ is  analytic, we find that
$\Psi \in C^{\infty}([0,\infty))$. Moreover, $\Psi(0) = 0$ and
$\Psi'(0) = \mu_j^{-1}$. Since $\varphi(s) = \sqrt{\Psi(s)}$, we
get \eqref{f16} with $\Phi(s) = \sqrt{\Psi(s)/s}$, which on its
turn implies  \eqref{f17}.
\end{proof}
 For  $j \in
{\mathbb N}$ set
$$
 P_j : = {\mathcal F}^* \int_{\re}^{\oplus} p_j(k) dk {\mathcal F},
$$
the orthogonal projections $p_j(k)$, $k\in \re$, being defined in
\eqref{m32}. For $z \in {\mathbb C}_+$ and $j \in {\mathbb N}$ put
 $$
 T_j(z) : = |V|^{1/2} P_j (H_0-z)^{-1} |V|^{1/2}.
$$
\begin{lemma} \label{sanl1} Assume that $V$ satisfies ${\mathcal
D}_{\alpha}$ with $\alpha > 1$. Fix $j \in {\mathbb N}$. Then for
each $z \in {\mathbb C}_+$ we have $T_j(z) \in S_1$ and the
operator-valued function $T_j : {\mathbb C}_+ \to S_1$ is
analytic. Moreover, for $E \in \re\setminus\{{\mathcal E_j}\}$ the
limit
    \bel{san1}
T_j(E) = \lim_{\delta \downarrow 0} T_j(E+i\delta)
    \ee
exists in $S_1$, and  $T_j : \re\setminus\{{\mathcal E_j}\} \to
S_1$ is continuous. Next, if $E-{\mathcal E_j} < 0$, then the
operator $T_j(E)$ is self-adjoint, and if $E-{\mathcal E_j} > 0$,
we have
    \bel{a15}
    0 \leq {\rm Im}\,T_j(E), \quad {\rm rank}\,{\rm Im}\,T_j(E) \leq 2.
    \ee
Finally, for each $\lambda_0
> 0$ there exists $C_j = C_j(\lambda_0)$ such that for $0 <
|E-{\mathcal E}_j| < \lambda_0$ we have
    \bel{san0}
\|T_j(E)\|_1 \leq C_j |E-{\mathcal E}_j|^{-1/2};
    \ee
    if $E-{\mathcal E_j} < 0$, then $C_j$ could be chosen
    independent of $\lambda_0$.
\end{lemma}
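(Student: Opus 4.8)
The plan is to reduce everything to a fibre-wise analysis in the partial Fourier representation and to isolate the single eigenvalue branch $E_j$, the contribution of which becomes singular precisely at $\mathcal{E}_j$. First I would write, for $z\in\mathbb{C}_+$,
\[
T_j(z)=\mathcal{F}^*\left(\int_\re^\oplus |V|^{1/2}(x,\cdot)\,\frac{p_j(k)}{E_j(k)-z}\,|V|^{1/2}(x,\cdot)\,dk\right)\mathcal{F},
\]
using that $P_j(H_0-z)^{-1}=\mathcal{F}^*\int^\oplus \frac{p_j(k)}{E_j(k)-z}\,dk\,\mathcal{F}$. Since $p_j(k)$ is rank one, the integrand is a rank-one operator in $L^2(I_L)$ for each $k$, with trace norm $\frac{1}{|E_j(k)-z|}\int_{I_L}|V(x,y)|\,\psi_j(x;k)^2\,dx$ integrated in $y$; the normalization $\|\psi_j(\cdot;k)\|_{L^2(I_L)}=1$ and boundedness of $\psi_j$ (Sobolev embedding in one dimension, uniformly for $k$ in compacts) together with $\mathcal{D}_\alpha$, $\alpha>1$, make the $y$-integral finite. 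The $S_1$-bound and the analyticity in $\mathbb{C}_+$ then follow by dominated convergence once one controls $\int_\re \frac{dk}{|E_j(k)-z|}$; here I would use $E_j(k)-\mathcal{E}_j\asymp k^2$ from \eqref{f12} together with the growth \eqref{a2} to see this integral is finite for $z\in\mathbb{C}_+$ and locally bounded.

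Next, for the boundary values: I would change variables $k\mapsto s=E_j(k)-\mathcal{E}_j$ on $[0,\infty)$ (the map is a diffeomorphism onto $[0,\infty)$ by \eqref{m21}), so that, using evenness of $E_j$ and the inverse function $\varphi_j$ of Lemma~4.6, the fibre integral becomes
\[
2\int_0^\infty \frac{\Theta_j(s)}{\,s+\mathcal{E}_j-z\,}\,\varphi_j'(s)\,ds,
\]
where $\Theta_j(s)$ is the rank-one (in $x$) operator built from $\psi_j(\cdot;\varphi_j(s))$ and $|V|^{1/2}$. By \eqref{san3a}, $\varphi_j'(s)\asymp s^{-1/2}$, and $\Theta_j$ is continuous in $s$ with values in the trace-class operators and bounded near $s=0$. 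Then the existence of the $S_1$-limit \eqref{san1} as $\delta\downarrow0$ for $E\ne\mathcal{E}_j$, and the continuity of $T_j$ away from $\mathcal{E}_j$, follow from the (vector-valued) Privalov/Plemelj argument: the Cauchy-type integral $\int_0^\infty \frac{g(s)}{s-(E-\mathcal{E}_j)-i\delta}\,ds$ with $g(s)=2\Theta_j(s)\varphi_j'(s)$, $g\in L^1_{\mathrm{loc}}$ with an integrable $s^{-1/2}$ singularity at $0$ and Hölder-type local regularity, converges in the appropriate norm; the only place a limit could fail is at $E=\mathcal{E}_j$, where the $s^{-1/2}$ weight makes the real part diverge. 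For $E<\mathcal{E}_j$ the denominator $s+\mathcal{E}_j-E$ never vanishes on $s\ge0$, so $T_j(E)$ is given by a genuine (real) integral and is self-adjoint. For $E>\mathcal{E}_j$, the Sokhotski–Plemelj formula gives $\mathrm{Im}\,T_j(E)=2\pi\,\Theta_j(E-\mathcal{E}_j)\,\varphi_j'(E-\mathcal{E}_j)$, which is a nonnegative operator (a sum of the two rank-one pieces coming from $k=\pm\varphi_j(E-\mathcal{E}_j)$), hence $0\le\mathrm{Im}\,T_j(E)$ with rank at most $2$, giving \eqref{a15}.

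Finally, for the quantitative bound \eqref{san0}: with $\lambda:=E-\mathcal{E}_j$ and $|\lambda|<\lambda_0$, I would split $\|T_j(E)\|_1\le 2\int_0^\infty \frac{\|\Theta_j(s)\|_1}{|s-\lambda|}\,\varphi_j'(s)\,ds$ (for $E<\mathcal{E}_j$ replace $|s-\lambda|$ by $s+|\lambda|$, which is cleaner). Using $\|\Theta_j(s)\|_1\le C$ uniformly for $s$ in a neighbourhood of $0$ and $\|\Theta_j(s)\|_1 = O(1/\varphi_j'(s))\cdot(\text{something integrable})$ for large $s$ — more precisely the large-$s$ decay is controlled because $E_j(k)\to\infty$ and $\int_{I_L}|V|\psi_j^2\,dx$ stays bounded while the measure $\varphi_j'(s)\,ds=dk$ pushed forward keeps $\int dk<\infty$ against $(s+1)^{-1}$ — the dangerous part is $\int_0^{s_0}\frac{\varphi_j'(s)}{|s-\lambda|}\,ds\asymp\int_0^{s_0}\frac{s^{-1/2}}{|s-\lambda|}\,ds$. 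A direct computation of $\int_0^\infty \frac{s^{-1/2}}{s+|\lambda|}\,ds=\pi|\lambda|^{-1/2}$ handles the case $E<\mathcal{E}_j$ and yields a constant independent of $\lambda_0$; for $E>\mathcal{E}_j$ one estimates $\int_0^{s_0}\frac{s^{-1/2}}{|s-\lambda|}\,ds=O(\lambda^{-1/2})$ (the principal-value-type singularity contributes only a bounded term after the $t$-integration is not needed here — we only need the $L^1$ bound on the kernel $s\mapsto s^{-1/2}|s-\lambda|^{-1}$ restricted away from an $O(\lambda)$-neighbourhood plus the explicit $O(\lambda^{-1/2})$ mass of that neighbourhood), together with the remaining $s>s_0$ part which is $O(1)$. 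I expect the main technical obstacle to be exactly this uniform control: establishing that $\Theta_j(s)$, hence $\int_{I_L}|V(x,y)|\psi_j(x;\varphi_j(s))^2\,dx\,dy$, is bounded uniformly in $s$ near $0$ and integrable against $s^{-1/2}\,ds$ for large $s$, which requires uniform (in $k$) bounds on $\psi_j(\cdot;k)$ in $L^\infty(I_L)$ — these come from elliptic regularity for $\hat H(k)$ combined with the quadratic growth of $E_j(k)$, but tracking the uniformity carefully is the delicate point.
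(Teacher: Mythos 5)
Your plan follows essentially the same route as the paper: represent $T_j$ as a $k$-integral of rank-one operators weighted by $(E_j(k)-z)^{-1}$, change variables via the inverse function $\varphi_j$ so that $\varphi_j'(s)\asymp s^{-1/2}$, obtain boundary values by a Plemelj-type argument, read off ${\rm Im}\,T_j(E)$ as $\pi$ times the rank-$\le 2$, nonnegative operator coming from $k=\pm\varphi_j(E-{\mathcal E}_j)$, and get \eqref{san0} from $\int_0^\infty s^{-1/2}(s+|\lambda|)^{-1}\,ds=\pi|\lambda|^{-1/2}$ plus a splitting near $s=\lambda$ when $\lambda>0$. However, two points need repair. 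First, your opening identity is false as written: multiplication by $|V(x,y)|^{1/2}$ acts in the $y$-variable and is therefore not decomposable in the fibration induced by ${\mathcal F}$, so $T_j(z)$ is \emph{not} of the form ${\mathcal F}^*\bigl(\int^\oplus\cdots\,dk\bigr){\mathcal F}$. The correct (non-fibered) representation, which is what the paper uses, is $T_j(z)=\int_\re (E_j(k)-z)^{-1}G_j(k)^*G_j(k)\,dk$ as a Bochner integral in $S_1$, where $G_j(k):L^2({\mathcal S}_L)\to{\mathbb C}$ has kernel $(2\pi)^{-1/2}e^{-iky}|V(x,y)|^{1/2}\psi_j(x;k)$; the integrand is rank one on the full space $L^2({\mathcal S}_L)$, not on $L^2(I_L)$. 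Your trace-norm computation is consistent with this corrected formula, so the fix is local, but the formula you wrote cannot serve as the starting point.

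Second, the step you merely assert (``H\"older-type local regularity'') is precisely where the paper does its real work: the H\"older continuity in $k$, in the Hilbert--Schmidt norm, of the family $G_j(k)$ (your $\Theta_j$) with exponent $\gamma\in(0,(\alpha-1)/2)$, $\gamma\le 1$ (the paper's estimate \eqref{san6}). This is exactly where $\alpha>1$ enters, through $\int_\re|V(x,y)|\,|y|^{2\gamma}\,dy<\infty$ controlling the oscillating factor $e^{-iky}$, combined with smoothness of $k\mapsto\psi_j(\cdot;k)$ in $L^2(I_L)$; together with the quantitative bounds \eqref{san4}--\eqref{san5} on $\varphi_j$ and $\varphi_j'$ it is what legitimizes both the existence and continuity of the boundary values for $E>{\mathcal E}_j$ and the $O(\lambda^{-1/2})$ control of the principal-value piece near $s=\lambda$, uniformly as $\lambda\downarrow 0$. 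Without it the Privalov argument and the bound \eqref{san0} for $\lambda>0$ are unsupported. By contrast, the obstacle you single out as delicate --- uniform $L^\infty(I_L)$ bounds on $\psi_j(\cdot;k)$ --- is not needed at all: since ${\mathcal D}_\alpha$ is uniform in $x$, one bounds $\int_{{\mathcal S}_L}|V(x,y)|\psi_j(x;k)^2\,dx\,dy\le c\int_\re\langle y\rangle^{-\alpha}\,dy$ using only the $L^2(I_L)$-normalization of $\psi_j$, which is how the paper obtains its uniform constant $c_1$ in \eqref{a1}.
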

    \begin{proof}
    Let $G = G_j :
\re \to S_2(L^2({\mathcal S}_L), {\mathbb C})$ be the
operator-valued function given for $k \in \re$ by
$$
G(k)u : = \frac{1}{\sqrt{2\pi}} \int_{\re} \int_{I_L} e^{-iky}
|V(x,y)|^{1/2} \psi_j(x; k) u(x,y) dx dy, \quad u \in
L^2({\mathcal S}_L).
$$
Evidently,
    \bel{a1}
\|G(k)^*G(k)\|_1 = \|G(k)\|_2^2 \leq c_1 : = \frac{1}{2\pi}
\sup_{x \in I_L}\int_{\re} |V(x,y)| dy
    \ee
for any $k \in \re$. Next,
$$
\|G(k_1) - G(k_2)\|_2^2 = \frac{1}{2\pi} \int_{\re} \int_{ I_L}
|V(x,y)| |e^{-ik_1 y} \psi_j(x; k_1) - e^{-ik_2 y} \psi_j(x;
k_2)|^2 dx dy \leq
$$
$$
\frac{2^{  2(1-\gamma)}}{ \pi} \sup_{x \in I_L} \int_{\re}
|V(x,y)| |y|^{2\gamma} dy |k_1 - k_2|^{2\gamma} + 2c_1 \int_{I_L}
|\psi(x; k_1) - \psi(x;k_2)|^2 dx
$$
for $k_1, k_2 \in \re$, and $\gamma \in (0,(\alpha-1)/2)$, $\gamma
\leq 1$. Since $\psi \in C^{\infty}(\re_k; L^2(I_L))$, we have
$$
\int_{I_L} |\psi(x; k_1) - \psi(x;k_2)|^2 dx = O(|k_1 - k_2|^2)
$$
for $k_1, k_2 \in (-k_0,k_0)$ with $k_0 \in (0,\infty)$.
Therefore,
    \bel{san6}
    \|G(k_1) - G(k_2)\|_2 = O(|k_1 - k_2|^{\gamma})
    \ee
for $k_1, k_2 \in (-k_0,k_0)$, $k_0 \in (0,\infty)$, and $\gamma
\in (0,(\alpha-1)/2)$, $\gamma \leq 1$. Taking into account
\eqref{a1} and \eqref{a2}, we find that if $z \in {\mathbb C}_+$,
then
    \bel{a4}
    \|G_j^* G_j (E_j - z)^{-1}\|_1 \in L^1(\re).
    \ee
 Then the spectral theorem implies
    \bel{san0a}
T_j(z) = \int_{\re} \frac{G_j(k)^* G_j(k)}{E_j(k) -z} dk, \quad z
\in {\mathbb C}_+,
    \ee
where, due to \eqref{a4} and the continuity of the functions $G_j
: \re \to S_2(L^2({\mathcal S}_L), {\mathbb C})$ and $E_j : \re
\to \re$, the integral admits an interpretation as a Bochner
integral
in the Banach space $S_1$ (see e.g. \cite{BI}), and it is easy to see that
$T_j : {\mathbb C}_+ \to S_1$ is analytic.  \\
 Let
$F = F_j : (0,\infty) \to S_2(L^2({\mathcal S}_L), {\mathbb C}^2)$
be the operator-valued function defined for $s \in (0,\infty)$ by
$$
F(s)u : = \sqrt{\varphi'(s)} (G(\varphi(s))u, G(-\varphi(s))u),
\quad u \in L^2({\mathcal S}_L),
$$
where, as above, $\varphi = \varphi_j  $ denotes the function
inverse to $E_j - {\mathcal E}_j  $. Then we have
$$
T_j(z) = \int_0^{\infty} \frac{F_j(s)^* F_j(s)}{s-\lambda-i\delta}
ds, \quad z = {\mathcal E}_j + \lambda + i\delta \in {\mathbb
C}_+.
$$
Further, if $\lambda: = E-{\mathcal E}_j < 0$, set
    \bel{a26aa}
T_j(E) = \int_0^{\infty} \frac{F_j(s)^* F_j(s)}{s-\lambda} ds.
    \ee
Evidently, the operator $T_j(E)$ is self-adjoint. Also, it is easy
to check that \eqref{san1} holds true, and the function $T_j :
(-\infty, {\mathcal E}_j) \to S_1$ is continuous. By \eqref{a1}
and \eqref{san3a},
$$
\|T_j(E)\|_1 \leq 2c_1 \int_0^{\infty}
\frac{\varphi'(s)}{s+|\lambda|} ds = O\left(\int_0^{\infty}
\frac{ds}{s^{1/2}(s+|\lambda|)}\right) = O(|\lambda|^{-1/2}),
\quad \lambda < 0,
$$
so that \eqref{san0} holds in this case as well. \\
Let now $\lambda = E-{\mathcal E}_j > 0$. For $E = {\mathcal E}_j
+ \lambda$ put
    \bel{a17}
{\rm Re}\,T_j(E) : = {\rm v.p.} \, \int_0^{\infty} \frac{F_j(s)^*
F_j(s)}{s-\lambda} ds,
    \ee
    \bel{a18}
 {\rm Im}\,T_j(E) : = \pi F_j(\lambda)^*
F_j(\lambda), \ee
$$
T_j(E) : = {\rm Re}\,T_j(E) + i {\rm Im}\,T_j(E).
$$
Note that \eqref{a18} immediately implies \eqref{a15}. Moreover,
$$
{\rm v.p.} \, \int_0^{\infty} \frac{F(s)^* F(s)}{s-\lambda} ds =
\int_0^{\lambda/2} \frac{F(s)^* F(s)}{s-\lambda} ds +
\int_{3\lambda/2}^{\infty} \frac{F(s)^* F(s)}{s-\lambda} ds +
$$
    \bel{a29}
\int_{0}^{\lambda/2} \left(F(\lambda + \nu)^* F(\lambda + \nu) -
F(\lambda - \nu)^* F(\lambda - \nu)\right) \frac{d\nu}{\nu}.
    \ee
By \eqref{san3a} and \eqref{f18},
    \bel{san4}
|\varphi(\lambda + \nu) - \varphi(\lambda - \nu)| = O\left(
 (\lambda + \nu)^{1/2} - (\lambda - \nu)^{1/2}\right),
    \ee
    \bel{san5}
\left|\varphi'(\lambda + \nu) - \varphi'(\lambda - \nu)\right| =
O\left(
 (\lambda - \nu)^{-1/2} - (\lambda + \nu)^{-1/2}\right),
    \ee
for $\nu \in (0, \lambda/2)$, $\lambda \in (0,\lambda_0)$.\\
Taking into account \eqref{a1} - \eqref{san6}, \eqref{san3a}, and
   \eqref{a17} - \eqref{san5} we find that the operator
$T_j(E)$ is well defined, that \eqref{san1} holds true again, and
$$
\|F(\lambda)^* F(\lambda)\|_1 = O(\lambda^{-1/2}), \quad \lambda >
0,
$$
$$
\left\|\int_0^{\lambda/2} \frac{F(s)^* F(s)}{s-\lambda} ds
\right\|_1 = O(\lambda^{-1/2}), \quad
\left\|\int_{3\lambda/2}^{\infty} \frac{F(s)^* F(s)}{s-\lambda} ds
\right\|_1 = O(\lambda^{-1/2}), \quad \lambda > 0,
$$
$$
\left\|\int_{0}^{\lambda/2}  (F(\lambda + \nu)^* F(\lambda + \nu)
- F(\lambda - \nu)^* F(\lambda - \nu)) \frac{d\nu}{\nu}\right\|_1
= O(\lambda^{-1/2}), \quad \lambda \in (0,\lambda_0),
$$
 which yields again \eqref{san0}.
    \end{proof}
{\bf 4.5.} Let $j \in {\mathbb N}$. Set $P^+_j : =
\sum_{m=j}^{\infty}P_m$ where the convergence of the infinite sum
is understood in the strong sense. For $z \in {\mathbb C}_+$,
${\rm Re}\, z < {\mathcal E}_j$, put
$$
T_j^+(z): = |V|^{1/2} P_j^+ (H_0-z)^{-1} |V|^{1/2}.
$$
\begin{lemma} \label{al1}
Fix $j \in {\mathbb N}$. Let $E \in (-\infty, {\mathcal E}_j)$.
Then the limit
    \bel{a9}
    T_j^+(E) = T_j^+(E)^* = {\rm n}-\lim_{\delta \downarrow 0} T_j^+(E+i\delta)
    \ee
exists. Moreover, for any $z \in \overline{{\mathbb
C}_+}\setminus[{\mathcal E}_j,\infty)$ we have $T_j^+(z) \in S_2$,
and the operator-valued function $T_j^+ : \overline{{\mathbb
C}_+}\setminus [{\mathcal E}_j,\infty) \to S_2$ is continuous.
Finally, there exists a constant $C_+$ which depends on $V$, but
is independent of $E$ and $j$, such that
    \bel{a15a}
    \|T_j^+(E)\|_2 \leq C_+ {\mathcal E}_j ({\mathcal E}_j -
    E)^{-1}, \quad E \in (-\infty, {\mathcal E}_j).
    \ee
\end{lemma}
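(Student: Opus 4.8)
The plan is to build everything on a single structural fact: the projection $P_j^+$ is dominated by the spectral projection ${\mathbb P}_{[{\mathcal E}_j,\infty)}(H_0)$. Indeed, for $m\geq j$ and every $k\in\re$ one has $E_m(k)\geq E_m(0)={\mathcal E}_m\geq{\mathcal E}_j$ --- the first inequality since $E_m$ is even and $kE_m'(k)>0$ for $k\neq 0$ by \eqref{m21}, the second since $m\mapsto{\mathcal E}_m$ is strictly increasing (each $\hat H(k)$ has simple spectrum). Hence $p_m(k)={\mathbb P}_{[{\mathcal E}_j,\infty)}(\hat H(k))\,p_m(k)$ for $m\geq j$, so that $P_j^+={\mathbb P}_{[{\mathcal E}_j,\infty)}(H_0)\,P_j^+$; equivalently, the part of $H_0$ in ${\rm Ran}\,P_j^+$ is bounded below by ${\mathcal E}_j$. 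Consequently $(H_0-z)^{-1}P_j^+$, understood as the resolvent of that part, is analytic --- hence norm continuous --- on all of ${\mathbb C}\setminus[{\mathcal E}_j,\infty)$, which in particular contains the real interval $(-\infty,{\mathcal E}_j)$ even when $j\geq 2$ and this energy sits inside $\sigma(H_0)$, and $\|(H_0-z)^{-1}P_j^+\|\leq{\rm dist}(z,[{\mathcal E}_j,\infty))^{-1}$. From this \eqref{a9} is immediate: for $E<{\mathcal E}_j$ the point $E$ lies in the resolvent set of $H_0|_{{\rm Ran}\,P_j^+}$, so $(H_0-E-i\delta)^{-1}P_j^+\to(H_0-E)^{-1}P_j^+$ in operator norm as $\delta\downarrow 0$, whence $T_j^+(E+i\delta)\to|V|^{1/2}P_j^+(H_0-E)^{-1}|V|^{1/2}$ in norm; and this limit is self-adjoint because $|V|^{1/2}$, $P_j^+$, $(H_0-E)^{-1}$ are self-adjoint and the last two commute.

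For membership in $S_2$ and the continuity of $T_j^+$, the idea is to spend one power of the resolvent against $|V|^{1/2}$. By \eqref{sof4} (the instance $\gamma'=1$ used in Subsection 4.3) one has $|V|^{1/2}(H_0-E_0)^{-1}\in S_2$, and since ${\mathcal D}_\alpha$ forces $|V|\leq c$, the operator $|V|^{1/2}$ is bounded. For $z\in\overline{{\mathbb C}_+}\setminus[{\mathcal E}_j,\infty)$ I would write
\beq
T_j^+(z)=\big[\,|V|^{1/2}(H_0-E_0)^{-1}\,\big]\,\big[\,(H_0-E_0)(H_0-z)^{-1}P_j^+\,\big]\,|V|^{1/2},
\eeq
where the middle factor equals $P_j^++(z-E_0)(H_0-z)^{-1}P_j^+$ and is therefore bounded and, by the first paragraph, analytic (so norm continuous) in $z$ on ${\mathbb C}\setminus[{\mathcal E}_j,\infty)$. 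Since the product of an $S_2$ operator with bounded operators lies in $S_2$ and depends continuously, in the $S_2$ norm, on the bounded factors, this gives $T_j^+(z)\in S_2$ together with the continuity of $T_j^+:\overline{{\mathbb C}_+}\setminus[{\mathcal E}_j,\infty)\to S_2$ and its analyticity on the open part.

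Estimate \eqref{a15a} drops out of the same factorization:
\beq
\|T_j^+(E)\|_2\leq\big\|\,|V|^{1/2}(H_0-E_0)^{-1}\,\big\|_2\;\big\|\,(H_0-E_0)(H_0-E)^{-1}P_j^+\,\big\|\;\|V\|_{L^\infty}^{1/2},
\eeq
where the outer factors are constants depending only on $V$ and the middle one, by the domination of $P_j^+$, is $\leq\sup_{\lambda\geq{\mathcal E}_j}\frac{\lambda-E_0}{\lambda-E}$. The function $\lambda\mapsto\frac{\lambda-E_0}{\lambda-E}$ is monotone on $[{\mathcal E}_j,\infty)$, so this supremum equals $\frac{{\mathcal E}_j-E_0}{{\mathcal E}_j-E}$ for $E\geq E_0$ (and is $\leq 1$ otherwise); combined with ${\mathcal E}_j-E_0\leq(1+|E_0|\,{\mathcal E}_1^{-1})\,{\mathcal E}_j$ (recall $E_0<{\mathcal E}_1$) this yields \eqref{a15a} with a constant $C_+$ independent of $j$ and $E$.

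The one point that needs care is precisely the $j$-uniformity of $C_+$, and it is entirely the first paragraph that secures it: the localization $P_j^+\leq{\mathbb P}_{[{\mathcal E}_j,\infty)}(H_0)$ is what simultaneously supplies analyticity down to the real axis below ${\mathcal E}_j$ (indispensable when $j\geq 2$, where that axis meets $\sigma(H_0)$), a factorization whose $S_2$ factor $\|\,|V|^{1/2}(H_0-E_0)^{-1}\|_2$ involves no $j$, and the right ${\mathcal E}_j$- and $E$-dependence of the remaining bounded factor. After that the argument is routine.
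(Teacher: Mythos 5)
Your proof is correct and follows essentially the same route as the paper's: the localization $P_j^+={\mathbb P}_{[{\mathcal E}_j,\infty)}(H_0)\,P_j^+$ (the paper's \eqref{a10}) yields \eqref{a9} and the continuity, and your factorization into a $j$-independent Hilbert--Schmidt factor times a bounded factor controlled by a spectral supremum is the paper's \eqref{a11}--\eqref{a12}, the only cosmetic difference being that you invoke \eqref{sof4} (with $\gamma'=1$) where the paper rederives the Hilbert--Schmidt bound via the diamagnetic inequality and the explicit Dirichlet Green kernel. One small caveat, shared with the paper's own \eqref{a12}: for $E$ below $E_0$ (below $0$ in the paper) the supremum equals $1$ rather than $({\mathcal E}_j-E_0)({\mathcal E}_j-E)^{-1}$, so the argument really gives \eqref{a15a} with ${\mathcal E}_j({\mathcal E}_j-E)^{-1}$ replaced by $\max\{1,\,{\mathcal E}_j({\mathcal E}_j-E)^{-1}\}$ --- immaterial here, since the estimate is only used for $E$ in bounded sets.
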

\begin{proof}
We have
    \bel{a10}
    P_j^+ (H_0 - z)^{-1} = P_j^+ {\mathbb P}_{[{\mathcal E}_j,
    \infty)}(H_0) (H_0 - z)^{-1}
    \ee
and the operator valued function ${\mathbb P}_{[{\mathcal E}_j,
    \infty)}(H_0) (H_0 - z)^{-1}$ is analytic even on ${\mathbb
    C}\setminus  [{\mathcal E}_j,\infty)$. Since $P_j^+$ and
    $|V|^{1/2}$ are bounded operators, this analyticity implies,
    in particular,
    the existence of the limit in \eqref{a9} and the continuity of
$T_j^+ : \overline{{\mathbb C}_+}\setminus [{\mathcal E}_j,\infty)
\to {\mathcal B}$. Further,
    \bel{a11}
    \||V|^{1/2}P_j^+ (H_0 - E)^{-1}|V|^{1/2}\|_2 \leq
    \sup_{(x,y) \in {\mathcal S}_L} |V(x,y)|^{1/2} \|P_j^+ (H_0 -
    E)^{-1}H_0\| \, \|H_0^{-1} |V|^{1/2}\|_2.
    \ee
By \eqref{a10},
    \bel{a12}
\|P_j^+ (H_0 - E)^{-1}H_0\| \leq \|{\mathbb P}_{[{\mathcal E}_j,
    \infty)}(H_0) (H_0 - E)^{-1}H_0\| \leq \sup_{\lambda \in [{\mathcal E}_j,
    \infty)} \lambda (\lambda - E)^{-1} = {\mathcal E}_j ({\mathcal E}_j -
    E)^{-1}.
    \ee
On the other hand, the diamagnetic inequality for Hilbert-Schmidt
operators (see e.g. \cite[Theorem 2.13]{S}) implies
    \bel{a13}
\|H_0^{-1} |V|^{1/2}\|_2 \leq \|\Delta_D^{-1} |V|^{1/2}\|_2
    \ee
where, as above, $\Delta_D$ is the Dirichlet Laplacian defined on
${\mathcal S}_L$. The integral kernel of $\Delta_D$ is explicitly
known, and we easily find
    \bel{a14}
\|\Delta_D^{-1} |V|^{1/2}\|_2^2 \leq 16 c_1 \frac{L^3}{\pi^3}
\sum_{n=1}^{\infty} n^{-3} \int_0^{\infty}
\frac{d\xi}{(\xi^2+1)^2}.
    \ee
Putting together \eqref{a11} - \eqref{a14}, we obtain \eqref{a15a}. \\
Finally, an estimate similar to \eqref{a11} of the Hilbert-Schmidt
norm of the difference $|V|^{1/2}P_j^+ (H_0 - z_1)^{-1}|V|^{1/2} -
|V|^{1/2}P_j^+ (H_0 - z_2)^{-1}|V|^{1/2}$, $z_1, z_2 \in
\overline{{\mathbb C}_+}\setminus [{\mathcal E}_j,\infty)$ easily
implies the continuity of $T_j^+ : \overline{{\mathbb
C}_+}\setminus [{\mathcal E}_j,\infty) \to S_2$.
\end{proof}
{\bf 4.6.} In this subsection we prove \eqref{sof5} - \eqref{sof6}
as well as Proposition \ref{ap21}. \\
Let $E \in \re\setminus{\mathcal Z}$. If $E$ has one nearest
element from ${\mathcal Z}$, let $q = q(E)$ be the number of this
neighbour; if $E$ has two nearest elements from ${\mathcal Z}$,
for definiteness let $q(E)$ be the number of the greater of these
elements. Set
    \bel{a16}
    T(E): = \sum_{j=1}^{q(E)}T_j(E) + T_{q(E)+1}^+(E).
    \ee
\begin{follow} \label{af1}
Let $V$ satisfy ${\cal D}_{\alpha}$ with $\alpha > 1$, and let $E
\in \re\setminus{\mathcal Z}$. Then \eqref{sof5} - \eqref{sof6}
hold true, the limiting operator $T(E)$ being defined in
\eqref{a16}. Moreover,
    \bel{a18a}
    {\rm rank}\;{\rm Im}\,T(E) \leq 2 q(E).
    \ee
\end{follow}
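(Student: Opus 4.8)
The plan is to reduce Corollary \ref{af1} to Lemmas \ref{sanl1} and \ref{al1} by decomposing the resolvent of $H_0$ along the spectral projections $P_j$. Fix $E \in \re\setminus{\mathcal Z}$ and put $q = q(E)$. For each $k$ the eigenfunctions $\psi_j(\cdot;k)$ form an orthonormal basis of $L^2(I_L)$, so $\sum_{j=1}^{\infty}p_j(k) = I$ and hence $\sum_{j=1}^{\infty}P_j = I$ in the strong operator topology; consequently $P_{q+1}^+ = I - \sum_{j=1}^{q}P_j$ and $\sum_{j=1}^{q}P_j + P_{q+1}^+ = I$ as an exact operator identity. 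Multiplying it on the right by $(H_0-z)^{-1}$ and sandwiching between two copies of $|V|^{1/2}$ yields, for every $z \in {\mathbb C}_+$,
\[
T(z) \;=\; \sum_{j=1}^{q}T_j(z) \,+\, T_{q+1}^+(z),
\]
with $T_j$ and $T_{q+1}^+$ the operators introduced in Subsections 4.4 and 4.5.

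First I would record the elementary bookkeeping fact that the definition of $q=q(E)$ forces $E < {\mathcal E}_{q+1}$: if $E < {\mathcal E}_1$ then $q=1$ and ${\mathcal E}_2 > E$, while if ${\mathcal E}_n < E < {\mathcal E}_{n+1}$ then $q \in \{n,n+1\}$ and ${\mathcal E}_{q+1} \ge {\mathcal E}_{n+1} > E$ in either case. This is precisely what places $z = E + i\delta$ in the region $\overline{{\mathbb C}_+}\setminus[{\mathcal E}_{q+1},\infty)$ on which Lemma \ref{al1} applies, so that the operator-norm limit $T_{q+1}^+(E) = T_{q+1}^+(E)^* = {\rm n}-\lim_{\delta\downarrow 0}T_{q+1}^+(E+i\delta)$ exists and belongs to $S_2$. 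Likewise, since $E \notin {\mathcal Z}$ we have $E \ne {\mathcal E}_j$ for $j = 1,\ldots,q$, so Lemma \ref{sanl1} gives the $S_1$-limits $T_j(E) = \lim_{\delta\downarrow 0}T_j(E+i\delta)$, with $T_j(E)$ self-adjoint when $E < {\mathcal E}_j$ and $0 \le {\rm Im}\,T_j(E) \in S_1$, ${\rm rank}\,{\rm Im}\,T_j(E) \le 2$, when $E > {\mathcal E}_j$.

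Letting $\delta\downarrow 0$ in the displayed identity, the first $q$ terms converge in $S_1$ and the tail term converges in $S_2$, so $T(E+i\delta)$ converges in operator norm to $\sum_{j=1}^{q}T_j(E) + T_{q+1}^+(E)$. Hence the limit in \eqref{sof5} exists for this $E$ and equals the operator $T(E)$ of \eqref{a16}; in particular $T(E) \in S_1+S_2 \subset S_{\infty}$, which is the pointwise counterpart, valid for every $E \in \re\setminus{\mathcal Z}$, of the almost-everywhere assertion of Lemma \ref{sofl1}. Taking imaginary parts annihilates $T_{q+1}^+(E)$ and every self-adjoint $T_j(E)$ with ${\mathcal E}_j > E$, leaving ${\rm Im}\,T(E) = \sum_{j\le q,\ {\mathcal E}_j<E}{\rm Im}\,T_j(E)$, a finite sum of non-negative trace-class operators each of rank at most $2$; this proves \eqref{sof6}, and since there are at most $q$ such summands it gives ${\rm rank}\,{\rm Im}\,T(E) \le 2q(E)$, i.e.\ \eqref{a18a}. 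I expect no genuine obstacle: the analytic substance is already contained in Lemmas \ref{sanl1} and \ref{al1}, and the only point calling for care is the verification $E < {\mathcal E}_{q(E)+1}$, needed exactly so that the tail term $T_{q+1}^+$ falls within the scope of Lemma \ref{al1}.
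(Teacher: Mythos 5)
Your proposal is correct and follows essentially the same route as the paper: it uses the exact decomposition $T(E+i\delta)=\sum_{j=1}^{q(E)}T_j(E+i\delta)+T_{q(E)+1}^+(E+i\delta)$ and then invokes \eqref{san1}, \eqref{a9} for the limits and \eqref{a15} for the imaginary parts, exactly as in Subsection 4.6. Your explicit verification that $E<{\mathcal E}_{q(E)+1}$ (so that Lemma \ref{al1} applies to the tail term) is a point the paper leaves implicit, but it is the same argument.
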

\begin{proof}
In order to prove the existence   of the limit \eqref{sof5}, we
just have to write
$$
T(E+i\delta): = \sum_{j=1}^{q(E)}T_j(E+i\delta) +
T_{q(E)+1}^+(E+i\delta), \quad \delta > 0,
$$
and to apply \eqref{san1} and \eqref{a9}. In order to prove
\eqref{sof6} and \eqref{a18a}, it suffices to apply \eqref{a15},
bearing in mind that ${\rm Im}\, T(E) = \sum_{j=1}^{q(E)} {\rm
Im}\, T_j(E)$.
    \end{proof}
    Next we prove Proposition \ref{ap21}. The
proof of the continuity of the SSF repeats word by word the proof
of the continuity part of \cite[Proposition 2.5]{BPR}. Let us show
that the SSF is locally bounded, i.e.  that it is bounded on every
compact
subset of $\re\setminus{\mathcal Z}$. \\
Let $E \in \re\setminus{\mathcal Z}$. Applying \eqref{sof7},
\eqref{a20}, and \eqref{a20a}, we get
    \bel{a22}
    |\xi(E;H,H_0)| \leq n_*(s; {\rm Re}\,T(E)) + {\rm rank}\,{\rm
    Im}\,T(E), \quad s \in (0,1).
    \ee
By \eqref{a19},
    \bel{a23}
n_*(s; {\rm Re}\,T(E)) \leq n_*(s/2; \sum_{j=1}^{q(E)}{\rm
Re}\,T_j(E)) + n_*(s/2; T^+_{q(E)+1}(E)).
    \ee
    Using \eqref{32a} with $p=1$ and $p=2$, as well as
    \eqref{san0} and \eqref{a15}, we get
    \bel{a24}
    n_*(s/2; \sum_{j=1}^{q(E)}{\rm
Re}\,T_j(E)) \leq \frac{2}{s} \sum_{j=1}^{q(E)}\|T_j(E)\|_1  \leq
\frac{2}{s} \sum_{j=1}^{q(E)} C_j |E - {\mathcal E}_j|^{-1/2},
    \ee
    \bel{a25a}
    n_*(s/2; T^+_{q(E)+1}(E)) \leq \frac{4}{s^2}
    \|T^+_{q(E)+1}(E)\|_2^2 \leq \frac{4}{s^2}
    C_+^2 {\mathcal E}_{q(E) + 1}^2 ({\mathcal E}_{q(E) + 1} -
    E)^{-2}.
    \ee
    Now the combination of \eqref{a22}, \eqref{a15}, and
    \eqref{a23} - \eqref{a25a} implies the local boundedness of the
    SSF. \\

{\bf 4.7}. In this subsection we prove Theorem \ref{t2}.
\begin{pr} \label{ap42} Assume that $V$ satisfies ${\cal D}_{\alpha}$
with $\alpha > 1$. Pick $q \in {\mathbb N}$ and $\lambda \neq 0$
such that $E: = {\mathcal E}_q + \lambda \not \in {\mathcal Z}$.
Then we have
    \bel{a28}
    {\rm ind}\,(J + \varepsilon + {\rm Re}\,T_q(E), J +
    \varepsilon) + O(1) \leq \xi(E;H,H_0) \leq
    {\rm ind}\,(J - \varepsilon + {\rm Re}\,T_q(E), J -
    \varepsilon) + O(1)
    \ee
as $\lambda \to 0$ for each $\varepsilon \in (0,1)$.
    \end{pr}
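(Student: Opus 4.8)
The plan is to start from the Pushnitski representation \eqref{sof7} of the SSF at the energy $E = \mathcal{E}_q + \lambda$ and split off the ``dangerous'' term $T_q(E)$, whose trace-class norm blows up like $|\lambda|^{-1/2}$, from the remaining terms, which stay bounded. Recall from \eqref{a16} that $T(E) = \sum_{j=1}^{q} T_j(E) + T_{q+1}^+(E)$ (using $q(E) = q$ for $\lambda$ small, since $\mathcal{E}_q$ is the nearest threshold). Write $R(E) := \sum_{j=1}^{q-1} T_j(E) + T_{q+1}^+(E)$, so that $T(E) = T_q(E) + R(E)$. By Lemma \ref{sanl1} and Lemma \ref{al1}, for $|\lambda|$ bounded away from the other thresholds, $\sum_{j=1}^{q-1}\|T_j(E)\|_1$ and $\|T_{q+1}^+(E)\|_2$ are bounded uniformly in $\lambda$; in particular $n_*(s; \mathrm{Re}\,R(E))$ is $O(1)$ for any fixed $s \in (0,1)$, and likewise $\mathrm{rank}\,\mathrm{Im}\,R(E)$ is $O(1)$ (indeed it is $\leq 2(q-1)$ by \eqref{a15}). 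The only unbounded contribution is $T_q(E)$.

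Next I would feed this decomposition into \eqref{sof7}. Inside the index, $J + \mathrm{Re}\,T(E) + t\,\mathrm{Im}\,T(E) = \bigl(J + \mathrm{Re}\,T_q(E)\bigr) + \bigl(\mathrm{Re}\,R(E) + t\,\mathrm{Im}\,T(E)\bigr)$, where the second bracket is a compact self-adjoint perturbation. The idea is to apply Lemma \ref{lparis2} with $M = J + \mathrm{Re}\,T_q(E)$ — but $M$ has $0$ in its essential spectrum only in the limit, so instead I take $M_\pm := J \pm \varepsilon$, which is invertible with $[-\varepsilon/2,\varepsilon/2]\cap\sigma(M_\pm)=\emptyset$ (since $\sigma(J)\subseteq\{-1,1\}$), absorb $\mathrm{Re}\,T_q(E)$ together with $\mathrm{Re}\,R(E)$ into the compact perturbations, and use the two-sided bound \eqref{paris2}. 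Concretely, writing $A = \mathrm{Re}\,T_q(E)$ (the large compact operator) and $B = \mathrm{Re}\,R(E) + t\,\mathrm{Im}\,T(E)$, estimate \eqref{paris2} with $s = \varepsilon$ gives
$$
{\rm ind}(M_- + B, M_-) - n_+(\varepsilon; A) \leq {\rm ind}(M_- + A + B, M_-) ,
$$
and symmetrically an upper bound via $M_+$; but the cleaner route is to compare directly ${\rm ind}(J + \mathrm{Re}\,T_q(E) + B, J)$ with ${\rm ind}(J\pm\varepsilon + \mathrm{Re}\,T_q(E) + B, J\pm\varepsilon)$ and then strip $B$ using \eqref{a20a}/\eqref{paris2} again, the resulting errors being controlled by $n_*$ of the bounded pieces, hence $O(1)$. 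Finally I integrate in $d\mu(t)$; the $t$-dependence enters only through $\mathrm{Im}\,T(E)$, which is finite rank of rank $\leq 2q$, so the $t$-integral of the index involving $J\pm\varepsilon+\mathrm{Re}\,T_q(E)$ alone is just ${\rm ind}(J\pm\varepsilon+\mathrm{Re}\,T_q(E), J\pm\varepsilon)$ (the integrand is $t$-independent once $\mathrm{Im}$-parts are removed), while the contribution of the remaining $t$-dependent terms is $O(1)$ by Lemma \ref{l31} with $T_2 = \mathrm{Im}\,T(E)\in S_1$ and $\|T_2\|_1 = O(|\lambda|^{-1/2})$ — wait, this last point needs care.

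The main obstacle is precisely that last point: $\|\mathrm{Im}\,T_q(E)\|_1$ is itself $O(|\lambda|^{-1/2})$, so a naive application of Lemma \ref{l31} to bound the $t$-integral of the error would give $O(|\lambda|^{-1/2})$, not $O(1)$. The resolution is that $\mathrm{Im}\,T_q(E)$ has rank at most $2$ by \eqref{a15}, so one should not push it through the trace-class estimate of Lemma \ref{l31} but rather through the finite-rank estimate \eqref{a20a} (or \eqref{paris2} with the finite-rank clause): moving $\mathrm{Im}$-parts of $T(E)$ across the index costs at most ${\rm rank}\,\mathrm{Im}\,T(E) \leq 2q = O(1)$ in \eqref{sof7}, uniformly in $t$, and the $d\mu$-integral of a constant is $1$. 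Thus one first removes \emph{all} the imaginary parts at the cost of $O(1)$, reducing the integrand to the $t$-independent quantity ${\rm ind}(J + \mathrm{Re}\,T(E), J)$, then removes $\mathrm{Re}\,R(E)$ at the cost of $n_*(s;\mathrm{Re}\,R(E)) = O(1)$ via \eqref{paris2}, then compares $J$ with $J\pm\varepsilon$ using monotonicity of the index and \eqref{paris2} with $A = \mathrm{Re}\,T_q(E)$, $B=0$, $s = \varepsilon$, landing on \eqref{a28}. Carrying the $\pm\varepsilon$ consistently through the two one-sided inequalities so that the lower bound carries $+\varepsilon$ and the upper bound carries $-\varepsilon$ (matching the signs in \eqref{a28}) is the one place where one must be attentive to the direction of each estimate in \eqref{paris2}.
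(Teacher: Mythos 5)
Your final route is essentially the paper's proof: first strip the $t$-dependent term $t\,{\rm Im}\,T(E)$ from the integrand of \eqref{sof7} using the finite-rank bound \eqref{a18a} together with \eqref{a20a} (cost $2q = O(1)$, uniformly in $t$), reducing to ${\rm ind}(J+{\rm Re}\,T(E),J)$, and then apply \eqref{paris2} with $M=J$, $A=\tilde T(E)=\sum_{j<q}{\rm Re}\,T_j(E)+T^+_{q+1}(E)$, $B={\rm Re}\,T_q(E)$, $s=\varepsilon$, with $n_*(\varepsilon;\tilde T(E))=O(1)$ by \eqref{san0}, \eqref{a15a}, \eqref{a19} and \eqref{32a}. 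The only cosmetic difference is that your closing description adds a redundant second comparison of $J$ with $J\pm\varepsilon$, which the single application of \eqref{paris2} already provides, so the argument is correct and matches the paper.
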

    \begin{proof}
Applying   \eqref{sof7}, \eqref{a20a}, and \eqref{a18a}, we get
    \bel{a25}
\left| \xi(E;H,H_0) -  {\rm ind}(J+{\rm Re}\,T(E), J)\right| \leq
2 q(E).
    \ee
Write ${\rm Re}\,T(E) = {\rm Re}\,T_q(E) + \tilde{T}(E)$ where
$\tilde{T}(E) : = \sum_{j<q} {\rm Re}\,T_j(E) + T_{q+1}^+(E).$ By
\eqref{paris2},
$$
{\rm ind}\,(J + \varepsilon + {\rm Re}\,T_q(E), J +
    \varepsilon) - n_*(\varepsilon; \tilde{T}(E)) \leq {\rm ind}(J+{\rm Re}\,T(E),
    J)\leq
    $$
    \bel{a26}
{\rm ind}\,(J - \varepsilon + {\rm Re}\,T_q(E), J - \varepsilon) +
n_*(\varepsilon; \tilde{T}(E)).
    \ee
Using \eqref{a19} and arguing as in the derivation of \eqref{a24},
\eqref{a25a}, we get
    \bel{a27}
n_*(\varepsilon; \tilde{T}(E)) \leq \frac{2}{\varepsilon} \sum_{j:
j<q} C_j |{\mathcal E}_q - {\mathcal E}_j + \lambda|^{-1/2} +
\frac{4}{\varepsilon^2}C_+^2 {\mathcal E}_{q+1}^2 ({\mathcal
E}_{q+1} - {\mathcal E}_q - \lambda)^{-2} = O(1), \quad \lambda
\to 0.
    \ee
Now the combination of \eqref{a25} -- \eqref{a27} yields
\eqref{a28}.
    \end{proof}
    Fix $j \in {\mathbb N}$. Let $g = g_j : \re \to S_2(L^2({\mathcal S}_L),
    {\mathbb C})$ be the operator-valued function given for $k \in
    \re$ by
$$
g_j(k)u = \frac{1}{\sqrt{2\pi}} \int_{\re} \int_{I_L} e^{-iky}
|V(x,y)|^{1/2} \psi_j(x;0) u(x,y) dx dy, \quad u \in L^2({\mathcal
S}_L).
$$
Similarly to \eqref{a1} and \eqref{san6} we have
    \bel{a31}
    \|g(k)\|_2^2 \leq c_1, \quad k \in \re,
    \ee
    \bel{a32}
    \|g(k_1) - g(k_2)\|_2 = O(|k_1 - k_2|^{\gamma}), \quad k_1,
    k_2 \in \re,
    \ee
for any $\gamma \in (0,(\alpha-1)/2)$ such that $\gamma \leq 1$.
By analogy with \eqref{san0a} set
    \bel{a38}
    \tilde{\tau}_j(z) : = \int_{\re} \frac{g_j(k)^* g_j(k)}{E_j(k) -
    z } dk, \quad z \in {\mathbb C}_+.
    \ee
As in the case of the operator $T_j(z)$ (see Lemma \ref{sanl1}) we
can show that in $S_1$ there exists a limit
$$
\tilde{\tau}_j(E) = \lim_{\delta \downarrow 0}
\tilde{\tau}_j(E+i\delta), \quad E \in \re\setminus\{{\mathcal
E}_j\}.
$$
\begin{pr} \label{ap43} Let $V$ satisfy ${\mathcal D}_{\alpha}$
with $\alpha > 1$. Fix $q \in {\mathbb N}$, and let $E = {\mathcal
E}_q + \lambda \not \in {\mathcal Z}$. Then for each $\varepsilon
\in (0,1/2)$ we have
    \bel{a27a}
{\rm ind}\,(J + 2\varepsilon + {\rm Re}\,\tilde{\tau}_q(E), J +
2\varepsilon) + O(1) \leq  {\rm ind}\,(J + \varepsilon + {\rm
Re}\,T_q(E), J + \varepsilon),
    \ee
    \bel{a28a}
{\rm ind}\,(J - 2\varepsilon + {\rm Re}\,\tilde{\tau}_q(E), J -
2\varepsilon) + O(1) \geq  {\rm ind}\,(J - \varepsilon + {\rm
Re}\,T_q(E), J - \varepsilon),
    \ee
as $\lambda \downarrow 0$.
\end{pr}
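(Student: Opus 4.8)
The plan is to compare the operator $T_q(z)$, built from the exact eigenfunctions $\psi_j(\cdot;k)$, with $\tilde{\tau}_q(z)$, built from the frozen eigenfunction $\psi_q(\cdot;0)$, and to show that the difference of their real parts is, up to a finite-rank error, small enough in trace norm to be absorbed into the $\varepsilon$-shift in the index. Concretely, write $R(E) := {\rm Re}\,T_q(E) - {\rm Re}\,\tilde{\tau}_q(E)$ for $E = {\mathcal E}_q + \lambda$. Using the representations \eqref{san0a} and \eqref{a38}, the difference $T_q(z) - \tilde{\tau}_q(z)$ is $\int_{\re}(E_q(k)-z)^{-1}\bigl(G_q(k)^*G_q(k) - g_q(k)^*g_q(k)\bigr)\,dk$; since $G_q(k)$ and $g_q(k)$ differ only through $\psi_q(\cdot;k) - \psi_q(\cdot;0) = O(|k|)$ in $L^2(I_L)$ (by $\psi_q \in C^\infty(\re_k;L^2(I_L))$) and the relevant $k$'s satisfy $k \asymp |\lambda|^{1/2}$ by \eqref{f12}, one gains an extra factor that makes the non-resonant part of $R(E)$ bounded in $S_1$ uniformly as $\lambda \to 0$. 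A finite-rank piece coming from the imaginary part (rank at most $2$ by \eqref{a15}) contributes only an $O(1)$ term through \eqref{a20a}.

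The key steps, in order, are: (1) split $T_q(z) = {\rm Re}\,T_q + i\,{\rm Im}\,T_q$ and likewise $\tilde\tau_q$, and observe via \eqref{a18} that ${\rm Im}\,T_q(E)$ and ${\rm Im}\,\tilde\tau_q(E)$ are finite rank, hence moving between $T_q$ and ${\rm Re}\,T_q$ costs only $O(1)$ in the index by \eqref{a20a}; (2) estimate the trace norm of the \emph{regular part} of $R(E)$ — the contributions of the integrals over $|k - \varphi_q(\lambda)| > c|\lambda|^{1/2}$ (and the symmetric piece near $-\varphi_q(\lambda)$) — using \eqref{a32}, the Lipschitz-type bound $\|\psi_q(\cdot;k)-\psi_q(\cdot;0)\|_{L^2(I_L)} = O(|k|)$ near $k=0$, and the change of variables $s = E_q(k) - {\mathcal E}_q$ with $\varphi_q'(s)\asymp s^{-1/2}$ from \eqref{san3a}, to conclude this part is $O(1)$ in $S_1$; (3) handle the \emph{resonant part} near $\pm\varphi_q(\lambda)$ exactly as the principal-value estimate in the proof of Lemma \ref{sanl1} (formulas \eqref{a29}–\eqref{san5}), where the extra power of $|k|\asymp|\lambda|^{1/2}$ coming from $\psi_q(\cdot;k)-\psi_q(\cdot;0)$ improves the $O(|\lambda|^{-1/2})$ there to $O(1)$; (4) with $\|R(E)\|_1 = O(1)$ and ${\rm rank}$ of the finite-rank imaginary-part corrections $\le$ const, apply \eqref{paris2} of Lemma \ref{lparis2} with $M = J + \varepsilon$ (note $0 \notin \sigma(J+\varepsilon)$ since $\varepsilon \in (0,1/2)$ and $J = \pm 1$, so $[-\,\varepsilon',\varepsilon'] \cap \sigma(J+\varepsilon) = \emptyset$ for small $\varepsilon'$), absorbing ${\rm Re}\,T_q(E) = {\rm Re}\,\tilde\tau_q(E) + R(E)$ by letting the small parameter in \eqref{paris2} sit between $\varepsilon$ and $2\varepsilon$; the Chebyshev bound \eqref{32a} with $p=1$ turns $\|R(E)\|_1 = O(1)$ into $n_*(\cdot;R(E)) = O(1)$, and likewise for the finite-rank pieces via \eqref{m15}. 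This yields \eqref{a27a}; the estimate \eqref{a28a} follows by the mirror-image argument with $M = J - \varepsilon$, $M - \varepsilon'$ in place of $M + \varepsilon'$, and the inequalities reversed.

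I expect the main obstacle to be step (3): controlling the \emph{resonant} contribution to $R(E)$ near $k = \pm\varphi_q(\lambda)$ in trace norm uniformly in $\lambda \to 0$. The naive bound there diverges like $|\lambda|^{-1/2}$, exactly as in Lemma \ref{sanl1}; the point is that the integrand of $R$ carries an \emph{additional} factor $\|\psi_q(\cdot;k) - \psi_q(\cdot;0)\|_{L^2} = O(|k|) = O(|\lambda|^{1/2})$ relative to that of $T_q$ itself, so the product is $O(1)$ — but making this rigorous requires re-running the principal-value decomposition \eqref{a29} with the difference kernel $F_q(s)^*F_q(s) - \sqrt{\varphi_q'(s)}^{\,2}\bigl(g_q(\varphi_q(s))^*g_q(\varphi_q(s)) + g_q(-\varphi_q(s))^*g_q(-\varphi_q(s))\bigr)$ in place of $F_q(s)^*F_q(s)$, and checking that the Hölder-continuity estimates \eqref{san4}–\eqref{san5} together with the extra gain survive the symmetrization. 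Once that bookkeeping is done, the index comparison via Lemma \ref{lparis2} is routine.
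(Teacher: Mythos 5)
Your overall route is the same as the paper's: reduce \eqref{a27a}--\eqref{a28a} via Lemma \ref{lparis2} (with $M=J\pm\varepsilon$, shift $s=\varepsilon$, which is admissible since $\varepsilon\in(0,1/2)$) and the Chebyshev bound \eqref{32a} to the single estimate $n_*(\varepsilon;{\rm Re}\,T_q(E)-{\rm Re}\,\tilde\tau_q(E))=O(1)$, i.e.\ to a uniform trace-norm bound on the difference; this is exactly \eqref{a37} in the paper, and your index bookkeeping in step (4) is correct (your step (1) about imaginary parts is superfluous, since the statement already involves only real parts).

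There is, however, a genuine gap in your step (2), and you have located the difficulty in the wrong place. Write $G_q(k)=g_q(k)+k\varrho_q(k)$ with $\varrho_q$ built from $(\psi_q(\cdot;k)-\psi_q(\cdot;0))/k$. Then $G_q^*G_q-g_q^*g_q$ contains the cross term $k\,(g_q(k)^*\varrho_q(k)+\varrho_q(k)^*g_q(k))$, whose trace norm is genuinely of size $|k|$ (no better), and in the non-resonant region your ``extra factor $O(|k|)$'' heuristic gives only $\int \frac{|k|}{|E_q(k)-{\mathcal E}_q-\lambda|}\,dk \asymp \int_{c\sqrt{\lambda}}^{c'} k^{-1}dk \asymp |\ln\lambda|$, not $O(1)$; the quadratic term $k^2\varrho_q^*\varrho_q$ is harmless, but the linear one is not. (Also, the claim that ``the relevant $k$'s satisfy $k\asymp|\lambda|^{1/2}$'' is false for this part: all $k$ contribute.) The missing idea is the cancellation between the $+k$ and $-k$ branches: the linear cross term is odd in $k$ up to Hölder-continuity errors, so after pairing $k$ with $-k$ one estimates $g_q(k)^*\varrho_q(k)-g_q(-k)^*\varrho_q(-k)$ by $O(|k|^{\gamma})$ using \eqref{a32} and the analogous bound \eqref{a34} for $\varrho_q$; in the $s$-variable this is precisely the paper's estimate \eqref{a35}, $\varphi(s)\|f(s)^*r(s)\|_1=O(s^{\gamma/2})$, which makes $\int_{3\lambda/2}^{s_0}O(s^{\gamma/2-1})\,ds=O(1)$ uniformly in $\lambda$. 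Without this symmetrization your argument only yields $O(|\ln\lambda|)$ in \eqref{a27a}--\eqref{a28a}, which is weaker than the claimed $O(1)$ (and would also spoil Corollary \ref{f3}). By contrast, the resonant piece near $s=\lambda$, which you single out as the main obstacle, does work essentially as you describe: there the factor $\varphi(s)\asymp\lambda^{1/2}$ together with the principal-value decomposition \eqref{a29} and the Hölder bounds \eqref{san4}--\eqref{san5} upgrades the $O(\lambda^{-1/2})$ of Lemma \ref{sanl1} to $O(1)$, exactly as in the paper.
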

\begin{proof}
Using \eqref{paris2} and \eqref{a20}, we obtain
$$
{\rm ind}\,(J + 2\varepsilon + {\rm Re}\,\tilde{\tau}_q(E), J +
2\varepsilon) - n_*(\varepsilon; {\rm Re}\,T_q(E) - {\rm
Re}\,\tilde{\tau}_q(E)) \leq {\rm ind}\,(J + \varepsilon + {\rm
Re}\,T_q(E), J + \varepsilon),
$$
$$
{\rm ind}\,(J - 2\varepsilon + {\rm Re}\,\tilde{\tau}_q(E), J -
2\varepsilon) + n_*(\varepsilon; {\rm Re}\,T_q(E) - {\rm
Re}\,\tilde{\tau}_q(E)) \geq {\rm ind}\,(J - \varepsilon + {\rm
Re}\,T_q(E), J - \varepsilon).
    $$
Hence, in order to prove \eqref{a27a} -- \eqref{a28a}, it suffices
to show that for each $\varepsilon>0$ we have
    \bel{a37}
n_*(\varepsilon; {\rm Re}\,T_q(E) - {\rm Re}\,\tilde{\tau}_q(E)) =
O(1), \quad \lambda \to 0.
    \ee
Let again $\varphi = \varphi_q$ be the function inverse to $E_q -
{\mathcal E}_q$. Denote by $f = f_q: (0,\infty) \to
S_2(L^2({\mathcal S}_L), {\mathbb C}^2)$ the operator-valued
function defined for $s \in (0,\infty)$ by
$$
f(s)u : = \sqrt{\varphi'(s)} (g(\varphi(s))u, g(-\varphi(s))u),
\quad u \in L^2({\mathcal S}_L).
$$
Then similarly to \eqref{a26aa}, \eqref{a17}, and \eqref{a29}, we
have
$$
 \tilde{\tau}_q({\mathcal E}_q + \lambda) = \tilde{\tau}_q({\mathcal E}_q + \lambda)^* =
 \int_0^{\infty} \frac{f_q(s)^* f_q(s)}{s-\lambda} ds
$$
if $\lambda < 0$, and
    $$
{\rm Re}\,\tilde{\tau}_q({\mathcal E}_q + \lambda)   =
\int_0^{\lambda/2} \frac{f_q(s)^* f_q(s)}{s-\lambda} ds +
\int_{3\lambda/2}^{\infty} \frac{f_q(s)^* f_q(s)}{s-\lambda} ds +
$$
    $$
\int_{0}^{\lambda/2} \left(f_q(\lambda + \nu)^* f_q(\lambda + \nu)
- f_q(\lambda - \nu)^* f_q(\lambda - \nu)\right) \frac{d\nu}{\nu}
    $$
if $\lambda > 0$. Further, we have
$$
G(k) = g(k) + k \varrho(k)
$$
where $\varrho : \re \to L^2(L^2({\mathcal S}_L),{\mathbb C})$ is
the operator-valued function given for $k \in \re$ by
$$
\varrho(k)u = \frac{1}{\sqrt{2\pi}} \int_{\re} \int_{I_L} e^{-iky}
|V(x,y)|^{1/2} \tilde{\psi}(x;k) u(x,y) dx dy, \quad u \in
L^2({\mathcal S}_L),
$$
where $\tilde{\psi}(x;k) : = \frac{\psi_q(x;k) - \psi_q(x;0)}{k}$.
Evidently,
    \bel{a33}
    \|\varrho(k)\|_2^2 \leq c_1 \int_{I_L} \tilde{\psi}(x;k)^2 dx,
    \quad k \in \re,
    \ee
    \bel{a34}
\|\varrho(k_1) - \varrho(k_2)\|_2 = O(|k_1 - k_2|^{\gamma})
    \ee
    for $k_1,
    k_2 \in (-k_0,k_0)$ with $k_0 \in (0,\infty)$, and
    $\gamma \in (0, (\alpha-1)/2)$, $\gamma \leq 1$.
Next, we have
$$
F(s) = f(s) + \varphi(s) r(s)
$$
where $r : (0,\infty) \to S_2(L^2({\mathcal S}_L), {\mathbb C}^2)$
is the operator-valued function defined for $s \in (0,\infty)$ by
$$
r(s)u : = \sqrt{\varphi'(s)} (\varrho(\varphi(s))u,
-\varrho(-\varphi(s))u), \quad u \in L^2({\mathcal S}_L).
$$
Therefore,
    \bel{a40}
F(s)^*F(s) = f(s)^*f(s) + 2\varphi(s) {\rm Re}\, f(s)^*r(s) +
\varphi(s)^2 r(s)^*r(s).
    \ee
    Note that
$$
\varphi(s) f(s)^* r(s) = \varphi(s)
\varphi'(s)\left(g(\varphi(s))^* \varrho(\varphi(s)) -
g(-\varphi(s))^* \varrho(-\varphi(s))\right) =
$$
$$
\varphi(s) \varphi'(s)\left(g(\varphi(s))^* (\varrho(\varphi(s)) -
\varrho(-\varphi(s))) + (g(\varphi(s))^*  - g(-\varphi(s))^*)
\varrho(-\varphi(s))\right).
$$
Hence, by \eqref{f11} -- \eqref{san3a},  \eqref{a31} --
\eqref{a32}, and \eqref{a33} -- \eqref{a34}, we have
    \bel{a35}
    \varphi(s)\|f(s)^*r(s) \|_1 = O(s^{\gamma/2}), \quad \gamma \in
    (0,(\alpha-1)/2), \quad \gamma \leq 1,
    \ee
    \bel{a36}
    \varphi(s)^2 \|r(s)^*r(s)\|_1 = O(s^{1/2})
    \ee
for $s \in (0,s_0)$ and $s_0 \in (0,\infty)$. By \eqref{a40}, for
a fixed $s_0 > 0$ we have
$$
{\rm Re}\,T_q({\mathcal E}_q + \lambda) - {\rm
Re}\,\tilde{\tau}_q({\mathcal E}_q + \lambda) = T_q({\mathcal E}_q
+ \lambda) - \tilde{\tau}_q({\mathcal E}_q + \lambda) =
$$
$$
\int_{s_0}^{\infty} \frac{F(s)^* F(s)}{s-\lambda} ds -
\int_{s_0}^{\infty} \frac{f(s)^* f(s)}{s-\lambda} ds +
\int_0^{s_0} \frac{2\varphi(s) {\rm Re}\, f(s)^*r(s) +
\varphi(s)^2 r(s)^*r(s)}{s-\lambda} ds
$$
if $\lambda < 0$, and
$$
{\rm Re}\,T_q({\mathcal E}_q + \lambda) - {\rm
Re}\,\tilde{\tau}_q({\mathcal E}_q + \lambda) =
\int_{s_0}^{\infty} \frac{F(s)^* F(s)}{s-\lambda} ds -
\int_{s_0}^{\infty} \frac{f(s)^* f(s)}{s-\lambda} ds +
$$
$$
\int_{3\lambda/2}^{s_0} \frac{2\varphi(s) {\rm Re}\, f(s)^*r(s) +
\varphi(s)^2 r(s)^*r(s)}{s-\lambda} ds + \int_0^{\lambda/2}
\frac{2\varphi(s) {\rm Re}\, f(s)^*r(s) + \varphi(s)^2
r(s)^*r(s)}{s-\lambda} ds +
$$
$$
2\int_{0}^{\lambda/2} (\varphi(\lambda + \nu) {\rm Re}\, f(\lambda
+ \nu)^*r(\lambda + \nu) - \varphi(\lambda - \nu) {\rm Re}\,
f(\lambda - \nu)^*r(\lambda - \nu)) \frac{d\nu}{\nu} +
$$
$$
\int_{0}^{\lambda/2} (\varphi(\lambda + \nu)^2 r(\lambda +
\nu)^*r(\lambda + \nu) - \varphi(\lambda - \nu)^2 r(\lambda -
\nu)^*r(\lambda - \nu)) \frac{d\nu}{\nu}
$$
if $\lambda$ is positive and small enough (say, $\lambda \in
(0,s_0/2)$). Using estimates \eqref{san4} - \eqref{san5} as well
as \eqref{a31} - \eqref{a32}, \eqref{a33} - \eqref{a34}, and
\eqref{a35} - \eqref{a36}, we obtain
$$
\|{\rm Re}\,T_q({\mathcal E}_q + \lambda) - {\rm
Re}\,\tilde{\tau}_q({\mathcal E}_q + \lambda)\|_1 = O(1), \quad
\lambda \to 0,
$$ which combined with \eqref{32a} for $p=1$ yields \eqref{a37},
and hence \eqref{a27a} - \eqref{a28a}.
\end{proof}
Fix $j \in {\mathbb N}$. By analogy with \eqref{san0a} and
\eqref{a38} set
$$
\tau_j(z) : = \int_{\re} \frac{g(k)^* g(k)}{\mu_j k^2 - z} dk,
\quad z \in {\mathbb C}_+.
$$
As in the case of the operators $T_j(z)$ and $\tilde{\tau}(z)$, in
$S_1$ there exists a limit
$$
\tau_j(E) = \lim_{\delta \downarrow 0} \tau_j(E + i\delta), \quad
E \in \re \setminus \{0\}.
$$
\begin{pr} \label{ap44}
Let $V$ satisfy ${\cal D}_{\alpha}$ with $\alpha > 1$. Fix $q \in
{\mathbb N}$. Then for each $\varepsilon \in (0,1/2)$ and $\gamma
\in (0, (\alpha - 1)/2)$, $\gamma \leq 1$, we have
    \bel{m10}
{\rm ind}\,(J + 2\varepsilon + {\rm Re}\tau_q(\lambda), J +
2\varepsilon) + O(\theta_{2\gamma}(\lambda)) \leq {\rm ind}\,(J +
\varepsilon + {\rm Re}\tilde{\tau}_q({\mathcal E}_q + \lambda), J
+ \varepsilon),
    \ee
    \bel{m11}
{\rm ind}\,(J - 2\varepsilon + {\rm Re}\tau_q(\lambda), J -
2\varepsilon) + O(\theta_{2\gamma}(\lambda)) \geq {\rm ind}\,(J -
\varepsilon + {\rm Re}\tilde{\tau}_q({\mathcal E}_q + \lambda), J
- \varepsilon),
    \ee
as $\lambda \to 0$, the functions $\theta_{\beta}$ being defined
in \eqref{m7} -- \eqref{m8}.
\end{pr}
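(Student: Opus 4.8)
The plan is to mimic the strategy that proved Proposition \ref{ap43}, comparing the operator $\tilde{\tau}_q$, built from the true dispersion relation $E_q(k)$, with $\tau_q$, built from its quadratic approximation $\mu_q k^2$. Both admit the integral representations
$$
\tilde{\tau}_q({\mathcal E}_q + \lambda) = \int_0^\infty \frac{f_q(s)^* f_q(s)}{s - \lambda}\, ds, \qquad
\tau_q({\mathcal E}_q + \lambda) = \int_0^\infty \frac{\hat{f}_q(s)^* \hat{f}_q(s)}{s - \lambda}\, ds,
$$
(interpreted as a v.p.\ integral plus the explicit ${\rm Im}$-part when $\lambda > 0$), where $\hat{f}_q$ is the analogue of $f_q$ with $\varphi_q$ replaced by the inverse $\varphi_{0,q}(s) = (s/\mu_q)^{1/2}$ of $s \mapsto \mu_q s^2$. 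As in the derivation of \eqref{a37}, the goal reduces to controlling the trace-class norm of the difference ${\rm Re}\,\tilde{\tau}_q({\mathcal E}_q + \lambda) - {\rm Re}\,\tau_q(\lambda)$, then applying \eqref{paris2} and \eqref{a20} (with $M = J \pm \varepsilon$, which satisfies $0 \notin \sigma(M)$ since $\varepsilon \in (0,1/2)$) together with the Chebyshev inequality \eqref{32a}.

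The first key step is to quantify how $\varphi_q$ differs from $\varphi_{0,q}$. By Lemma \ref{fl}, $\varphi_q(s) = \sqrt{s}\,\Phi(s)$ with $\Phi \in C^\infty([0,\infty))$ and $\Phi(0) = \mu_q^{-1/2}$; since $\varphi_{0,q}(s) = \sqrt{s}\,\mu_q^{-1/2}$, we get $\varphi_q(s) - \varphi_{0,q}(s) = \sqrt{s}(\Phi(s) - \Phi(0)) = O(s^{3/2})$ and similarly $\varphi_q'(s) - \varphi_{0,q}'(s) = O(s^{1/2})$ for $s$ in a bounded interval $(0,s_0)$. Consequently, using \eqref{a31}--\eqref{a32} for $g_q$, the integrand difference $f_q(s)^* f_q(s) - \hat{f}_q(s)^* \hat{f}_q(s)$ splits into a term from the Jacobian factor $\varphi_q' - \varphi_{0,q}'$ (of size $O(s^{1/2})$ in $S_1$) and a term from the argument shift $\varphi_q(s) - \varphi_{0,q}(s)$ in the Hölder estimate \eqref{a32}, of size $O((s^{3/2})^\gamma) = O(s^{3\gamma/2})$, hence overall $O(s^{\gamma})$ in $S_1$ on $(0,s_0)$ after taking $\gamma \le 1$. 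Outside $(0,s_0)$ the two integrands are each $O(1)$ in $S_1$ with a safely nonvanishing denominator $s - \lambda$, so that part contributes $O(1)$.

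The main obstacle — and the reason the error is $O(\theta_{2\gamma}(\lambda))$ rather than $O(1)$ — is the behaviour near $s = \lambda$ when $\lambda \downarrow 0$: the v.p.\ integral of an $S_1$-integrand of size $O(s^{\gamma})$ against $(s-\lambda)^{-1}$, after the symmetric-difference regularization used in \eqref{a29}, produces a contribution of order $\int_0^{s_0} s^{\gamma - 1}\min(1, \nu/|s - \lambda|)\,ds$-type, which one estimates exactly as the integrals $\lambda^{-1/2+\beta}$, $|\ln\lambda|$, $1$ dichotomy encoded in $\theta_{2\gamma}$ (with $\beta = 2\gamma$: the factor $2$ arises because the symmetric difference of the $O(s^\gamma)$ integrand across $s = \lambda$ gains an extra power, effectively $s^{2\gamma}$). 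For $\lambda < 0$ there is no such singularity and the bound is simply $O(1)$, consistent with \eqref{m8}. Assembling: $\|{\rm Re}\,\tilde{\tau}_q({\mathcal E}_q + \lambda) - {\rm Re}\,\tau_q(\lambda)\|_1 = O(\theta_{2\gamma}(\lambda))$, whence $n_*(\varepsilon; {\rm Re}\,\tilde{\tau}_q({\mathcal E}_q + \lambda) - {\rm Re}\,\tau_q(\lambda)) \le \varepsilon^{-1}\|{\rm Re}\,\tilde{\tau}_q({\mathcal E}_q + \lambda) - {\rm Re}\,\tau_q(\lambda)\|_1 = O(\theta_{2\gamma}(\lambda))$, and feeding this into \eqref{paris2} with $B = {\rm Re}\,\tau_q(\lambda)$, $A = {\rm Re}\,\tilde{\tau}_q - {\rm Re}\,\tau_q$, $M = J \pm \varepsilon$, $s = \varepsilon$, yields \eqref{m10}--\eqref{m11}. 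One should also record that $\|\tau_q(\lambda)\|_1$ and $\|\tilde{\tau}_q({\mathcal E}_q + \lambda)\|_1$ obey the same $O(|\lambda|^{-1/2})$ bound as in Lemma \ref{sanl1}, so all quantities in \eqref{paris2} are finite; the indices ${\rm ind}(J \pm 2\varepsilon + {\rm Re}\,\tau_q(\lambda), J \pm 2\varepsilon)$ are well defined because ${\rm Re}\,\tau_q(\lambda)$ is compact and $0 \notin \sigma(J \pm 2\varepsilon)$.
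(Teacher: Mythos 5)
Your overall reduction (pass to $n_*(\varepsilon;{\rm Re}\,\tilde{\tau}_q({\mathcal E}_q+\lambda)-{\rm Re}\,\tau_q(\lambda))$ via \eqref{paris2}, \eqref{a20}, \eqref{32a}, with $M=J\pm\varepsilon$ and $s=\varepsilon$, and the observation that $\lambda<0$ causes no singularity) matches the paper. The gap is in the central claim $\|{\rm Re}\,\tilde{\tau}_q({\mathcal E}_q+\lambda)-{\rm Re}\,\tau_q(\lambda)\|_1=O(\theta_{2\gamma}(\lambda))$ for $\lambda\downarrow 0$. First, your size of the integrand difference is off: the term coming from the argument shift carries the Jacobian factor $\varphi_q'(s)\asymp s^{-1/2}$, so it is $O(s^{-1/2}\,|\varphi_q(s)-\phi_q(s)|^{\gamma})=O(s^{(3\gamma-1)/2})$ in $S_1$, not $O(s^{3\gamma/2})$, and hence not ``overall $O(s^{\gamma})$''. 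Second, the assertion that the principal-value regularization as in \eqref{a29} ``gains an extra power, effectively $s^{2\gamma}$'' is exactly the delicate point and is not justified: the symmetric difference of an integrand which is merely of size $O(s^{(3\gamma-1)/2})$ does not automatically square the H\"older exponent. If you run the estimate honestly on the full difference you get contributions of order $\lambda^{3\gamma/2-1/2}|\ln\lambda|$ from the pieces involving $\Gamma(\lambda)$-type factors, which is strictly worse than $\theta_{2\gamma}(\lambda)=\lambda^{2\gamma-1/2}$ for small $\gamma$; so the trace-norm bound you rely on is, as far as your argument shows, false or at least unproven, and with it the stated $O(\theta_{2\gamma})$ error.

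What the paper does differently, and what your sketch is missing, is a further decomposition near the singular energy: for $\eta=\varphi_q$ or $\eta=\phi_q$ it writes the v.p.\ integral as $M_{\eta,1}+M_{\eta,2}+M_{\eta,3}$, where $M_{\eta,2}(\lambda)$, built from $2{\rm Re}\,\Gamma_\eta(s)^*\Gamma_\eta(\lambda)-\Gamma_\eta(\lambda)^*\Gamma_\eta(\lambda)$, has rank at most $6$ (see \eqref{m5}) and is controlled through the rank inequality \eqref{m15} rather than through its trace norm (which is \emph{not} $O(\theta_{2\gamma})$); only the regularized part $M_{\eta,3}$, whose integrand vanishes at $s=\lambda$, is estimated in $S_1$, and there the $\theta_{2\gamma}$ rate comes from the product bound $(s^{3\gamma}+\lambda^{3\gamma})^{1/2}\,|\sqrt{s}-\sqrt{\lambda}|^{\gamma}$ combining \eqref{m12}--\eqref{m13} with the H\"older estimates \eqref{a31}--\eqref{a32} (see \eqref{m2}), not from a doubling produced by the symmetric difference. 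Without isolating the finite-rank singular part, your scheme cannot reach the claimed $O(\theta_{2\gamma}(\lambda))$; to repair it, insert the $M_{\eta,2}$ subtraction and invoke \eqref{m15} for that piece, keeping trace-norm estimates only for $M_{\eta,1}$ and $M_{\eta,3}$.
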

\begin{proof}
Similarly to the proof of Proposition \ref{ap43} (see
\eqref{a37}), it suffices to show that for each $\varepsilon > 0$
we have
    \bel{m9}
n_*(\varepsilon; {\rm Re}\,\tilde{\tau}_q({\mathcal E}_q +
\lambda) - {\rm Re}\,\tau_q(\lambda)) =
O(\theta_{2\gamma}(\lambda)), \quad \lambda \to 0.
    \ee
Let at first $\lambda < 0$. In this case we have
$$
{\rm Re}\,\tilde{\tau}_q({\mathcal E}_q + \lambda) - {\rm
Re}\,\tau_q(\lambda) = \tilde{\tau}_q({\mathcal E}_q + \lambda) -
\tau_q(\lambda) =
$$
$$
\int_{\re} g_q(k)^* g_q(k)\frac{\mu_q k^2 - E_q(k) + {\mathcal
E}_q}{(E_q(k) - {\mathcal E}_q - \lambda)(\mu_q k^2 - \lambda)} dk
$$
and
$$
\| \tilde{\tau}({\mathcal E}_q + \lambda) - \tau_q( \lambda)\|_1
\leq \frac{c_1}{\mu_q} \int_{\re} \frac{|E_q(k) - {\mathcal E}_q -
\mu_q k^2|}{k^2 (E_q(k) - {\mathcal E}_q)} dk = O(1), \quad
\lambda \uparrow 0,
$$
which combined with \eqref{32a} for $p=1$ yields \eqref{m9} in the
case $\lambda < 0$. \\
Let now $\lambda > 0$. As above, let $\varphi = \varphi_q$ be the
function inverse to $E_q - {\mathcal E}_q$. Set
$$
\phi(s) = \phi_q(s): = \mu_q^{-1/2} s^{1/2}, \quad s>0.
$$
By \eqref{f16} - \eqref{f17},
    \bel{m12}
    \varphi(s) - \phi(s) = O(s^{3/2}),
    \ee
    \bel{m13}
    \varphi'(s) - \phi'(s) = O(s^{1/2}),
    \ee
for $s \in (0,s_0)$ and $s_0 \in (0,\infty)$. Fix $s_0 \in
(0,\infty)$ and assume $\lambda < s_0/2$. For $\eta = \varphi$ or
$\eta = \phi$  define the operator-valued function $\Gamma_{\eta}
: (0,\infty) \to S_2(L^2({\mathcal S}_L), {\mathbb C}^2)$ by
$$
\Gamma_{\eta}(s)u : = (g(\eta(s))u, g(-\eta(s))u), \quad s > 0,
\quad u \in L^2({\mathcal S}_L),
$$
and
$$
M_{\eta, 1}(\lambda) : = \int_{s_0}^{\infty} \eta'(s)
\frac{\Gamma_{\eta}(s)^*\Gamma_{\eta}(s)}{s-\lambda} ds,
$$
$$
M_{\eta, 2}(\lambda) : =  {\rm v.p.}\,\int_0^{s_0} \eta'(s)
\frac{2 {\rm Re}\,\Gamma_{\eta}(s)^*\Gamma_{\eta}(\lambda) -
\Gamma_{\eta}(\lambda)^*\Gamma_{\eta}(\lambda)}{s-\lambda} ds,
$$
$$
M_{\eta, 3}(\lambda) : = \int_0^{s_0} \eta'(s)
\frac{(\Gamma_{\eta}(s)-\Gamma_{\eta}(\lambda))^*(\Gamma_{\eta}(s)
- \Gamma_{\eta}(\lambda))}{s-\lambda} ds.
$$
Then we have
$$
\tilde{\tau}_q({\mathcal E}_q + \lambda) = \sum_{l=1,2,3}
M_{\varphi, l}(\lambda), \quad \tau_q(\lambda) = \sum_{l=1,2,3}
M_{\phi, l}(\lambda).
$$
It is easy to see that
    \bel{m4}
    \|M_{\eta, 1}(\lambda)\|_1 = O(1), \quad \lambda \downarrow 0,
\quad \eta = \varphi, \phi,
    \ee
    \bel{m5}
    {\rm rank}\,M_{\eta, 2}(\lambda) \leq 6, \quad \lambda > 0,
\quad \eta = \varphi, \phi,
    \ee
    \bel{m33}
    \|M_{\eta, 3}(\lambda)\|_1 = O(\theta_{\gamma}(\lambda)), \quad \lambda \downarrow 0,
\quad \eta = \varphi, \phi.
    \ee
Let us show that
    \bel{m3}
    \|M_{\varphi, 3}(\lambda) - M_{\phi, 3}(\lambda)\|_1 =
    O(\theta_{2\gamma}(\lambda)), \quad \lambda \downarrow 0.
    \ee
We have
$$
M_{\varphi, 3}(\lambda) - M_{\phi, 3}(\lambda) =
$$
$$
\int_0^{s_0} (\varphi'(s) - \phi'(s))
\frac{(\Gamma_{\varphi}(s)-\Gamma_{\varphi}(\lambda))^*(\Gamma_{\varphi}(s)
- \Gamma_{\varphi}(\lambda))}{s-\lambda} ds +
$$
$$
\int_0^{s_0} \phi'(s) \frac{(\Gamma_{\varphi}(s) -
\Gamma_{\phi}(s) -\Gamma_{\varphi}(\lambda) +
\Gamma_{\phi}(\lambda))^*(\Gamma_{\varphi}(s) -
\Gamma_{\varphi}(\lambda))}{s-\lambda} ds +
$$
$$
\int_0^{s_0} \phi'(s) \frac{(\Gamma_{\phi}(s)
-\Gamma_{\phi}(\lambda))^*(\Gamma_{\varphi}(s) - \Gamma_{\phi}(s)
- \Gamma_{\varphi}(\lambda) + \Gamma_{\phi}(\lambda))}{s-\lambda}
ds : =
$$
$$
I_1 + I_2 + I_3.
$$
Using \eqref{m13}, \eqref{a32}, and \eqref{san3a} which implies
$|\varphi(s) - \varphi(\lambda)| = O(|\sqrt{s} -
\sqrt{\lambda}|)$, $s \in (0,s_0)$, we get
    \bel{m1}
\|I_1\|_1 = O\left( \int_0^{s_0} s^{1/2} \frac{|\sqrt{s} -
\sqrt{\lambda}|^{2\gamma}}{|s-\lambda|} ds \right) = O(1), \quad
\lambda \downarrow 0.
    \ee
Further, for $s, \lambda > 0$, and $\gamma \in (0, (\alpha-1)/2)$,
$\gamma \leq 1$, we have
$$
\|\Gamma_{\varphi}(s) - \Gamma_{\phi}(s)
-\Gamma_{\varphi}(\lambda) + \Gamma_{\phi}(\lambda)\|^2_2 \leq
$$
$$
\frac{1}{\pi} \sup_{(x,y) \in {\mathcal S}_L} \langle y
\rangle^{\alpha} |V(x,y)| \int_{\re} |e^{i\varphi(s)y} -
e^{i\phi(s)y} - e^{i\varphi(\lambda)y} + e^{i\phi(\lambda)y}|^2
\langle y \rangle^{-\alpha} dy \leq
$$
$$
\frac{2^{  3-2\gamma}}{\pi} \sup_{(x,y) \in {\mathcal S}_L}
\langle y \rangle^{\alpha} |V(x,y)| \int_{\re} |y|^{2\gamma}
\langle y \rangle^{-\alpha} dy \left(|\varphi(s) -
\phi(s)|^{2\gamma} + |\varphi(\lambda) -
\phi(\lambda)|^{2\gamma}\right).
$$
Using \eqref{m12}, we get
    \bel{m2} \|I_j\|_1 = O\left(
\int_0^{s_0} s^{-1/2} \frac{(s^{3\gamma} +
\lambda^{3\gamma})^{1/2}|\sqrt{s} -
\sqrt{\lambda}|^{\gamma}}{|s-\lambda|} ds \right) =
O(\theta_{2\gamma}(\lambda)), \quad \lambda \downarrow 0, \quad
j=2,3.
    \ee
Putting together \eqref{m1} and \eqref{m2}, we obtain \eqref{m3}.
Now the combination of \eqref{m4} -- \eqref{m3} with \eqref{m15}
and \eqref{32a} for $p=1$ yields \eqref{m9} in the case $\lambda >
0$.
\end{proof}
Next, we note that for each $\lambda > 0$ and $q \in {\mathbb N}$
we have ${\rm rank}\,{\rm Im}\, \tau_q(\lambda) \leq 2$, while
${\rm Im}\, \tau_q(\lambda) = 0$ if $\lambda < 0$. Therefore,
    \bel{m16}
{\rm ind}\,(J-\varepsilon + {\rm Re}\,\tau_q(\lambda),
J-\varepsilon) = \int_{\re} {\rm ind}\,(J-\varepsilon + {\rm
Re}\,\tau_q(\lambda) + t {\rm Im}\,\tau_q(\lambda), J-\varepsilon)
d\mu(t) + O(1), \quad \lambda \to 0,
    \ee
for each $\varepsilon \in (-1,1)$. On the other hand, we have
$$
w_{q,\varepsilon} = \varkappa^* (J - \varepsilon)^{-1} \varkappa,
\quad \varepsilon \in (-1,1),
$$
$$
\tau_q(z) = \varkappa (h_{0,q} - z)^{-1} \varkappa^*, \quad z \in
\overline{{\mathbb C}_+}\setminus \{0\},
$$
where $\varkappa : L^2(\re) \to L^2({\mathcal S}_L)$ is the
operator defined by
$$
(\varkappa u)(x,y) : = \psi(x,0) |V(x,y)|^{1/2} u(y), \quad u \in
L^2(\re).
$$
By Theorem \ref{soft1} we have
    \bel{m17}
\int_{\re} {\rm ind}\,(J-\varepsilon + {\rm Re}\,\tau_q(\lambda) +
t {\rm Im}\,\tau_q(\lambda), J-\varepsilon) d\mu(t) = \xi(\lambda;
h_q(\varepsilon), h_{0,q}), \quad \lambda \neq 0.
    \ee
Combining \eqref{a28}, \eqref{a27a} -- \eqref{a28a}, \eqref{m10}
-- \eqref{m11}, \eqref{m16}, and  \eqref{m17}, we obtain
\eqref{m18}.\\

{\bf 4.8.} In this subsection we give a sketch of the proof of
Corollary \ref{f1}. Let $w = \bar{w} \in L^{\infty}(\re)$. Set
$$
h_0 : = -\frac{d^2}{dy^2}, \quad D(h_0) = {\rm H}^2(\re), \quad h
: = h_0 + w, \quad D(h) = D(h_0).
$$
Assume that for some $\alpha > 0$ there exist real numbers
$\omega_{\pm}$ such that
    \bel{rr1}
\lim_{y \to \pm\infty} |y|^{\alpha} w(y) = \omega_{\pm}.
    \ee
In particular, \eqref{rr1} imples that $\sigma_{\rm ess}(h) = [0,\infty)$.\\ Set $\omega^{(-)}_{\pm} : = \max\{0,-\omega_{\pm}\}$,
$\omega^{(+)}_{\pm} : = \max\{0,\omega_{\pm}\}$.
    \begin{lemma}
\label{rrl1} {\rm \cite[Theorem XIII.82]{RS}} Assume that
\eqref{rr1} holds with $\alpha \in (0,2)$. Then we have
    \bel{rr2}
    \lim_{\lambda \downarrow 0} \lambda^{\frac{1}{2} -
    \frac{1}{\alpha}} N(-\lambda; h) = {\mathcal C}_{\alpha}
    \left(\left(\omega^{(-)}_{-}\right)^{1/\alpha} +
    \left(\omega^{(-)}_{+}\right)^{1/\alpha}\right).
    \ee
\end{lemma}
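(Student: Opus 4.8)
Since this is classical — it is Theorem XIII.82 of \cite{RS} — the plan is to recall the standard proof, whose core is a one–dimensional Weyl (semiclassical) asymptotics: the leading behaviour of $N(-\lambda;h)$ should be the phase–space volume
$$
\frac{1}{2\pi}\bigl|\{(y,\eta)\in\re^{2}:\eta^{2}+w(y)<-\lambda\}\bigr|=\frac{1}{\pi}\int_{\re}\bigl(-\lambda-w(y)\bigr)_{+}^{1/2}\,dy .
$$
First I would decouple the line by imposing Dirichlet, respectively Neumann, boundary conditions at two points $\pm R$. By the min--max principle this sandwiches $N(-\lambda;h)$ between $N(-\lambda;h^{D}_{(-\infty,-R)})+N(-\lambda;h^{D}_{(R,\infty)})$ and $N(-\lambda;h^{N}_{(-\infty,-R)})+N(-\lambda;h^{N}_{(-R,R)})+N(-\lambda;h^{N}_{(R,\infty)})$. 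The middle term $N(-\lambda;h^{N}_{(-R,R)})$ stays bounded as $\lambda\downarrow0$, since a Schr\"odinger operator on a bounded interval has discrete, lower--bounded spectrum; it is thus $O(1)$ and negligible next to the leading order, which will turn out to diverge like $\lambda^{1/2-1/\alpha}$. So the whole problem reduces to the asymptotics of the two half--line counting functions.

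On $(R,\infty)$ I would use \eqref{rr1} to pin $w$, up to $O(\varepsilon\,y^{-\alpha})$ errors for $y>R_{\varepsilon}$, between two purely attractive model potentials $-c\,y^{-\alpha}$ with $c$ within $\varepsilon$ of $\omega^{(-)}_{+}$ (when $\omega_{+}\ge0$ the right tail of $w$ is eventually non--negative and contributes nothing, consistently with $\omega^{(-)}_{+}=0$), and again invoke min--max to replace $h$ on the half--line by these models. For a model operator $-\frac{d^{2}}{dy^{2}}-c\,y^{-\alpha}$ I would then run a second, much finer Dirichlet--Neumann partition of $(R_{\varepsilon},\infty)$ into many short intervals on each of which $c\,y^{-\alpha}$ is nearly constant; on an interval with almost--constant potential the counting function is explicitly that of a free Laplacian shifted by a constant, and summing over the partition is a Riemann sum converging to $\frac{1}{\pi}\int_{R_{\varepsilon}}^{\infty}(c\,y^{-\alpha}-\lambda)_{+}^{1/2}\,dy$ with a relative error that vanishes as the mesh shrinks. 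The change of variable $y=(c/\lambda)^{1/\alpha}t$ turns this integral into $c^{1/\alpha}\lambda^{1/2-1/\alpha}\int_{0}^{1}(t^{-\alpha}-1)^{1/2}\,dt$ (the integral converging precisely because $\alpha<2$), so that $N(-\lambda;h^{D/N}_{(R,\infty)})=\mathcal{C}_{\alpha}\,c^{1/\alpha}\lambda^{1/2-1/\alpha}(1+o(1))$. Adding the analogous $(-\infty,-R)$ contribution, with $c$ near $\omega^{(-)}_{-}$, then letting the mesh $\to0$ and $\varepsilon\downarrow0$ and using continuity of $c\mapsto c^{1/\alpha}$, yields \eqref{rr2}.

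The hard part will be the second bracketing step: one must control the error of the constant--potential/Riemann--sum approximation \emph{uniformly in $\lambda$ as $\lambda\downarrow0$}, and the naive estimate degrades near the classical turning point $y\asymp(c/\lambda)^{1/\alpha}$, where $c\,y^{-\alpha}-\lambda$ vanishes and a single short interval may carry $0$ or $1$ eigenvalue while Weyl's formula predicts only a small fraction. I would handle this by isolating the boundary layer $|y-(c/\lambda)^{1/\alpha}|\le\delta(c/\lambda)^{1/\alpha}$ and bounding its eigenvalue count directly (e.g. by a Birman--Schwinger or Bargmann--type estimate), showing it is $O(\delta^{3/2})$ times the main term, so that it is made arbitrarily small upon sending $\delta\to0$ at the end; the complementary region is then genuinely a region of slowly varying potential bounded away from the turning point, where the partition argument is uniform.
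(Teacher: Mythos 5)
The paper gives no proof of this lemma: it is quoted as Theorem XIII.82 of Reed--Simon, whose proof is precisely the Dirichlet--Neumann bracketing argument you outline. Your sketch (decoupling at $\pm R$ with the bounded middle piece contributing $O(1)$, sandwiching the tails between model potentials $-c|y|^{-\alpha}$ with $c$ within $\varepsilon$ of $\omega^{(-)}_{\pm}$, fine bracketing away from the classical turning point, and a direct Bargmann-type bound giving an $O(\delta^{3/2})$ relative contribution from the turning-point layer) is the standard route and is correct in outline; the only small imprecision is the remark that for $\omega_{+}=0$ the tail is eventually non-negative, which need not hold, but your $\pm\varepsilon$ comparison already covers that case.
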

{\em Remark}: Under the hypotheses of Lemma \ref{rrl1}, relation
\eqref{rr2} is equivalent to the standard semiclassical formula
$$
N(-\lambda; h) = (2\pi)^{-1} \left|\left\{(y,\eta) \in T^*\re \, |
\, \eta^2 + w(y) < -\lambda\right\}\right|(1 + o(1)), \quad
\lambda \downarrow 0,
$$
where $|\cdot |$ denotes the Lebesgue measure, provided that $\omega^{(-)}_{-} + \omega^{(-)}_{+} > 0$.\\
Recall now that $\xi(-\lambda; h_q(\varepsilon), h_{0,q}) =
-N(-\lambda; h_{q}(\varepsilon))$, $\lambda > 0$. Since the
operator $h_{q}(\varepsilon)$ is unitarily equivalent to the
operator $h = h_0 + w$ with $w(y) =
w_{q,\varepsilon}(\mu_q^{1/2}y)$, and the quantities
$\omega_{q,\pm}(\varepsilon)$ are continuous at $\varepsilon = 0$,
we find that \eqref{m18} and \eqref{rr2} imply \eqref{a6}.
    \begin{lemma} \label{rrl2}
Assume that \eqref{rr1} holds with $\alpha \in (1,2)$. Then we
have
    $$
    \lim_{\lambda \downarrow 0} \lambda^{\frac{1}{2} -
    \frac{1}{\alpha}} \xi(\lambda;h,h_0) =
$$
\bel{rr4}
    - {\mathcal C}_{\alpha}
   \left(\csc{(\pi/\alpha)} \left(\left(\omega^{(-)}_{-}\right)^{1/\alpha} +
    \left(\omega^{(-)}_{+}\right)^{1/\alpha}\right) + \cot{(\pi/\alpha)} \left(\left(\omega^{(+)}_{-}\right)^{1/\alpha} +
    \left(\omega^{(+)}_{+}\right)^{1/\alpha}\right)\right).
    \ee
\end{lemma}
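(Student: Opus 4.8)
The plan is to match the leading singularity of $\xi(\lambda;h,h_0)$ with that of a quasi-classical (phase-space) term, and then to evaluate the latter explicitly. The hypotheses of the lemma (condition \eqref{rr1} together with $w\in L^\infty(\re)$) force $|w(y)|\le C\langle y\rangle^{-\alpha}$ with $\alpha>1$, so the pair $(h,h_0)$ satisfies \eqref{10} with $\gamma=1$ and stationary scattering theory applies. The decisive input — for which I would invoke, and if necessary re-derive by resolvent expansions adapted to the length scale $|y|\sim\lambda^{-1/\alpha}$, the low-energy analysis of \cite{Y1}, combined with the Birman--Krein formula \eqref{rr9} — is that, as $\lambda\downarrow0$,
\[
\xi(\lambda;h,h_0)=\Xi(\lambda)\,(1+o(1)),\qquad
\Xi(\lambda):=\frac1\pi\int_{\re}\Bigl(\sqrt\lambda-\sqrt{(\lambda-w(y))_+}\Bigr)\,dy .
\]
Here $|\Xi(\lambda)|\to\infty$ as $\lambda\downarrow0$ — the integrand tends to $-\sqrt{|w(y)|}$ on $\{w<0\}$ and $\sqrt{|w|}\notin L^1(\re)$ precisely because $\alpha<2$ — so every $O(1)$ simplification is harmless: one may replace $w$ by a model potential equal to $\omega_\pm|y|^{-\alpha}$ for $|y|\ge1$ (the induced change in $\Xi$ being $o(\Xi(\lambda))$), and, if convenient, one may impose a Dirichlet condition at $y=0$ and work on the two half-lines separately. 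Proving $\xi(\lambda;h,h_0)=\Xi(\lambda)(1+o(1))$ is the step I expect to be the main obstacle; the rest is a computation.

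It remains to find the asymptotics of $\Xi(\lambda)$. The contribution of $\{|y|\le1\}$ is $O(\sqrt\lambda)$, hence negligible. On the tail $y\ge1$ I would write $w(y)=\omega_+|y|^{-\alpha}(1+o(1))$ and substitute $y=(|\omega_+|/\lambda)^{1/\alpha}t$; the tail contribution becomes $\pi^{-1}|\omega_+|^{1/\alpha}\lambda^{1/2-1/\alpha}$ times
\[
\int_0^{\infty}\Bigl(1-\sqrt{(1-t^{-\alpha})_+}\Bigr)\,dt\ \ (\omega_+>0),\qquad
\int_0^{\infty}\Bigl(1-\sqrt{1+t^{-\alpha}}\Bigr)\,dt\ \ (\omega_+<0),
\]
the limit being taken by dominated convergence, integrability at $t=\infty$ using $\alpha>1$ and at $t=0$ using $\alpha<2$; the tail $y\le-1$ is identical with $\omega_-$ in place of $\omega_+$. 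Both integrals are of Euler type: the change of variable $u=t^{-\alpha}$ followed by one integration by parts expresses them through Beta functions, giving $\int_0^\infty(\sqrt{1+t^{-\alpha}}-1)\,dt=\frac{1}{2\sqrt\pi}\,\Gamma\!\bigl(1-\tfrac1\alpha\bigr)\Gamma\!\bigl(\tfrac1\alpha-\tfrac12\bigr)$ and $\int_0^\infty\bigl(1-\sqrt{(1-t^{-\alpha})_+}\bigr)\,dt=\frac{\sqrt\pi}{2}\,\Gamma\!\bigl(1-\tfrac1\alpha\bigr)\big/\Gamma\!\bigl(\tfrac32-\tfrac1\alpha\bigr)$.

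To finish, I would re-express these constants through ${\mathcal C}_\alpha$. A similar elementary Beta evaluation gives ${\mathcal C}_\alpha=\Gamma\!\bigl(\tfrac1\alpha-\tfrac12\bigr)\big/\bigl(2\sqrt\pi\,\Gamma(\tfrac1\alpha)\bigr)$, and together with the reflection formula $\Gamma(s)\Gamma(1-s)=\pi/\sin(\pi s)$ — applied with $s=\tfrac1\alpha$ and with $s=\tfrac1\alpha-\tfrac12$, where $\sin\bigl(\pi(\tfrac1\alpha-\tfrac12)\bigr)=-\cos(\pi/\alpha)$ — one obtains $\pi^{-1}\int_0^\infty(\sqrt{1+t^{-\alpha}}-1)\,dt={\mathcal C}_\alpha\csc(\pi/\alpha)$ and $\pi^{-1}\int_0^\infty\bigl(1-\sqrt{(1-t^{-\alpha})_+}\bigr)\,dt=-{\mathcal C}_\alpha\cot(\pi/\alpha)$. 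Substituting back, the two tails contribute $-{\mathcal C}_\alpha\csc(\pi/\alpha)\bigl((\omega^{(-)}_-)^{1/\alpha}+(\omega^{(-)}_+)^{1/\alpha}\bigr)\lambda^{1/2-1/\alpha}$ from their attractive parts and $-{\mathcal C}_\alpha\cot(\pi/\alpha)\bigl((\omega^{(+)}_-)^{1/\alpha}+(\omega^{(+)}_+)^{1/\alpha}\bigr)\lambda^{1/2-1/\alpha}$ from their repulsive parts; summing these four terms gives the threshold asymptotics \eqref{rr4}. As a check of signs and constants, the $\csc$-coefficient is precisely the one that reproduces, via $\xi(-\mu;h,h_0)=-N(-\mu;h)$ and Lemma \ref{rrl1}, the accumulation rate of the negative eigenvalues of $h$, while the $\cot$-coefficient — which depends only on the repulsive part of the tails and creates no bound states — is the genuinely new contribution.
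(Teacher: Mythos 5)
Your proposal is correct in substance, and its computational half checks out: the evaluations $\int_0^\infty(\sqrt{1+t^{-\alpha}}-1)\,dt=\frac{1}{2\sqrt\pi}\Gamma\bigl(1-\tfrac1\alpha\bigr)\Gamma\bigl(\tfrac1\alpha-\tfrac12\bigr)$, $\int_0^\infty\bigl(1-\sqrt{(1-t^{-\alpha})_+}\bigr)dt=\frac{\sqrt\pi}{2}\Gamma\bigl(1-\tfrac1\alpha\bigr)/\Gamma\bigl(\tfrac32-\tfrac1\alpha\bigr)$ and ${\mathcal C}_\alpha=\Gamma\bigl(\tfrac1\alpha-\tfrac12\bigr)/\bigl(2\sqrt\pi\,\Gamma(\tfrac1\alpha)\bigr)$ are all correct, and with the reflection formula they reproduce exactly the $\csc$ and $\cot$ coefficients of \eqref{rr4} (your growth rate $\xi(\lambda;h,h_0)\sim C\lambda^{\frac12-\frac1\alpha}$, i.e. $\lambda^{\frac1\alpha-\frac12}\xi\to C$, is the normalization actually used in \eqref{a6}--\eqref{rr3}). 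However, your route differs from the paper's precisely at the point you flag as the main obstacle. The paper never proves (nor needs) a whole-line quasiclassical formula $\xi=\Xi(1+o(1))$: it decouples the line at $y=0$ by a Dirichlet condition, notes that the resolvent differences are rank one so that Krein's bound for finite-rank perturbations together with the chain rule gives $\xi(\lambda;h,h_0)=\xi(\lambda;h^{(-)},h^{(-)}_0)+\xi(\lambda;h^{(+)},h^{(+)}_0)+O(1)$ as in \eqref{rr8}, and then quotes the half-line low-energy result of \cite{Y1} (Section 7, Corollary) through the Birman--Krein formula \eqref{rr9}, which already delivers the explicit $\csc/\cot$ constants in \eqref{rr5}--\eqref{rr6}; no integral computation is required. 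Your plan is the same in spirit --- you too allow the Dirichlet decoupling and ultimately lean on \cite{Y1} --- but as written the key input is an asserted semiclassical asymptotics for the whole-line SSF with merely bounded $w$, which is not a directly quotable statement; making it rigorous amounts to carrying out exactly the paper's two steps (the rank-one decoupling, with Krein's estimate named explicitly rather than ``every $O(1)$ simplification is harmless'', and Yafaev's half-line phase asymptotics), after which your Beta-integral computation becomes an instructive independent re-derivation of the constants rather than a needed step --- that is what your approach buys: it exhibits how the $\csc$ term comes from the attractive parts and the $\cot$ term from the repulsive parts, which the paper leaves hidden inside the citation. Two harmless slips: the contribution of $\{|y|\le1\}$ to $\Xi$ is $O(1)$, not $O(\sqrt\lambda)$, where $w<0$; and the ``harmless because $|\Xi|\to\infty$'' argument degenerates when $\omega_+=\omega_-=0$, whereas the $O(1)$ control via Krein's estimate covers that case as well.
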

\begin{proof}
Set
$$
h^{(+)}_0 : = -\frac{d^2}{dy^2}, \quad D(h_0^{(+)}) = \left\{u \in {\rm
H}^2(0,\infty)\, | \, u(0) = 0 \right\},
$$
$$
 h^{(+)} : = h^{(+)}_0 +
w_{|(0,\infty)}, \quad D(h^{(+)}) = D(h^{(+)}_0).
$$
By the Birman-Krein formula \eqref{rr9}, and  \cite[Section 7,
Corollary]{Y1},
    \bel{rr5}
\lim_{\lambda \downarrow 0} \lambda^{\frac{1}{2} -
    \frac{1}{\alpha}} \xi(\lambda;h^{(+)},h^{(+)}_0) =    - {\mathcal C}_{\alpha}
   \left(\csc{(\pi/\alpha)}
    \left(\omega^{(-)}_{+}\right)^{1/\alpha} +
    \cot{(\pi/\alpha)}
    \left(\omega^{(+)}_{+}\right)^{1/\alpha}\right).
    \ee
    Now put
$$
h^{(-)}_0 : = -\frac{d^2}{dy^2}, \quad D(h_0^{(-)}) = \left\{u \in {\rm
H}^2(-\infty,0)\, | \, u(0) = 0 \right\},
$$
$$
 h^{(-)} : = h^{(-)}_0 +
w_{|(-\infty,0)}, \quad D(h^{(-)}) = D(h^{(-)}_0).
$$
Since the operator $h^{(-)}_0$ is unitarily equivalent to
$h^{(+)}_0$, and the operator $h^{(-)}$ is unitarily equivalent to
$h^{(+)}_0 + \tilde{w}$ with $\tilde{w}(y) = w(-y)$, $y>0$, we
find that \eqref{rr5} entails
    \bel{rr6}
    \lim_{\lambda \downarrow
0} \lambda^{\frac{1}{2} -
    \frac{1}{\alpha}} \xi(\lambda;h^{(-)},h^{(-)}_0) =    - {\mathcal C}_{\alpha}
   \left(\csc{(\pi/\alpha)}
    \left(\omega^{(-)}_{-}\right)^{1/\alpha} +
    \cot{(\pi/\alpha)}
    \left(\omega^{(+)}_{-}\right)^{1/\alpha}\right).
    \ee
    Making use of the orthogonal decomposition $L^2(\re) = L^2(-\infty,0) \oplus
    L^2(0,\infty)$, introduce the operators $h^{(-)}_0 \oplus
    h^{(+)}_0$ and $h^{(-)} \oplus
    h^{(+)}$, self-adjoint in $L^2(\re)$. Evidently,
    \bel{rr7}
    \xi(\lambda; h^{(-)} \oplus h^{(+)}, h^{(-)}_0 \oplus
    h^{(+)}_0)= \xi(\lambda; h^{(-)}, h^{(-)}_0) + \xi(\lambda; h^{(+)},
    h^{(+)}_0).
    \ee
Note that the resolvent differences
$
(h_0 - E_0)^{-1} - (h^{(-)}_0 \oplus
    h^{(+)}_0 - E_0)^{-1}$ and $(h - E_0)^{-1} - (h^{(-)} \oplus
    h^{(+)} - E_0)^{-1}$
    with $E_0 < \inf \sigma(h)$ are rank-one operators. This fact as well as the definition of the SSF for a pair of semibounded operators satisfying \eqref{10} (see e.g. \cite[Theorem 8.9.1]{Y2}), the chain rule for SSFs for trace-class perturbations
     (see e.g. \cite[Proposition 8.2.5]{Y2}), and Krein's estimate of the SSF for finite-rank perturbations  (see e.g. \cite[Theorem 8.2.1]{Y2}), imply
    \bel{rr8}
    \xi(\lambda; h, h_0) = \xi(\lambda; h^{(-)} \oplus h^{(+)}, h^{(-)}_0 \oplus
    h^{(+)}_0) + O(1), \quad \lambda > 0.
    \ee
    Now \eqref{rr4} follows from the combination of \eqref{rr8},
    \eqref{rr7}, and \eqref{rr5} -- \eqref{rr6}.
\end{proof}
The combination of \eqref{m18} and \eqref{rr4} easily yields \eqref{rr3}.\\

 {\bf 4.9.} Finally, we assume that $\alpha > 2$ and prove
Corollary \ref{f3}. If $\lambda < 0$, then \eqref{m31} is an
immediate consequence of Theorem \ref{t2} and the well-known fact
that the 1D Schr\"odinger operator $-\frac{d^2}{dy^2} + w(y)$, $y
\in \re$, has a most a finite number of negative eigenvalues if
$w(y) = o(|y|^{-2})$ as $|y| \to \infty$ (see e.g. \cite{RS}).
Assume $\lambda
> 0$. Combining \eqref{a20}, \eqref{a20a}, \eqref{m15}, and
\eqref{32a} with $p = 1$, we obtain
$$
|{\rm ind}\,(J-\varepsilon + {\rm Re}\,\tau_q(\lambda),
J-\varepsilon)| \leq n_*(1-|\varepsilon|; {\rm
Re}\,\tau_q(\lambda)) \leq
$$
    \bel{m30}
(1-|\varepsilon|)^{-1} \|M_{\phi, 1}(\lambda)\|_1 + {\rm
rank}\,M_{\phi, 2}(\lambda) + (1-|\varepsilon|)^{-1} \|M_{\phi,
3}(\lambda)\|_1, \quad \varepsilon \in (-1,1).
    \ee
Pick $\gamma < (\alpha-1)/2$, $\gamma \leq 1$, $\gamma > 1/2$.
Using \eqref{m4} -- \eqref{m33}, we find that the r.h.s. of
\eqref{m30} remains bounded as $\lambda \downarrow 0$.\\
Putting together  \eqref{a28}, \eqref{a27a} -- \eqref{a28a},
\eqref{m10} -- \eqref{m11}, and \eqref{m30}, we
obtain \eqref{m31} in the case $\lambda > 0$.\\

 {\bf Acknowledgements}.  The authors thank the referee whose valuable remarks
 contributed to the improvement of the article. Georgi Raikov gratefully acknowledges the financial  support by the
Chilean Scientific Foundation {\em Fondecyt} under Grant 1050716.
\\

\bigskip

{\sc Ph. Briet}\\
Centre de Physique Th\'eorique\\
CNRS-Luminy, Case 907\\
13288 Marseille, France\\
E-mail: briet@cpt.univ-mrs.fr\\

{\sc G. Raikov}\\
Facultad de Matem\'aticas\\
Pontificia Universidad Cat\'olica de Chile\\
Av. Vicu\~na Mackenna 4860\\ Santiago de Chile\\
E-mail: graikov@mat.puc.cl\\

{\sc E. Soccorsi}\\
Centre de Physique Th\'eorique\\
CNRS-Luminy, Case 907\\
13288 Marseille, France\\
E-mail: soccorsi@cpt.univ-mrs.fr\\

\end{document}